\documentclass[11pt, a4paper]{article}

\usepackage{a4wide}
\usepackage{xr-hyper}

\usepackage[T1]{fontenc}
\usepackage[utf8]{inputenc}
\usepackage{lmodern}
\usepackage{microtype}

\usepackage{amsmath,amssymb,amsthm,mathtools,bm}
\RequirePackage{amsthm,amsmath,amsfonts,amssymb}
\RequirePackage[authoryear]{natbib}
\RequirePackage[colorlinks,linkcolor=blue,citecolor=blue,urlcolor=blue]{hyperref}
\RequirePackage{graphicx}
\usepackage{enumitem}
\setlist[enumerate]{label=\roman*.}
\usepackage{cleveref}
\usepackage{subcaption}
\usepackage{todonotes} 
\usepackage{csquotes}
\usepackage{algorithm}
\usepackage[noend]{algpseudocode}
\usepackage{booktabs}
\usepackage{multirow}
\usepackage{siunitx}

\theoremstyle{plain}

\newtheorem{theorem}{Theorem}[section]
\newtheorem{lemma}[theorem]{Lemma}
\newtheorem{prop}[theorem]{Proposition}
\newtheorem{cor}[theorem]{Corollary}

\theoremstyle{definition}
\newtheorem{definition}{Definition}
\newtheorem{model}{Model}
\newtheorem{example}{Example}

\newtheorem{remark}{Remark}

\newtheorem{assumption}{Assumption}

\newcommand{\norm}[1]{\left\lVert#1\right\rVert}

\newcommand{\prob}[1]{\mathbb{P}\left(#1\right)}
\newcommand{\abs}[1]{\left\lvert #1 \right\rvert}
\newcommand{\s}[1]{\mathcal{S}#1}

\newcommand{\R}{\mathbb{R}}

\newcommand{\param}{\theta}
\newcommand{\Perm}{P}

\DeclareMathOperator{\sign}{sign}
\DeclareMathOperator{\soft}{tsf} 

\renewcommand{\hat}{\widehat}

\usepackage{graphicx}
\usepackage{booktabs}
\usepackage{caption}

\usepackage{natbib}

\usepackage{enumitem}

\title{Inferring Cooperativity From Pooled Measurements}

\author{Robin Requadt{$^*$}
\and Housen Li\thanks{ Institute of Mathematical Statistics, Georg-August-Universit\"at G\"ottingen. Email: \href{mailto:robin.requadt@uni-goettingen.de} {robin.requadt@uni-goettingen.de} and \href{mailto:housen.li@mathematik.uni-goettingen.de} {housen.li@mathematik.uni-goettingen.de}. } } 

\date{\today}

\begin{document}

\maketitle

\begin{abstract}
In many modern experiments, latent interactions drive multicomponent stochastic systems, yet the data are available only as pooled measurements that obscure these dependencies. Whether such interactions can be identified and inferred from aggregate signals remains largely unexplored. Motivated by multi-channel electrophysiological recordings, we address this problem by introducing sum-dependent Markov chains, a class of finite-state continuous-time multivariate Markov processes whose transition rates encode interactions through the aggregate state. Under natural structural conditions, we establish identifiability of the latent dynamic parameters from the aggregate process. We define a cooperativity index that distinguishes positive cooperativity, negative cooperativity and independence, and construct its consistent estimators. For discretely and noisily observed pooled data, we develop likelihood-based inference through a hidden Markov model, address the associated embedding problem, and prove consistency and asymptotic normality. We further propose a stepdown test for cooperativity with asymptotic size control and power guarantees. Simulations and real-data analyses, demonstrate the scope and effectiveness of the methodology.
\end{abstract}

\noindent\textit{Keywords:} Hidden Markov model; Identifiability; 
Lumping property; Multiple testing.

\section{Introduction}\label{s:intro}

Inferring dependence in multivariate stochastic processes is a central problem in statistics, with applications in econometrics, neuroscience, and systems biology. Most existing methods assume that the coordinate processes are directly or indirectly observable, allowing dependence to be studied through graphical models, spectral methods or likelihood-based procedures (e.g.\ \citealp{Dalhaus2000,Banbura2010,Eichler2012}). In many modern experiments, however,  tracking individual coordinates is practically impossible. Instead, the data consist only of pooled measurements.  This many-to-one observation mechanism may conceal the latent interaction structure and raises the following basic question:
\begin{enumerate}[label={Q\arabic*}, ref={Q\arabic*}, series=thesisquestions]
\item\label{question}
\emph{To what extent can dependence among latent coordinates be identified and inferred from pooled observations of a multivariate time series?}
\end{enumerate}
This paper addresses \ref{question} for continuous-time multivariate Markov processes with finite state spaces. The Markovian framework is both mathematically tractable and motivated by applications in which the intrinsic dynamics evolve in continuous time.

Our motivating application is the analysis of multiple ion channels. Ion channels \citep{hille2001} are membrane proteins that switch between conducting and non-conducting conformal states and play a central role in cellular signaling, muscle contraction, secretion and cardiac rhythm generation. Classical modeling paradigms treat ion channels as independent continuous-time Markov processes. Combined with voltage-clamp recordings, this framework has enabled inference on single-channel kinetics even when several channels contribute to the measured current \citep{neher1976,sakmann2013single}. 

Recent experimental and structural evidence suggests, however, that ion channels may interact through spatial clustering, conformational coupling, protein--protein interactions, or other molecular mechanisms \citep{Moreno2016,clatot2017,Sato2019,Pfeiffer2020,Mcguire2022}. Such cooperativity can induce synchronized opening and closing, suppression of simultaneous openings, or alternating activity patterns.  Identifying and quantifying such interactions is important for understanding mechanisms of signal transduction, cellular regulation and pathological dysfunction. Statistically, the difficulty is that multi-channel recordings do not reveal the individual channel states. The observed current is a pooled signal, typically the total conductance of all open channels. Distinct latent interactions may produce similar aggregate behavior (e.g.\ open probabilities or dwell times), making it unclear whether cooperativity is identifiable from the data.

To address this problem, we introduce \emph{sum-dependent Markov chains} (SDMCs), a class of finite-state continuous-time multivariate Markov processes that explicitly encode dependence through interactions among coordinates. This model provides a principled statistical framework for latent interactions while retaining tractability for identifiability, estimation and testing from pooled observations. Our main contributions are as follows.

First, we provide a structural characterization of binary-state SDMCs and establish identifiability of the latent transition parameters from the aggregate (sum) process. This gives a positive answer to \ref{question} in a broad Markovian framework.

Second, we introduce a \emph{cooperativity index} that quantifies both the strength and direction of interaction, distinguishing positive, negative and null cooperativity. 
We also define a plug-in estimator of this index and prove its consistency.

Third, we develop likelihood-based inference for SDMCs from discretely and noisily observed pooled data in a hidden Markov model. We address the embedding problem and establish consistency and asymptotic normality of the proposed estimators. We further propose a stepdown test for cooperativity and prove its asymptotic control of size and power. Applications to voltage-clamp recordings reveal independence for gramicidin channels and negative cooperativity for ryanodine receptor channels at asymptotic significance level~\(1\%\).

Fourth, we extend SDMCs from binary states to general finite state spaces. We introduce an \emph{informativity condition,} closely linked to Diophantine equations, under which identifiability is preserved, and develop a robust variant to account for estimation uncertainty.

Several strands of literature are related to our work (cf.\ \Cref{ss:ion}). Early models for cooperativity of ion channels incorporate interactions through \emph{ad hoc} modifications of independent continuous-time Markov dynamics \citep{Keleshian1994,Ball1997,BallYeo2000}. \citet{CHUNG} propose mixture models interpolating between independent and fully coupled dynamics. Such approaches are difficult to generalize to complex interaction structures. More recently, \citet{VanEt24} introduce a discrete-time vector-norm-dependent Markov chain accommodating both positive and negative cooperativity, see also \citet{Requadt2025} for its robust extension. However, this model has identifiability issues in even dimensions and does not naturally extend to multi-state settings. Moreover, as a discrete-time formulation, it operates at the observation scale and may fail to capture intrinsic continuous-time interactions. Information-theoretic metrics have also been applied to open probabilities and dwell-time distributions of pooled multi-channel recordings to quantify inter-channel cooperativity \citep{wawrzkiewicz2026information}. In contrast, under the SDMCs proposed here, neither open probabilities nor dwell-times are generally informative about cooperativity (\Cref{comparison_cooperativity}). More broadly, our work is related to structure learning in graphical models \citep{Drton2017}, but differs in that the dependence structure concerns unobserved coordinate processes and must be inferred solely from pooled data.

The  remainder of the paper is organized as follows. \Cref{sec1} introduces SDMCs, gives their structural characterization, and studies identifiability, irreducibility and reversibility. \Cref{cooperativity} defines the cooperativity index and establishes properties of its empirical counterpart. \Cref{s:estimation} develops likelihood-based estimation and testing under the hidden Markov model formulation. Numerical experiments and real data analyses are presented in \Cref{section8}. Extensions to general finite state spaces, technical proofs and auxiliary results are deferred to the supplementary material. 
An \textsf{R} package \texttt{SDMC} implementing the proposed methodology is publicly available on GitLab (\url{https://gitlab.gwdg.de/requadt/sdmc}).

\textbf{Notation.} For a vector \(x \in \mathbb{R}^L\), regarded as a \emph{row} vector by default, let \(x_j\) denote its \(j\)-th component and define its Hamming weight by
\(
\|x\|_0 := \lvert\{j : x_j \neq 0\}\rvert.
\)
Let \(\mathcal{S}:\mathbb{R}^L \to \mathbb{R}\) denote the summation operator,
\(
\mathcal{S}x := \sum_{j=1}^L x_j .
\)
For a matrix \(M\), \(M_{I,J}\) denotes the submatrix with rows indexed by \(I\) and columns indexed by \(J\). We write \(\mathbb{I}(A)\) for the indicator of an event \(A\), and use \(\xrightarrow{D}\) and \(\xrightarrow{\mathbb{P}}\) for convergence in distribution and probability, respectively.

\section{Coupled Markov models}\label{sec1}

Throughout, unless stated otherwise, $(X_t)_{t\ge 0}$ denotes a time-homogeneous continuous-time Markov chain (see Supplement~\ref{sec_ct} for background), and its state space is
\(\mathcal{X}=\mathcal{A}_2^L\) with \( \mathcal{A}_2=\{0,1\}.
\)
Extensions to general finite state spaces are carefully treated in Supplement~\ref{section7}.

\subsection{Sum-dependent Markov chains}
We introduce a new class of coupled continuous-time Markov models.

\begin{definition}[Sum-dependent Markov chain, SDMC]\label{def_rate}
A continuous-time Markov chain $(X_t)_{t\ge 0}$ is a \emph{sum-dependent Markov chain (SDMC)} if its rate matrix $Q=(q_{x,y})_{x,y\in\mathcal{X}}$ satisfies
\begin{equation}
q_{x,y}=
\begin{cases}
\lambda_{\s{x}}\mathbb{I}(x_i=0)+\mu_{\s{x}}\mathbb{I}(x_i=1),
& \text{if } \|x-y\|_0=1 \text{ with } i \text{ such that } x_i\neq y_i,\\
0, & \text{if } \|x-y\|_0>1,\\
-(L-\s{x})\lambda_{\s{x}}-\s{x}\mu_{\s{x}}, & \text{if } x=y,
\end{cases}
\label{def_rate_eq}
\end{equation}
where $\lambda_0,\dots,\lambda_{L-1},\mu_1,\dots,\mu_L\in[0,\infty)$ are model parameters.
\end{definition}

The matrix $Q$ in \eqref{def_rate_eq} is a valid rate matrix, since $\sum_{y\in\mathcal{X}}q_{x,y}=0$ for all $x\in\mathcal{X}$. Although Definition~\ref{def_rate} gives an explicit parameterization, the structure of the model is more transparent through the following two modeling assumptions (see Theorem~\ref{Theorem1}).

\begin{assumption}[Permutation invariance]\label[assumption]{as1}
For every permutation matrix ${\Perm}\in \{0,1\}^{L \times L}$ and all $x,y\in\mathcal{X}$, it holds that 
\(
\prob{X_{t+\delta}=y\;\middle\vert\;X_t=x}
=
\prob{X_{t+\delta}={\Perm}y\;\middle\vert\;X_t={\Perm}x}\) for all \(\delta>0 \text{ and } t\ge 0,
\)
or equivalently, $q_{x,y}=q_{{\Perm}x,{\Perm}y}$.
\end{assumption}

\begin{assumption}[Conditional independence at infinitesimal times]\label[assumption]{as2}
As $\delta\searrow 0$, it holds that 
\(
\prob{X_{t+\delta}=y \;\middle\vert\;X_t=x}
=
\prod_{i=1}^L \prob{X_{i,t+\delta}=y_i \;\middle\vert\; X_t=x}
+ o(\delta)\) for all \(x,y\in\mathcal{X}.\)
\end{assumption}

Assumption~\ref{as1} imposes exchangeability of the coordinates, thus excluding distinguished or leading coordinates. The equivalence with the rate-matrix condition follows from the standard relation between transition probabilities and infinitesimal rates (\Cref{perm_inv_char}). Assumption~\ref{as2} postulates that dependence between coordinates may emerge only over non-infinitesimal time intervals, which rules out instantaneous joint changes, as formulated below. 

\begin{definition}[Sparse transition property]\label{d:stp}
A continuous-time Markov chain $(X_t)_{t\ge 0}$ with rate matrix $Q=(q_{x,y})_{x,y\in\mathcal{X}}$ satisfies the \emph{sparse transition property} if, for all $x,y\in\mathcal{X}$ with $\|x-y\|_0>1$, it holds that 
$\prob{X_{t+\delta}=y \;\middle\vert\;X_t=x}=o(\delta)$ as $\delta\searrow 0$,
or equivalently, $q_{x,y}=0$.
\end{definition}

\begin{lemma}\label{no_quant}
Let $(X_t)_{t\ge 0}$ be a continuous-time Markov chain on $\mathcal{X}=\mathcal{A}_2^L$.
\begin{enumerate}
\item\label{i:equiv:ind}
Assumption~\ref{as2} holds if and only if $(X_t)_{t\ge 0}$ satisfies the sparse transition property.
\item\label{i:imp:sparse}
If the sparse transition property holds, then, for any $x,y\in\mathcal{X}$ with $\|x-y\|_0=1$,
\[
\lim_{\delta\searrow 0}
\frac{\prob{X_{t+\delta}=y \;\middle\vert\;X_t=x}}{\delta}
=
\lim_{\delta\searrow 0}
\frac{\prob{X_{i,t+\delta}=y_i \;\middle\vert\; X_t=x}}{\delta},
\]
where $i\in\{1,\dots,L\}$ is the unique coordinate for which $x_i\neq y_i$.
\end{enumerate}
\end{lemma}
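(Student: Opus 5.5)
The whole argument rests on the first-order expansion of the transition kernel of a finite-state chain: with $p_\delta(x,y):=\prob{X_{t+\delta}=y\mid X_t=x}$, we have $p_\delta(x,y)=\mathbb{I}(x=y)+q_{x,y}\delta+o(\delta)$ (Supplement~\ref{sec_ct}). Summing over the coordinate marginals gives
\[
p^{(i)}_\delta(x,b):=\prob{X_{i,t+\delta}=b\mid X_t=x}=\sum_{z:\,z_i=b}p_\delta(x,z).
\]
Hence, for $b\neq x_i$,
\[
p^{(i)}_\delta(x,b)=\delta\sum_{z:\,z_i=b}q_{x,z}+o(\delta)=\delta\,r_i(x)+o(\delta),
\]
where $r_i(x):=\sum_{z:z_i\neq x_i}q_{x,z}\ge 0$. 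Also $p^{(i)}_\delta(x,x_i)=1-\delta r_i(x)+o(\delta)$.

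\textbf{Part \ref{i:equiv:ind}.} Fix $x\neq y$ and let $D=\{i:x_i\neq y_i\}$, with $k=|D|\ge1$. The product in Assumption~\ref{as2} is
\[
\prod_{i\in D}\bigl(\delta r_i(x)+o(\delta)\bigr)\prod_{i\notin D}\bigl(1-\delta r_i(x)+o(\delta)\bigr).
\]
If $k\ge2$, this product is $O(\delta^2)=o(\delta)$. If $k=1$, with $D=\{i\}$, it equals $\delta r_i(x)+o(\delta)$. When $x=y$, the product is $1-\delta\sum_i r_i(x)+o(\delta)$.

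Suppose Assumption~\ref{as2} holds. Then for $\|x-y\|_0\ge2$ we get $p_\delta(x,y)=o(\delta)$, which is the sparse transition property.

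Conversely, assume $q_{x,z}=0$ whenever $\|x-z\|_0>1$. Then $r_i(x)$ reduces to the single term $q_{x,x^{(i)}}$, where $x^{(i)}$ is $x$ with coordinate $i$ flipped. We now check the three cases of Assumption~\ref{as2}:
\begin{itemize}
\item If $k\ge2$: $p_\delta(x,y)=q_{x,y}\delta+o(\delta)=o(\delta)$, which matches the product.
\item If $k=1$: $p_\delta(x,y)=q_{x,x^{(i)}}\delta+o(\delta)=\delta r_i(x)+o(\delta)$, which matches the product.
\item If $x=y$: $p_\delta(x,x)=1+q_{x,x}\delta+o(\delta)$, and $q_{x,x}=-\sum_{z\neq x}q_{x,z}=-\sum_i r_i(x)$, which again matches.
\end{itemize}

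A small remark is needed on the equivalence between ``$p_\delta(x,y)=o(\delta)$'' and ``$q_{x,y}=0$''. This follows directly from the expansion, since $q_{x,y}=\lim_{\delta\searrow0}p_\delta(x,y)/\delta$ for $x\neq y$.

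\textbf{Part \ref{i:imp:sparse}.} Under sparsity, the limit on the left equals $q_{x,y}$. The limit on the right equals $r_i(x)=q_{x,x^{(i)}}=q_{x,y}$, because all other $z$ with $z_i=y_i$ differ from $x$ in at least two coordinates and so carry zero rate.

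The only point that needs care is the uniformity of the $o(\delta)$ terms when products and sums are taken. This is harmless because the state space is finite and the number of factors is fixed at $L$. So I do not expect a genuine obstacle; the key step is simply the marginal-rate identity for $r_i(x)$.
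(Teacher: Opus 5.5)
Your proof is correct and follows essentially the same route as the paper. Both arguments rest on the same ingredients:
\begin{itemize}
\item the first-order expansion of the transition probabilities;
\item the bound $O(\delta)\cdot O(\delta)=o(\delta)$ for the product of marginals when at least two coordinates change;
\item a case split $\|x-y\|_0>1$, $\|x-y\|_0=1$, $x=y$ for the converse direction, with the second part of the lemma following because every other $z$ with $z_i=y_i$ differs from $x$ in at least two coordinates and so carries zero rate.
\end{itemize}
The only difference is presentation: you compute the marginal first-order terms once via $r_i(x)$, whereas the paper works case by case directly with the probabilities.
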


The next theorem characterizes SDMCs and identifies its intrinsic parameters $\lambda_s$ and $\mu_s$ as the transition rates of an individual coordinate, conditional on the current number of coordinates in state \enquote{$1$} being $s$.

\begin{theorem}[Characterization]\label{Theorem1}
Let $(X_t)_{t\ge 0}$ be a continuous-time Markov chain on $\mathcal{X}=\mathcal{A}_2^L$.
\begin{enumerate}
\item\label{th1:i}
If Assumptions~\ref{as1} and~\ref{as2} hold, then $(X_t)_{t\ge0}$ is an SDMC (Definition~\ref{def_rate}) with parameters $\lambda_0,\dots,\lambda_{L-1},\mu_1,\dots,\mu_L\in[0,\infty)$ defined by
\begin{equation}\label{param:12}
\begin{aligned}
\lambda_s &=
\lim_{\delta\searrow 0}
\frac{\prob{X_{i,t+\delta}=1 \;\middle\vert\;X_{i,t}=0,\ \s{X_t}=s}}{\delta},
\\
\mu_s &=
\lim_{\delta\searrow 0}
\frac{\prob{X_{i,t+\delta}=0 \;\middle\vert\;X_{i,t}=1,\ \s{X_t}=s}}{\delta}.
\end{aligned}
\end{equation}
These limits are independent of $t\ge 0$ and $i\in\{1,\dots,L\}$.
\item\label{th1:ii}
Conversely, if $(X_t)_{t\ge 0}$ is an SDMC with parameters $\lambda_0,\dots,\lambda_{L-1},\mu_1,\dots,\mu_L\in[0,\infty)$, then Assumptions~\ref{as1}--\ref{as2} hold and relations in \eqref{param:12} are satisfied.
\end{enumerate}
\end{theorem}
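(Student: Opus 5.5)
By Lemma~\ref{no_quant}, Assumption~\ref{as2} is equivalent to the sparse transition property, so the proof reduces to rate-matrix computations.

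\emph{Part~\ref{th1:i}.} We determine $Q$ entry by entry.
\begin{itemize}
\item For $\|x-y\|_0>1$, the sparse transition property gives $q_{x,y}=0$.
\item For $\|x-y\|_0=1$ with flipping coordinate $i$, Assumption~\ref{as1} says that $q_{x,y}$ depends only on the orbit of the pair $(x,y)$ under simultaneous coordinate permutations. This orbit is determined by the triple $(\s{x},x_i,y_i)$. Indeed, any two pairs with the same $\s{x}$ and the same flip direction can be mapped into each other: choose a permutation sending the flipped coordinate to the flipped coordinate and the remaining ones of $x$ to the remaining ones of $x'$. This is possible because the counts of ones outside the flipped coordinate agree. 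We therefore define $\lambda_s:=q_{x,y}$ for $x_i=0,\ \s{x}=s$, and $\mu_s:=q_{x,y}$ for $x_i=1,\ \s{x}=s$. These are nonnegative because off-diagonal rates are. Note that $\lambda_L$ and $\mu_0$ never occur.
\item The diagonal entries follow from the zero row sums. From $x$ there are $L-\s{x}$ up-flips and $\s{x}$ down-flips, which gives the formula in \eqref{def_rate_eq}.
\end{itemize}
Hence $X$ is an SDMC.

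\emph{Identification of the parameters in \eqref{param:12}.} This is the step needing care, since the conditioning event $\{X_{i,t}=0,\ \s{X_t}=s\}$ is not a single state. Write
\[
\prob{X_{i,t+\delta}=1 \mid X_{i,t}=0,\ \s{X_t}=s}=\sum_{x} w_x(t)\,\prob{X_{i,t+\delta}=1 \mid X_t=x},
\]
where the sum runs over the finitely many $x$ with $x_i=0$ and $\s{x}=s$, and $w_x(t)=\prob{X_t=x\mid X_{i,t}=0,\ \s{X_t}=s}$. We assume the conditioning event has positive probability, as is implicit in the statement. The weights $w_x(t)$ are nonnegative, sum to one and do not depend on $\delta$.

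For each such $x$, we have $\prob{X_{i,t+\delta}=1\mid X_t=x}=\sum_{y:\,y_i=1}P_\delta(x,y)$. By $P_\delta=I+\delta Q+o(\delta)$ this equals $\sum_{y:\,y_i=1}q_{x,y}\,\delta+o(\delta)$. Only $y=x+e_i$ has a nonzero rate among these $y$ with $\|x-y\|_0=1$; all others have $\|x-y\|_0>1$ and rate zero. So the expression is $\lambda_s\delta+o(\delta)$, uniformly over the finitely many $x$. (Equivalently, this is Lemma~\ref{no_quant}\ref{i:imp:sparse} combined with the sparse transition property.)

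Dividing by $\delta$ and averaging with the weights $w_x(t)$, which sum to one, gives the limit $\lambda_s$. This limit is independent of $t$ and $i$, since the weights drop out. The argument for $\mu_s$ is identical.

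\emph{Part~\ref{th1:ii}.} Conversely, for an SDMC, $q_{x,y}$ given by \eqref{def_rate_eq} depends on $(x,y)$ only through $\|x-y\|_0$, $\s{x}$ and the flipped value. All of these are invariant under $(x,y)\mapsto(\Perm x,\Perm y)$, so Assumption~\ref{as1} holds via its rate characterization (\Cref{perm_inv_char}). The sparse transition property holds by definition, so Assumption~\ref{as2} holds by Lemma~\ref{no_quant}\ref{i:equiv:ind}. The relations \eqref{param:12} then follow from the same mixture computation as in Part~\ref{th1:i}, which used only the form of $Q$.

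The main obstacle I expect is the orbit argument in Part~\ref{th1:i}, namely that permutation invariance forces $q_{x,y}$ to depend only on $(\s{x},x_i)$. The other potential obstacle is handling the conditioning on the non-atomic event; the uniform $o(\delta)$ bound over a finite state space resolves it.
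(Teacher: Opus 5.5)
Your proof is correct. Your Part~\ref{th1:ii} matches the paper's, but your Part~\ref{th1:i} takes a different route.

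\textbf{How the paper argues Part~\ref{th1:i}.} The paper does not argue with $q_{x,y}=q_{\Perm x,\Perm y}$ at the rate level. It keeps Assumption~\ref{as2} in product form, so that
\[
\frac{\prob{X_{t+\delta}=y\mid X_t=x}}{\delta}=\frac{1}{\delta}\prod_{\ell=1}^L\prob{X_{\ell,t+\delta}=y_\ell\mid X_t=x}+o(1).
\]
It then invokes Lemma~\ref{Lemma_a1}, which is proved through Lemma~\ref{LemmaE4} and states that permutation invariance gives $\prob{X_{i,t+\delta}=a\mid X_t=x}=\prob{X_{i,t+\delta}=a\mid X_{i,t}=x_i,\s{X_t}=\s{x}}$ for every $\delta$. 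This identifies each off-diagonal rate directly with a conditional limit of the form \eqref{param:12}. Independence of $t$ follows from time homogeneity. Independence of the coordinate follows from a transposition of $i$ and $j$ together with Lemma~\ref{no_quant}\ref{i:imp:sparse}.

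\textbf{How your route differs.} You first reduce Assumption~\ref{as2} to the sparse transition property. You then obtain the SDMC form purely at the rate level through your orbit argument, which is valid because the number of ones outside the flipped coordinate agrees. Only afterwards do you match $\lambda_s,\mu_s$ with \eqref{param:12}, by writing the non-atomic conditioning event as a mixture over states and using only $P_\delta=I+\delta Q+o(\delta)$.

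\textbf{What your route buys.} It needs only first-order information, not the all-$\delta$ identity of Lemma~\ref{Lemma_a1}. Independence of $t$ and $i$ is immediate, since the weights $w_x(t)$ drop out. It is also slightly more robust: states in the conditioning class with zero probability simply receive zero weight. Lemma~\ref{LemmaE4}, as used in the paper, formally requires $\prob{X_t=z}>0$ for every such $z$.

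\textbf{What the paper's route buys.} Lemma~\ref{Lemma_a1} is a stronger structural fact that the paper reuses later, e.g.\ in the proof of Lemma~\ref{le:mon:prob}. The argument also displays $\lambda_s,\mu_s$ as single-coordinate rates arising from the product form of Assumption~\ref{as2}.
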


Irreducibility of an SDMC is characterized by positivity of all transition-rate parameters.

\begin{lemma}[Irreducibility]\label{l:ir:sd}
Let $(X_t)_{t\ge0}$ be an SDMC. Then $(X_t)_{t\ge 0}$ is irreducible if and only if all parameters $\lambda_0,\ldots,\lambda_{L-1}$ and $\mu_1,\ldots,\mu_L$ are nonzero, and hence strictly positive.
\end{lemma}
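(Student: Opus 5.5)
We prove the two directions separately, using the standard fact that a finite continuous-time Markov chain is irreducible if and only if the directed graph on $\mathcal{X}$ with an edge $x\to y$ whenever $q_{x,y}>0$ ($x\neq y$) is strongly connected. By Definition~\ref{def_rate}, the only edges are between states at Hamming distance one. Such an edge $x\to y$, with $y$ obtained by switching a coordinate $i$, carries rate $\lambda_{\s{x}}$ if $x_i=0$ and $\mu_{\s{x}}$ if $x_i=1$.

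\emph{Sufficiency.} Assume all $\lambda_s,\mu_s>0$ and take any $x,y\in\mathcal{X}$. Build a path from $x$ to $y$ in two stages. First switch, one at a time, every coordinate with $x_i=1$ and $y_i=0$; each step $1\to 0$ occurs from a state with sum $s\ge 1$ and so has rate $\mu_s>0$. Then switch every coordinate with $x_i=0$ and $y_i=1$; each step $0\to1$ occurs from a state with sum $s\le L-1$ and so has rate $\lambda_s>0$. Every step along the path has positive rate, so $y$ is reachable from $x$. Hence the chain is irreducible.

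\emph{Necessity.} Suppose some parameter vanishes, say $\lambda_{s_0}=0$ with $0\le s_0\le L-1$. Then from any state with sum $s_0$, every outgoing edge has rate $\mu_{s_0}$ and decreases the sum. Since all transitions change the sum by exactly $\pm1$, the sum process cannot pass from $s_0$ to $s_0+1$. Therefore the set $\{x:\s{x}\le s_0\}$ is closed. This set is nonempty and proper, since it excludes $(1,\dots,1)$. So a state with sum greater than $s_0$, for example $(1,\dots,1)$, cannot be reached from $0$, and the chain is not irreducible. Symmetrically, if $\mu_{s_0}=0$ with $1\le s_0\le L$, the set $\{x:\s{x}\ge s_0\}$ is closed and proper, because it excludes the zero vector. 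Finally, "nonzero" is the same as "strictly positive" because the parameters lie in $[0,\infty)$.

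The only delicate point is the closed-class argument in the necessity direction. It relies on the sparse transition property: since $q_{x,y}=0$ whenever $\|x-y\|_0>1$, the sum moves only by unit steps and cannot jump over the level $s_0$.
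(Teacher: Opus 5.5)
Your proof is correct and follows the same route as the paper. Both arguments reduce irreducibility to the existence of positive-rate paths between any two states (Norris, Theorem~3.2.1) and read this off the rate structure in Definition~\ref{def_rate}; the paper merely asserts that this holds ``clearly'' if and only if all parameters are positive, whereas you supply the missing details (the decrease-then-increase path for sufficiency, and the closed sets $\{x:\mathcal{S}x\le s_0\}$ and $\{x:\mathcal{S}x\ge s_0\}$ for necessity), and these details are sound.
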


An irreducible SDMC is reversible with respect to its invariant distribution.

\begin{prop}[Reversibility]\label{reversible}
Let $(X_t)_{t\ge 0}$ be an irreducible SDMC. Then $(X_t)_{t\ge 0}$ is reversible, equivalently in detailed balance, with respect to its unique invariant distribution
\begin{equation*}
\pi^*=(\pi_x^*)_{x\in\mathcal{A}_2^L}\qquad \text{with}\quad \pi_x^*=
\begin{cases}
\frac{\prod_{j=0}^{\s{x}-1}\frac{\lambda_j}{\mu_{j+1}}}{1+\sum_{i=1}^L {L\choose i}\prod_{j=0}^{i-1}\frac{\lambda_j}{\mu_{j+1}}},
&\text{if } x\neq (0,\dots,0),\\[2.2ex]
\left(1+\sum_{i=1}^L {L\choose i}\prod_{j=0}^{i-1}\frac{\lambda_j}{\mu_{j+1}}\right)^{-1},
&\text{if } x=(0,\dots,0).
\end{cases}
\end{equation*}
\end{prop}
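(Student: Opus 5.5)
The plan is to verify the detailed balance equations $\pi_x^* q_{x,y} = \pi_y^* q_{y,x}$ for all $x\neq y$ directly, and then conclude. Any probability vector satisfying detailed balance with respect to $Q$ is invariant, since summing over $x$ gives $(\pi^* Q)_y = \pi_y^*\sum_x q_{y,x} = 0$. By Lemma~\ref{l:ir:sd}, irreducibility gives $\lambda_j,\mu_j>0$, and a finite irreducible chain has a unique invariant distribution. Hence $\pi^*$ is that distribution and the chain is reversible with respect to it.

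First I will write the candidate in unnormalized form as $w_x := \prod_{j=0}^{\s{x}-1} \lambda_j/\mu_{j+1}$, with the empty product equal to $1$ when $x=(0,\dots,0)$. All the $w_x$ are strictly positive and finite because the parameters are positive. Since $w_x$ depends only on $\s{x}$ and there are $\binom{L}{i}$ states with $\s{x}=i$, the normalizing constant is $\sum_x w_x = 1+\sum_{i=1}^L \binom{L}{i}\prod_{j=0}^{i-1}\lambda_j/\mu_{j+1}$. Dividing by this constant gives exactly the two cases in the displayed formula.

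Next I will check detailed balance pair by pair. If $\|x-y\|_0>1$, then both $q_{x,y}$ and $q_{y,x}$ vanish by \eqref{def_rate_eq}, so the equation is trivial. If $\|x-y\|_0=1$, I may assume without loss of generality that $x_i=0$ and $y_i=1$ at the differing coordinate $i$, so $\s{y}=\s{x}+1=:s+1$. Then \eqref{def_rate_eq} gives $q_{x,y}=\lambda_s$ and $q_{y,x}=\mu_{s+1}$. Also $w_y = w_x\,\lambda_s/\mu_{s+1}$ by the product form, so $w_x\lambda_s = w_y\mu_{s+1}$, which is exactly detailed balance.

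There is no substantial obstacle in this argument. The only care needed is the bookkeeping for the all-zero state, which is where the empty-product convention enters, and making sure that uniqueness comes from irreducibility via Lemma~\ref{l:ir:sd}.
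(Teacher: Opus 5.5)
Your proposal is correct and follows essentially the same route as the paper: it checks detailed balance only for pairs with $\|x-y\|_0=1$, which the paper organizes via the neighbor sets $A_x^{\pm}$, and then obtains invariance and uniqueness from irreducibility, where the paper cites Norris (Theorem~3.7.3 and Lemma~3.7.2) for the facts you prove or invoke directly. Your explicit normalization count via $\binom{L}{i}$ and the identity $w_y = w_x\lambda_s/\mu_{s+1}$ supply the verification the paper dismisses as ``easy to check''.
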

Thus $\pi^*$ is permutation invariant: $\pi_x^*=\pi_{{\Perm}x}^*$ for all permutation matrices ${\Perm}$ and $x\in\mathcal{X}$.

\subsection{Sum process}

Let $(X_t)_{t\ge 0}$ be an SDMC and define the \emph{sum process} $(S_t)_{t\ge 0}$ by
\(
S_t=\s{X_t}=\sum_{i=1}^L X_{i,t},\) for \(X_t=(X_{1,t},\ldots,X_{L,t}).
\)
Then $(S_t)_{t\ge 0}$ is a continuous-time Markov chain (\Cref{Lemma3.9} and \Cref{prop1}). Its transition probabilities satisfy
\[
\prob{S_{t+\delta}=s'\;\middle\vert\;S_t=s}
=
\prob{S_{t+\delta}=s'\;\middle\vert\;X_t=x}
=
\sum_{y:\s{y}=s'}\prob{X_{t+\delta}=y\;\middle\vert\;X_t=x},
\]
for any $x$ with $\s{x}=s$. For the inference procedures developed later, a key question is whether the parameter vector $(\lambda_0,\dots,\lambda_{L-1},\mu_1,\dots,\mu_L)$ of $(X_t)_{t\ge0}$ is identifiable from the law of $(S_t)_{t\ge 0}$. The following theorem answers this question affirmatively.

\begin{theorem}\label{Theorem2}
Let $(X_t)_{t\ge0}$ be an SDMC with parameters $\lambda_0,\dots,\lambda_{L-1}$ and $\mu_1,\dots,\mu_L$. Then:
\begin{enumerate}
\item\label{i:th2:1}
The sum process $S_t=\sum_{i=1}^L X_{i,t}$, $t\ge 0$, is a continuous-time Markov chain.
\item\label{i:th2:2}
The rate matrix $R=(r_{s,s'})_{s,s'\in\{0,\dots,L\}}$ of $(S_t)_{t\ge 0}$ is
\begin{equation}\label{Theorem2_eq}
r_{s,s'}=
\begin{cases}
(L-s)\lambda_s\,\mathbb{I}(s' > s) + s\mu_s\,\mathbb{I}(s' < s), & \text{if } \lvert s'-s\rvert =1,\\
-(L-s)\lambda_s\,\mathbb{I}(s < L)-s\mu_s\,\mathbb{I}(s>0), & \text{if } s'=s,\\
0, & \text{otherwise}.
\end{cases}
\end{equation}
\item\label{i:th2:3}
The parameter vector $(\lambda_0,\dots,\lambda_{L-1},\mu_1,\dots,\mu_L)$ is uniquely determined by $R$ in \eqref{Theorem2_eq}.
\end{enumerate}
\end{theorem}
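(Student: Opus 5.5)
The plan is to prove all three parts through the strong lumpability (Dynkin) criterion for the partition of $\mathcal{X}=\mathcal{A}_2^L$ into the level sets $B_s=\{x:\s{x}=s\}$, $s=0,\dots,L$. Recall the criterion: a continuous-time Markov chain with rate matrix $Q$ is lumpable with respect to a partition $\{B_s\}$, for every initial distribution, if for all blocks $B_s,B_{s'}$ with $s\neq s'$ the aggregated rate $\sum_{y\in B_{s'}}q_{x,y}$ does not depend on the choice of $x\in B_s$. When this holds, the lumped process is itself a continuous-time Markov chain, and its rate matrix has these aggregated rates as entries. I expect the paper's \Cref{Lemma3.9} and \Cref{prop1}, which the text already cites for the Markov property of $(S_t)$, to supply exactly this criterion. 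Failing that, I would prove it directly: the criterion gives $Q\Lambda=\Lambda R$ for the $|\mathcal{X}|\times(L+1)$ block-indicator matrix $\Lambda$. Hence $e^{\delta Q}\Lambda=\Lambda e^{\delta R}$, and so $\prob{S_{t+\delta}=s'\mid X_t=x}=(e^{\delta R})_{s,s'}$ for every $x\in B_s$. This depends on $x$ only through $s$, which is exactly the displayed identity preceding the theorem, and it yields the Markov property of $(S_t)$ by conditioning on the full past of $X$.

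\textbf{Verifying the criterion.} Fix $x\in B_s$ and a target level $s'\neq s$. By Definition~\ref{def_rate}, $q_{x,y}=0$ unless $\|x-y\|_0=1$, so only $s'=s\pm1$ can receive positive rate. For $s'=s+1$, the neighbours $y$ of $x$ in $B_{s+1}$ are obtained by flipping one of the $L-s$ coordinates that equal $0$. Each such flip has rate $\lambda_s$, so the aggregated rate is $(L-s)\lambda_s$. For $s'=s-1$, the neighbours are obtained by flipping one of the $s$ coordinates that equal $1$, each at rate $\mu_s$, giving $s\mu_s$. Both quantities depend on $x$ only through $s$, so the criterion holds and part~\ref{i:th2:1} follows. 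The same computation gives the off-diagonal entries in \eqref{Theorem2_eq}. The diagonal entries follow because rows of $R$ sum to zero, and they agree with the diagonal of $Q$ in \eqref{def_rate_eq}. The indicators $\mathbb{I}(s<L)$ and $\mathbb{I}(s>0)$ only reflect that $\lambda_L$ and $\mu_0$ are undefined; the coefficients $L-s$ and $s$ vanish at those boundary values anyway. This proves part~\ref{i:th2:2}.

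\textbf{Identifiability.} For part~\ref{i:th2:3}, I read off $\lambda_s=r_{s,s+1}/(L-s)$ for $s=0,\dots,L-1$ and $\mu_s=r_{s,s-1}/s$ for $s=1,\dots,L$. The divisors are positive throughout these ranges, so the map from parameters to $R$ is injective. It is also worth noting that $R$ is determined by the law of $(S_t)$ through $R=\lim_{\delta\searrow0}(e^{\delta R}-I)/\delta$.

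\textbf{Main obstacle.} The only step needing care is justifying the lumpability criterion, that is, the passage from the row-sum condition to the Markov property of $(S_t)$ for an arbitrary initial law. This is handled by the intertwining relation $e^{\delta Q}\Lambda=\Lambda e^{\delta R}$, which follows from $Q\Lambda=\Lambda R$ by expanding the power series term by term. I will also check that irreducibility is not needed anywhere: the argument allows zero parameters, in which case the corresponding entries of $R$ are simply zero.
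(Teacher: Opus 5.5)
Your proof is correct. Parts~\ref{i:th2:2} and~\ref{i:th2:3} coincide with the paper's: the same count of $L-s$ up-flips at rate $\lambda_s$ and $s$ down-flips at rate $\mu_s$, the diagonal from zero row sums, and the same read-off $\lambda_s=r_{s,s+1}/(L-s)$, $\mu_s=r_{s,s-1}/s$.

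The route to part~\ref{i:th2:1} differs, and your guess about what Proposition~\ref{prop1} supplies is not quite right. The paper does not use the Dynkin row-sum criterion; it obtains lumpability from permutation invariance. It first gets Assumption~\ref{as1} from Theorem~\ref{Theorem1}\ref{th1:ii}, which lifts $q_{x,y}=q_{\Perm x,\Perm y}$ to transition probabilities via the power series of $e^{\delta Q}$ (Lemma~\ref{perm_inv_char}). Proposition~\ref{prop1} then uses that two states with the same sum are permutations of each other, and Lemma~\ref{Lemma3.9} gives the Markov property. Only afterwards is the rate matrix identified, by differentiating $\prob{S_{t+\delta}=j\mid X_t=x}$ at $\delta=0$ through Proposition~\ref{prop2}, which reduces it to $\sum_{y:\s y=j}q_{x,y}$.

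Your intertwining argument, $Q\Lambda=\Lambda R\Rightarrow e^{\delta Q}\Lambda=\Lambda e^{\delta R}$, uses only the aggregated-rate condition and needs no symmetry. It also covers every initial law and zero-probability states without the positivity caveats built into the paper's lumping property. And it proves parts~\ref{i:th2:1} and~\ref{i:th2:2} in one step, because the lumped semigroup is identified directly as $e^{\delta R}$.

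The paper's modular route instead ties the Markov property of the sum to the structural Assumption~\ref{as1}. Its lemmas are then reused unchanged for the $\kappa$-state extension (Theorem~\ref{th:id:kSDMC}), where $L$-informativity guarantees that equal sums mean permuted configurations.
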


Thus $(S_t)_{t\ge 0}$ is a birth--death process with birth rates $(L-s)\lambda_s$ and death rates $s\mu_s$.

\begin{example}[Bivariate SDMCs]
For $L=2$ and $\mathcal{X}=\{(0,0),(0,1),(1,0),(1,1)\}$, with states in dictionary order, the rate matrix of $(X_t)_{t\ge0}$ is
\[
Q=\left(\begin{matrix}
-2\lambda_0 & \lambda_0 & \lambda_0 & 0\\
\mu_1 & -(\mu_1+\lambda_1) & 0 & \lambda_1\\
\mu_1 & 0 & -(\mu_1+\lambda_1) & \lambda_1\\
0 & \mu_2 & \mu_2 & -2\mu_2
\end{matrix}\right).
\]
The sum process $S_t=X_{1,t}+X_{2,t}$ has state space $\{0,1,2\}$ and rate matrix
\[
R=\left(\begin{matrix}
-2\lambda_0 & 2\lambda_0 & 0\\
\mu_1 & -(\mu_1+\lambda_1) & \lambda_1\\
0 & 2\mu_2 & -2\mu_2
\end{matrix}\right).
\]
\end{example}

Reversibility of the SDMC is inherited by the sum process.

\begin{prop}\label[prop]{bd_proc}
Let $(X_t)_{t\ge 0}$ be an SDMC and let $S_t=\mathcal{S}X_t$, $t\ge 0$, be its sum process.
\begin{enumerate}
\item\label{i:bd_proc:1}
Then $(S_t)_{t\ge 0}$ is irreducible if and only if $(X_t)_{t\ge 0}$ is irreducible.
\item\label{i:bd_proc:2}
If $(S_t)_{t\ge 0}$ is irreducible, then it is reversible with respect to its unique invariant distribution $\pi^\star=(\pi^\star_s)_{s\in\{0,\dots,L\}}$ with
\[
\pi_s^\star=
\begin{cases}
\left(1+\sum_{i=1}^L {L\choose i}\prod_{j=0}^{i-1}\frac{\lambda_j}{\mu_{j+1}}\right)^{-1}
{L\choose s}\prod_{j=0}^{s-1}\frac{\lambda_j}{\mu_{j+1}},
& \text{if } s\in\{1,\dots,L\},\\[2.0ex]
\left(1+\sum_{i=1}^L {L\choose i}\prod_{j=0}^{i-1}\frac{\lambda_j}{\mu_{j+1}}\right)^{-1},
& \text{if } s=0.
\end{cases}
\]
\end{enumerate}
\end{prop}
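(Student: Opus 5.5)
For part~\ref{i:bd_proc:1}, I will use the birth--death structure of $(S_t)_{t\ge0}$ from Theorem~\ref{Theorem2}\ref{i:th2:2} together with Lemma~\ref{l:ir:sd}. A finite birth--death chain on $\{0,\dots,L\}$ is irreducible if and only if every birth rate $r_{s,s+1}$ ($s<L$) and every death rate $r_{s,s-1}$ ($s>0$) is strictly positive. Necessity holds because only nearest-neighbour jumps are possible: if $r_{s,s+1}=0$, the set $\{0,\dots,s\}$ is closed, and if $r_{s,s-1}=0$, the set $\{s,\dots,L\}$ is closed. Sufficiency holds because positive nearest-neighbour rates connect any two states.

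By \eqref{Theorem2_eq}, $r_{s,s+1}=(L-s)\lambda_s$ and $r_{s,s-1}=s\mu_s$. The factors $(L-s)$ for $s<L$ and $s$ for $s>0$ are strictly positive, so positivity of all these rates is equivalent to $\lambda_0,\dots,\lambda_{L-1},\mu_1,\dots,\mu_L>0$. By Lemma~\ref{l:ir:sd}, this is in turn equivalent to irreducibility of $(X_t)_{t\ge0}$.

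For part~\ref{i:bd_proc:2}, irreducibility on a finite state space gives a unique invariant distribution. I will therefore exhibit a distribution satisfying detailed balance, which is then automatically invariant, and by uniqueness it equals $\pi^\star$; reversibility follows at once. For a birth--death chain, detailed balance only needs to be checked on the pairs $(s,s+1)$, since all other off-diagonal rates vanish. So I need
\[
\pi_s^\star (L-s)\lambda_s=\pi_{s+1}^\star (s+1)\mu_{s+1},\qquad s=0,\dots,L-1.
\]
Substituting the proposed formula, the ratio $\pi^\star_{s+1}/\pi^\star_s$ equals $\frac{\binom{L}{s+1}}{\binom{L}{s}}\frac{\lambda_s}{\mu_{s+1}}=\frac{L-s}{s+1}\frac{\lambda_s}{\mu_{s+1}}$, which is exactly what the displayed identity requires. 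The case $s=0$ is consistent with the empty-product convention. Normalisation holds because the denominator is precisely $\sum_{s=0}^L\binom{L}{s}\prod_{j<s}\lambda_j/\mu_{j+1}$.

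As a cross-check, $\pi^\star_s=\sum_{x:\s{x}=s}\pi^*_x$, using Proposition~\ref{reversible} and the permutation invariance of $\pi^*$: there are $\binom{L}{s}$ states $x$ with $\s{x}=s$, each of equal mass. This also shows that $\pi^\star$ is the pushforward of $\pi^*$.

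No step is a real obstacle. The only point needing care is the binomial ratio identity in the detailed-balance check, and making sure that the positivity of $\lambda_s,\mu_{s+1}$ (guaranteed by part~\ref{i:bd_proc:1}) makes all the quotients well defined.
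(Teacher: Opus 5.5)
Your proposal is correct and follows essentially the same route as the paper: irreducibility of the sum process is reduced, via the rate matrix in Theorem~\ref{Theorem2} and Lemma~\ref{l:ir:sd}, to positivity of all $\lambda_s,\mu_s$, and $\pi^\star$ is identified by checking the nearest-neighbour detailed balance equations and then invoking uniqueness under irreducibility. Your closed-set argument for necessity and your binomial-ratio computation fill in details that the paper simply calls ``clear''.
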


\subsection{Application to ion channels}\label{ss:ion}

As discussed in the Introduction, SDMCs are motivated in part by cooperative behavior in ion channel gating dynamics. In this context, Assumption~\ref{as1} imposes homogeneous gating dynamics across channels and excludes leading channels. Assumption~\ref{as2} allows channel interactions only beyond the infinitesimal time scale. This is physiologically plausible for ion channel systems, where coupling is most likely mediated by physicochemical interactions. Moreover, reversibility of SDMCs (Proposition~\ref{reversible}) and of their sum processes (Proposition~\ref{bd_proc}) is consistent with ion channel dynamics under equilibrium conditions \citep{sakmann2013single}. Deviations from reversibility may indicate external driving forces \citep{Lauger1983}. Existing approaches to ion channel cooperativity relate to SDMCs as follows.

\cite{Keleshian1994} model cooperativity between two ion channels using conditional distributions of the dwell time of one channel given the state of the other. Extension to more than two channels is challenging, and identifiability is not addressed.

\cite{Ball1997} introduce cooperativity by modifying the nonzero entries of the rate matrix for the sum process of independent and identically distributed ion channels, while preserving identical marginal dynamics. \citet{BallYeo2000} further incorporate spatial organization through the restrictive assumption that channels are arranged on a circle. Such \emph{ad hoc} constructions lead to tridiagonal rate matrices similar to those induced by SDMCs, but their extension to more general settings is limited. In contrast, SDMCs support a broader range of cooperative behavior and arise from the two fundamental modeling assumptions (Assumptions~\ref{as1} and~\ref{as2}).

\citet{CHUNG} propose a discrete-time Markov chain model given by a mixture of independent dynamics and fully coupled dynamics among individual ion channels. In particular, this model cannot represent negative cooperativity.

The closest related work is \cite{VanEt24}, which can be viewed as a discrete-time analogue of SDMCs. While \cite{VanEt24} impose assumptions similar to Assumptions~\ref{as1} and~\ref{as2}, their conditional independence condition differs in an important respect: dependence can arise only over horizons longer than the sampling interval. Thus interactions may appear only after more than one discrete time step, so the assumption depends explicitly on the measurement sampling rate. If the sampling rate is too low, the assumption may fail for ion channel data; even when it holds, the corresponding theoretical guarantees apply only at that fixed sampling rate, and the model parameters themselves vary with the sampling rate.

By contrast, Assumption~\ref{as2} is formulated in continuous time and is sampling-rate invariant. It permits dependence on time scales shorter than, or comparable to, the sampling interval, provided that
$\prob{X_{t+\delta}=y\;\middle\vert\;X_t=x}
-
\prod_{i=1}^L \prob{X_{i,t+\delta}=y_i\;\middle\vert\; X_t=x}
=o(\delta)$
as  $\delta\searrow 0.$
Requiring this deviation to be exactly zero for a fixed $\delta>0$ recovers the corresponding assumption in \cite{VanEt24}. Thus SDMCs allow a weaker form of infinitesimal dependence, see also Remark~1 in \citet{Requadt2025}.

A further distinction is identifiability. The model of \cite{VanEt24} is not identifiable when the number of channels $L$ is even. Specifically, distinct parameter values can induce the same transition matrix and correspond to different cooperative regimes (positive versus negative cooperativity). In contrast, the continuous-time infinitesimal formulation of SDMCs avoids this identifiability issue.

Finally, the discrete-time model of \cite{VanEt24} is restricted to binary state spaces, and its extension to three or more states is unclear. A central difficulty is identifying model parameters from the aggregate, or sum, process. Continuous-time SDMCs extend to general finite state spaces, see Supplement~\ref{section7}.

\section{Cooperativity}\label{cooperativity}

We define a cooperativity index that quantifies positive and negative dependence among the coordinates of an SDMC $(X_t)_{t\ge 0}$ and describe its estimation from data.

\begin{definition}[Full cooperativity]\label{full_cooperativity}
Let $(X_t)_{t\ge 0}$ be an SDMC with parameters $\lambda_0,\dots,\lambda_{L-1}$ and $\mu_1,\dots,\mu_L$.
\begin{enumerate}
    \item $(X_t)_{t\ge0}$ is \textit{fully positively cooperative} if $s\mapsto \lambda_s$ is strictly increasing and $s\mapsto \mu_s$ is strictly decreasing.
    \item $(X_t)_{t\ge0}$ is \textit{fully negatively cooperative} if $s\mapsto \lambda_s$ is strictly decreasing and $s\mapsto \mu_s$ is strictly increasing.
    \item $(X_t)_{t\ge0}$ is \textit{null cooperative} if both $s\mapsto \lambda_s$ and $s\mapsto \mu_s$ are constant.
\end{enumerate}
\end{definition}

In the SDMC class, null cooperativity is equivalent to independence of the coordinate processes, and in this case each coordinate process is itself Markov.

\begin{prop}[Null cooperativity, independence, and marginalizability]\label[prop]{prop_rel_null_coop_ind}
Let $X_t=(X_{1,t},\dots,X_{L,t})$, $t\ge0$, be an SDMC. Then the following statements are equivalent:
\begin{enumerate}
\item\label{i:null}
 $(X_t)_{t\ge 0}$ is null cooperative.
\item\label{i:ind}
$(X_{1,t})_{t\ge 0},\dots,(X_{L,t})_{t\ge 0}$ are mutually independent.
\item\label{i:marg}
$(X_{1,t})_{t\ge 0},\dots,(X_{L,t})_{t\ge 0}$ are continuous-time Markov chains.
\end{enumerate}
\end{prop}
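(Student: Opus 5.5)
We prove the cycle \ref{i:null}$\Rightarrow$\ref{i:ind}$\Rightarrow$\ref{i:marg}$\Rightarrow$\ref{i:null}. The initial law matters here: independence of the coordinate processes also needs independent initial coordinates. We will therefore read the statement either for a product initial law (e.g.\ the stationary law $\pi^*$, which is a product measure exactly under null cooperativity) or with ``Markov'' understood for every initial condition. We state this convention at the outset.

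\emph{\ref{i:null}$\Rightarrow$\ref{i:ind}.} Suppose $\lambda_s\equiv\lambda$ and $\mu_s\equiv\mu$. Then by \eqref{def_rate_eq} the rate matrix is the Kronecker sum $Q=\sum_{i=1}^L I\otimes\cdots\otimes G\otimes\cdots\otimes I$ with the two-state generator $G=\begin{pmatrix}-\lambda&\lambda\\ \mu&-\mu\end{pmatrix}$ in slot $i$. The summands commute, so
\[
e^{tQ}=\bigotimes_{i=1}^L e^{tG}.
\]
Hence all finite-dimensional distributions factorize given a product initial law. This gives mutual independence, each coordinate being a two-state Markov chain with generator $G$.

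\emph{\ref{i:ind}$\Rightarrow$\ref{i:marg}.} Under independence, $X_{i,t}$ is a measurable function of the Markov process $X_t$. We will show directly that $\prob{X_{i,t+\delta}=y_i\mid \mathcal{F}^{X_i}_t}$ depends only on $X_{i,t}$. By the Markov property of $X$, this conditional probability equals $\mathbb{E}[\,p_\delta(X_t,y_i)\mid\mathcal{F}^{X_i}_t]$, where $p_\delta(x,y_i)=\prob{X_{i,t+\delta}=y_i\mid X_t=x}$. Independence forces $p_\delta(x,y_i)$ to depend on $x$ only through $x_i$: one compares conditioning on $X_t=x$ with conditioning on $X_{i,t}=x_i$ using the product structure at time $t$ and the independence of the futures. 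Hence the expectation reduces to a function of $X_{i,t}$, and $X_i$ is Markov. Time homogeneity is inherited.

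\emph{\ref{i:marg}$\Rightarrow$\ref{i:null}.} This is the main step. Suppose $X_1$ is a continuous-time Markov chain with some (possibly time-inhomogeneous) rates. Its jump intensity from $0$ to $1$ at time $t$ given $\mathcal{F}^{X_1}_t$ must be a deterministic function of $(t,X_{1,t})$. From the SDMC, however, Theorem~\ref{Theorem1} shows that the $\mathcal{F}^X_t$-intensity of $X_1$ jumping $0\to1$ is $\lambda_{S_t}$. Consequently the $\mathcal{F}^{X_1}_t$-intensity is $\mathbb{E}[\lambda_{S_t}\mid \mathcal{F}^{X_1}_t]$, by the innovation/projection theorem for counting process intensities. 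We then pick histories of $X_1$ that make the conditional law of the other coordinates differ. For instance, consider the event that $X_1$ stays at $0$ on $[0,t]$, starting from two initial configurations of the others (all $0$ versus all $1$) that both have positive probability under the initial law; with $\pi^*$ every configuration has positive mass by Lemma~\ref{l:ir:sd}. As $t\searrow0$ along such histories, $\mathbb{E}[\lambda_{S_t}\mid\cdot]$ tends to $\lambda_{0}$ in one case and $\lambda_{L-1}$ in the other. More generally it tends to $\lambda_s$ whenever the others start with sum $s$. Markovianity forces these limits to coincide, so $\lambda_s$ is constant; the symmetric argument gives $\mu_s$ constant.

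The technical obstacle is making this last argument rigorous. Conditioning on $\mathcal{F}^{X_1}_0$ does not reveal the other coordinates, so the cleanest route is to compare two different finite-dimensional Markov consistency conditions. Concretely, we use $\prob{X_{1,2\delta}=1\mid X_{1,\delta}=0,X_{1,0}=0}$ versus $\prob{X_{1,2\delta}=1\mid X_{1,\delta}=0}$ and expand both to first order in $\delta$ via $e^{\delta Q}=I+\delta Q+O(\delta^2)$. The first is $\delta\,\mathbb{E}_{\pi}[\lambda_{S}\mid X_1=0]+O(\delta^2)$. If this is insufficient, we use longer histories (e.g.\ $X_1$ flipping, which forces $S$ to shift) to extract a nontrivial linear combination $\sum_s c_s\lambda_s$ that vanishes exactly when the $\lambda_s$ are equal. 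If first-order terms fail to separate, we expect a second-order expansion to yield a variance term $\operatorname{Var}(\lambda_{S}\mid X_1=0)$ that must vanish, which gives the result.
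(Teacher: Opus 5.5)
Your directions (i)$\Rightarrow$(ii) (Kronecker sum, product semigroup) and (ii)$\Rightarrow$(iii) are fine, and your caution about the initial law is justified. The gap is in (iii)$\Rightarrow$(i). This is the only substantive direction, and the proposal does not prove it.

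The histories you first propose to separate (initial configurations of the other coordinates) are not events of $\mathcal{F}^{X_1}_t$. Under one initial law, the projected intensity $\mathbb{E}[\lambda_{S_t}\mid\mathcal{F}^{X_1}_t]$ simply averages over them. The concrete test you then write down does not separate either. Under the stationary law, both $\mathbb{P}(X_{1,2\delta}=1\mid X_{1,\delta}=0,X_{1,0}=0)$ and $\mathbb{P}(X_{1,2\delta}=1\mid X_{1,\delta}=0)$ equal $\delta\,\mathbb{E}_{\pi^*}[\lambda_S\mid X_1=0]+O(\delta^2)$. The reason is that the extra event $\{X_{1,0}=0\}$ has probability bounded away from zero and shifts the law of $X_\delta$ only by $O(\delta)$. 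Everything after that (``if this is insufficient'', ``we expect'') is a plan, so nothing in the proposal forces the $\lambda_s$ or $\mu_s$ to be equal.

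There is also a side error: $\pi^*$ is a product measure iff $\lambda_j/\mu_{j+1}$ is constant in $j$, not ``exactly under null cooperativity''. For example, $L=2$, $\lambda_0=\mu_1=1$, $\lambda_1=\mu_2=2$ gives the uniform law.

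The missing idea is the marginalisability criterion the paper uses. By \citet[Theorem~3.1]{Ball1993}, $X_1$ is Markov iff $\sum_{y\in\mathcal{A}_2^{L-1}}q_{(a,x),(b,y)}$ does not depend on $x\in\mathcal{A}_2^{L-1}$. For an SDMC this sum is $\lambda_{\mathcal{S}x}$ for $(a,b)=(0,1)$ and $\mu_{\mathcal{S}x+1}$ for $(a,b)=(1,0)$, so the criterion is exactly null cooperativity. Theorem~3.2 there, together with sparsity, does the same for (ii). In that reading (Markov for every initial law, with a common transition function), your step is one line: $\mathbb{P}_x(X_{1,\delta}=1)=\delta\lambda_{\mathcal{S}x}+o(\delta)$ cannot depend on $x$ with $x_1=0$.

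If you insist on a single stationary law $X_0\sim\pi^*$, condition on a flip of $X_1$ itself. First order already suffices, because the flip event has probability of order $\delta$ and tilts the law of $S$. Using detailed balance $\pi^*_x\mu_{\mathcal{S}x}=\pi^*_{x-e_1}\lambda_{\mathcal{S}x-1}$, one gets
\[
\mathbb{P}(X_{1,3\delta}=1\mid X_{1,2\delta}=0,\,X_{1,\delta}=1)=\delta\,\frac{\mathbb{E}_{\pi^*}[\lambda_S^2\mid X_1=0]}{\mathbb{E}_{\pi^*}[\lambda_S\mid X_1=0]}+O(\delta^2),
\]
whereas $\mathbb{P}(X_{1,3\delta}=1\mid X_{1,2\delta}=0)=\delta\,\mathbb{E}_{\pi^*}[\lambda_S\mid X_1=0]+O(\delta^2)$. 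The Markov property forces $\operatorname{Var}_{\pi^*}(\lambda_S\mid X_1=0)=0$. Full support of $\pi^*$ then gives $\lambda_0=\cdots=\lambda_{L-1}$, and the $0\to1$ flip gives $\mu_1=\cdots=\mu_L$.

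This route needs irreducibility. If $\lambda_0=0$ and $X_0=(0,\ldots,0)$, all coordinates stay constant, so they are independent and Markov whatever $\lambda_1,\ldots,\lambda_{L-1}$ are. Hence under a fixed initial law without positivity, (ii) and (iii) do not imply (i).
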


Cooperativity is defined through monotone orderings of conditional opening and closing rates. These orderings are inherited by the corresponding conditional transition probabilities at every time lag $\delta>0$.

\begin{lemma}\label[lemma]{le:mon:prob}
Let $(X_t)_{t\ge 0}$ be an SDMC with parameters
$\lambda_0,\ldots,\lambda_{L-1}$ and $\mu_1,\ldots,\mu_L$. For
$s\in\{1,\ldots,L\}$ and $\delta>0$, define
\[
p^{\mathrm{o}}_{s-1}(\delta)
 =
 \mathbb P\{X_{i,t+\delta}=1\mid X_{i,t}=0,\ S_t=s-1\},
\quad
 p^{\mathrm{c}}_s(\delta)
 =
 \mathbb P\{X_{i,t+\delta}=0\mid X_{i,t}=1,\ S_t=s\}.
\]
These probabilities do not depend on $i\in\{1,\ldots, L\}$ or $t\ge 0$. If $(X_t)_{t\ge 0}$ is fully positively cooperative, then
$
p^{\mathrm{o}}_0(\delta)<p^{\mathrm{o}}_1(\delta)<\cdots<p^{\mathrm{o}}_{L-1}(\delta)$, $
p^{\mathrm{c}}_1(\delta)> p^{\mathrm{c}}_2(\delta)>\cdots> p^{\mathrm{c}}_L(\delta).$
If $(X_t)_{t\ge0}$ is fully negatively cooperative, then
$p^{\mathrm{o}}_0(\delta)>p^{\mathrm{o}}_1(\delta)>\cdots>p^{\mathrm{o}}_{L-1}(\delta)$, $p^{\mathrm{c}}_1(\delta)< p^{\mathrm{c}}_2(\delta)<\cdots< p^{\mathrm{c}}_L(\delta).$
If $(X_t)_{t\ge0}$ is null cooperative, both sequences $(p^{\mathrm{o}}_s(\delta))_{s=0}^{L-1}$ and $(p^{\mathrm{c}}_s(\delta))_{s = 1}^L$ are constant in $s$.
\end{lemma}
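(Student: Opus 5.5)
The plan is to work with a two-dimensional lumped chain instead of the full SDMC, and to prove the inequalities by an order-preserving coupling of two copies of that chain. Two steps are routine. First, the probabilities $p^{\mathrm o}_{s-1}(\delta)$ and $p^{\mathrm c}_s(\delta)$ are well defined. By time-homogeneity and the Markov property, conditioning on $\{X_{i,t}=0,\,S_t=s-1\}$ amounts to starting the chain at some $x$ with $x_i=0$ and $\mathcal S x=s-1$, averaged over the conditional law of such $x$. Assumption~\ref{as1}, which holds by Theorem~\ref{Theorem1}, lets us apply a permutation fixing $i$; this shows that $\mathbb P(X_{i,\delta}=1\mid X_0=x)$ is the same for all such $x$, and a transposition moves $i$ to $1$. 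Second, the null cooperative case follows from Proposition~\ref{prop_rel_null_coop_ind}. There each $X_{1,t}$ is an autonomous two-state chain independent of the others, so conditioning on $S_t$ is irrelevant and both sequences are constant in $s$.

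\emph{Reduction.} Set $A_t=X_{1,t}$ and $B_t=\sum_{j\ge2}X_{j,t}\in\{0,\dots,L-1\}$. The coordinates $2,\dots,L$ are exchangeable under the rates \eqref{def_rate_eq}, so $(A_t,B_t)$ is a Markov chain by strong lumpability, as in the argument for the sum process in Theorem~\ref{Theorem2}. Its rates are:
\begin{itemize}
\item $A$ goes $0\to1$ at rate $\lambda_{B}$ and $1\to0$ at rate $\mu_{1+B}$;
\item $B$ goes $b\to b+1$ at rate $(L-1-b)\lambda_{A+b}$ and $b\to b-1$ at rate $b\,\mu_{A+b}$;
\item there are no joint jumps.
\end{itemize}
Then $p^{\mathrm o}_{s-1}(\delta)=\mathbb P_{(0,s-1)}(A_\delta=1)$ and $p^{\mathrm c}_s(\delta)=\mathbb P_{(1,s-1)}(A_\delta=0)$.

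\emph{Monotone coupling.} In the positively cooperative case I couple a copy $(A,B)$ started at $(0,s-1)$ with a copy $(A',B')$ started at $(0,s)$. The order to preserve is $A\le A'$ and $B\le B'$. I use the basic (Harris-type) coupling: when the two copies agree in a component, they use a common clock for the smaller of the two rates and an extra independent clock for the difference. Preservation reduces to checking the rates at configurations where a component is tied.
\begin{itemize}
\item If $A=A'=0$, then $\lambda_B\le\lambda_{B'}$.
\item If $A=A'=1$, then $\mu_{1+B}\ge\mu_{1+B'}$.
\item If $B=B'=b$ and $A\le A'$, then $\lambda_{A+b}\le\lambda_{A'+b}$ and $\mu_{A+b}\ge\mu_{A'+b}$.
\end{itemize}
All of these follow from $\lambda$ increasing and $\mu$ decreasing. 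In the negatively cooperative case I use the mixed order $A\le A'$, $B'\le B$, with $(A',B')$ started at $(0,s-1)$ and $(A,B)$ started at $(0,s)$. The same four checks go through precisely because $\lambda$ is decreasing and $\mu$ is increasing. This mixed order is the reason for lumping: on the full chain $X$ the mixed order does not control the sums, but here $B$ is a single coordinate. The closing probabilities $p^{\mathrm c}_s$ are handled identically, starting both copies with $A=1$ and comparing $\mathbb P(A_\delta=0)$. The coupling then gives the weak inequalities $p^{\mathrm o}_{s-1}(\delta)\le p^{\mathrm o}_{s}(\delta)$ and their analogues.

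\emph{Strictness.} I expect this to be the main obstacle. Under the coupling, $p^{\mathrm o}_{s}(\delta)-p^{\mathrm o}_{s-1}(\delta)=\mathbb P(A'_\delta=1,A_\delta=0)$, and I must show this event has positive probability. The plan is to exhibit one explicit path. On $[0,\delta]$, the first event is the ring of the extra $A'$-opening clock, whose rate $\lambda_{s}-\lambda_{s-1}>0$ is strictly positive by the strict monotonicity. After that, no other clock of either copy rings before $\delta$. All clock rates are finite, so this event has positive probability, and on it $A'_\delta=1$ and $A_\delta=0$.

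One subtlety is that after the first jump the tied and untied components change, so the clock structure changes. I will handle this by noting that the total jump rate of the coupled chain is bounded by $2L\max(\max_s\lambda_s,\max_s\mu_s)$. Hence the probability of no further jump on the remaining interval is at least $e^{-c\delta}>0$, and no irreducibility is needed. The negative and closing cases are identical with the appropriate strictly positive rate differences.
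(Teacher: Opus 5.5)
Your proposal is correct and follows essentially the paper's argument. The paper also uses a monotone Markovian coupling with a common clock for the smaller rate and an extra clock for the difference, and it obtains strictness from the strictly positive extra rate. In the negatively cooperative case the paper works on the same two-dimensional lumped chain, with the second coordinate counting zeros rather than ones among the other coordinates; this is exactly your mixed order rewritten as the usual product order. The only difference is that in the positively cooperative case the paper couples the full chain $X$ under the coordinatewise order, whereas you use the lumped chain for both cases, which is a harmless and slightly more uniform variant.
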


\subsection{Cooperativity index}

Cooperative behavior need not fall into the three regimes of Definition~\ref{full_cooperativity}. We thus
introduce a scalar summary of cooperativity that covers all possible regimes.

\begin{definition}[Cooperativity index]\label{def_coop_index}
The \textit{cooperativity index} $\Lambda:[0,\infty)^{2L}\to[-1,1]$ is
\[
\Lambda(\theta)
:=
\frac{1}{L(L-1)}\sum_{s=1}^{L-1}\sum_{r=s+1}^L
\bigl(\sign(\mu_s-\mu_r)+\sign(\lambda_{r-1}-\lambda_{s-1})\bigr),
\]
where $\theta=(\lambda_0,\dots,\lambda_{L-1},\mu_1,\dots,\mu_L)$, and $\sign(x)$ equals $1$, $0$ or $-1$ according as $x>0$, $x=0$ or $x<0$, respectively.
\end{definition}

\begin{prop}\label[prop]{prop3}
Let $(X_t)_{t\ge 0}$ be an SDMC with parameter vector $\theta$. Then:
\begin{enumerate}
    \item\label{prop3.1} $(X_t)_{t\ge 0}$ is fully positively cooperative if and only if $\Lambda(\theta)=1$.
    \item\label{prop3.2} $(X_t)_{t\ge0}$ is fully negatively cooperative if and only if $\Lambda(\theta)=-1$.
    \item\label{prop3.3} If $(X_t)_{t\ge0}$ is null cooperative, then $\Lambda(\theta)=0$.
\end{enumerate}
\end{prop}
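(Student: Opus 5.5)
The proof counts sign terms directly from Definition~\ref{def_coop_index}. The double sum runs over pairs $1\le s<r\le L$, of which there are $L(L-1)/2$. Each summand is $\sign(\mu_s-\mu_r)+\sign(\lambda_{r-1}-\lambda_{s-1})\in\{-2,\dots,2\}$. Hence
\[
\abs{\Lambda(\theta)}\le \frac{1}{L(L-1)}\cdot 2\cdot\frac{L(L-1)}{2}=1,
\]
which also confirms that $\Lambda$ maps into $[-1,1]$. Equality $\Lambda(\theta)=1$ holds if and only if every summand attains its maximum value $2$. This means $\sign(\mu_s-\mu_r)=1$ and $\sign(\lambda_{r-1}-\lambda_{s-1})=1$ for every pair $s<r$.

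For item~\ref{prop3.1}, I will translate this condition. The requirement $\mu_s>\mu_r$ for all $1\le s<r\le L$ says exactly that $s\mapsto\mu_s$ is strictly decreasing on $\{1,\dots,L\}$. The requirement $\lambda_{r-1}>\lambda_{s-1}$ for all such pairs says, after the index shift $s'=s-1<r'=r-1$ ranging over $0\le s'<r'\le L-1$, that $s\mapsto\lambda_s$ is strictly increasing on $\{0,\dots,L-1\}$. For the converse direction, strict monotonicity gives every pairwise comparison directly, without passing through consecutive pairs. Both directions together are precisely Definition~\ref{full_cooperativity}(i).

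Item~\ref{prop3.2} follows by the same argument with all signs reversed. Now $\Lambda(\theta)=-1$ if and only if every summand equals $-2$, which is equivalent to $\mu$ strictly increasing and $\lambda$ strictly decreasing. For item~\ref{prop3.3}, constancy of $\lambda$ and $\mu$ makes every sign term zero, so $\Lambda(\theta)=0$.

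No step is a genuine obstacle. The only care needed is the index shift in the $\lambda$ terms, together with the observation that the pairs $s<r$ cover all pairs of indices, so pairwise inequality is equivalent to strict monotonicity. The case $L=1$ should be excluded or treated separately, since the normalisation $L(L-1)$ vanishes there; I would note that the paper implicitly assumes $L\ge 2$.
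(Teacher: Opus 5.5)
Your proof is correct and uses essentially the same counting argument as the paper. The paper writes each sign as a difference of indicators and notes that $\Lambda(\theta)=1$ forces all $L(L-1)$ favourable indicators to equal one; your bound that each summand is at most $2$, with equality only when both signs equal $1$, makes explicit the step the paper leaves implicit, and your remark that $L\ge 2$ is needed for the normalisation is apt.
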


\begin{remark}
The cooperativity index can be expressed through Kendall's rank correlation coefficient $\tau$ \citep{kendall1938}. For observations $D=\{(x_1,y_1),\dots,(x_n,y_n)\}$,
\[
\tau(D)=\frac{2}{n(n-1)}\sum_{i=1}^{n-1}\sum_{j=i+1}^n
\sign\bigl((x_i-x_j)(y_i-y_j)\bigr).
\]
Then $\Lambda(\theta)=({\tau(D_1)-\tau(D_2)})/{2}$ with $D_1=\{(s,\lambda_s)\}_{s =0}^{L-1}$ and $D_2=\{(s,\mu_s)\}_{s= 1}^L$.
\end{remark}

\subsection{Empirical cooperativity index and asymptotics}

Given a consistent estimator $\hat{\theta}_n$ of $\theta$, cooperativity index can be estimated by the plug-in statistic $\Lambda(\hat{\theta}_n)$; such estimators are developed in Section~\ref{s:estimation}. Since $\Lambda(\cdot)$ is piecewise constant and discontinuous at parameter vectors with ties among $\{\lambda_0,\dots,\lambda_{L-1}\}$ or $\{\mu_1,\dots,\mu_L\}$, the plug-in estimator  $\Lambda(\hat{\theta}_n)$ need not be consistent at such points. The following result covers both the regular (no ties) and tied cases.

\begin{prop}\label[prop]{coop_index_convergence}
Let $\hat{\theta}_n=(\hat{\lambda}_0,\dots,\hat{\lambda}_{L-1},\hat{\mu}_1,\dots,\hat{\mu}_L)$ be a consistent estimator for
$\theta_{\circ}=(\lambda_0,\dots,\lambda_{L-1},\mu_1,\dots,\mu_L)\in[0,\infty)^{2L}$, that is,
$\hat{\lambda}_{s-1}\xrightarrow{\mathbb{P}}\lambda_{s-1}$ and 
$\hat{\mu}_s\xrightarrow{\mathbb{P}}\mu_s$ for $s\in\{1,\dots,L\}.$
\begin{enumerate}
    \item\label{prop4.3_case1}
    If $\mu_s\neq\mu_r$ and $\lambda_{s-1}\neq\lambda_{r-1}$ for all distinct $s,r\in\{1,\dots,L\}$, then
    \(
    \mathbb{P}\{\Lambda(\hat{\theta}_n)=\Lambda(\theta_{\circ})\}\to1
    \) as $n\to \infty$.

    \item\label{prop4.3_case2}
    Define $A=A_{\theta_{\circ}}:=\{(s,r):\mu_s\neq \mu_r\}$ and $B=B_{\theta_{\circ}}:=\{(s,r):\lambda_{s-1}\neq \lambda_{r-1}\}.$
    Suppose that, for some sequence $(a_n)_{n}\subset(0,\infty)$ and random vector
    $Z=(Z_1,\dots,Z_{2L})\in\mathbb{R}^{2L}$,
    \(
    a_n(\hat{\theta}_n-\theta_{\circ})\xrightarrow{D} Z
    \),
    \(\mathbb{P}(Z_i=Z_j)=0\) for all \((i,j)\notin B,\ i<j,\) and 
\(\mathbb{P}(Z_{L+i}=Z_{L+j})=0\)  for all \( (i,j)\notin A,\ i<j.\)
    Then $\Lambda(\hat{\theta}_n)$ converges weakly to
    \begin{multline}\label{e:limit}
    \frac{1}{L(L-1)}\Biggl[
    \sum_{{(s,r)\notin A,\,s<r}}
        \sign(Z_{L+s}-Z_{L+r})
    +\sum_{{(s,r)\notin B,\, s<r}}
        \sign(Z_{r}-Z_{s}) \\
    +\sum_{{(s,r)\in A,\, s<r}}
        \sign(\mu_s-\mu_r)
    +\sum_{{(s,r)\in B,\, s<r}}
        \sign(\lambda_{r-1}-\lambda_{s-1})
    \Biggr].
    \end{multline}
\end{enumerate}
\end{prop}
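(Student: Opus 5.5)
Part (i) follows from the fact that $\Lambda$ is locally constant off the tie set. Set
\[
\varepsilon:=\tfrac12\min\Bigl\{\min_{s\neq r}\abs{\mu_s-\mu_r},\ \min_{s\neq r}\abs{\lambda_{s-1}-\lambda_{r-1}}\Bigr\}>0 .
\]
On the event $E_n=\{\max_{s}(\abs{\hat\lambda_{s-1}-\lambda_{s-1}}\vee\abs{\hat\mu_s-\mu_s})<\varepsilon\}$ every pairwise difference keeps its sign. For example, $\hat\mu_s-\hat\mu_r$ differs from $\mu_s-\mu_r$ by less than $2\varepsilon\le\abs{\mu_s-\mu_r}$, so $\sign(\hat\mu_s-\hat\mu_r)=\sign(\mu_s-\mu_r)$, and the same holds for the $\lambda$'s. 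Hence $\Lambda(\hat\theta_n)=\Lambda(\theta_\circ)$ on $E_n$. Consistency of each of the finitely many coordinates gives $\mathbb P(E_n)\to1$ by a union bound.

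For part (ii), I split the double sum defining $\Lambda(\hat\theta_n)$ into the pairs in $A$ or $B$ (non-tied) and the pairs outside them (tied).

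\emph{Non-tied pairs.} These are handled exactly as in (i). Since $a_n(\hat\theta_n-\theta_\circ)$ is tight, $\hat\theta_n\to\theta_\circ$ in probability; if $a_n\not\to\infty$ I would still use the assumed consistency. With $\varepsilon$ defined as half the minimum gap over these pairs only, each such summand equals the constant $\sign(\mu_s-\mu_r)$ or $\sign(\lambda_{r-1}-\lambda_{s-1})$ with probability tending to one.

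\emph{Tied pairs.} For $(s,r)\notin A$ we have $\mu_s=\mu_r$, so
\[
\sign(\hat\mu_s-\hat\mu_r)=\sign\bigl(a_n(\hat\mu_s-\mu_s)-a_n(\hat\mu_r-\mu_r)\bigr),
\]
using $a_n>0$. Likewise, for $(s,r)\notin B$,
\[
\sign(\hat\lambda_{r-1}-\hat\lambda_{s-1})=\sign\bigl(a_n(\hat\lambda_{r-1}-\lambda_{r-1})-a_n(\hat\lambda_{s-1}-\lambda_{s-1})\bigr).
\]
Hence the tied part equals $g\bigl(a_n(\hat\theta_n-\theta_\circ)\bigr)$, where
\[
g(z)=\sum_{(s,r)\notin A,\,s<r}\sign(z_{L+s}-z_{L+r})+\sum_{(s,r)\notin B,\,s<r}\sign(z_r-z_s)
\]
and the components of $z$ are indexed as in $\theta$, so $z_r$ corresponds to $\lambda_{r-1}$.

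The function $g$ is continuous except on the union of hyperplanes $\{z_i=z_j\}$ for the tied pairs. By hypothesis $Z$ puts zero mass on each of these, so the continuous mapping theorem gives $g(a_n(\hat\theta_n-\theta_\circ))\xrightarrow{D}g(Z)$.

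Finally I combine the two parts. The full statistic is $\frac1{L(L-1)}\bigl[g(a_n(\hat\theta_n-\theta_\circ))+C_n\bigr]$, where $C_n$ converges in probability to the constant formed by the $A$- and $B$-sums in \eqref{e:limit}. Slutsky's lemma then yields the limit \eqref{e:limit}.

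The main obstacle I expect is the bookkeeping in the tied part: matching the index conventions of $Z$ to those of $\theta$, which is where the stated hypothesis $\mathbb P(Z_i=Z_j)=0$ for $(i,j)\notin B$ and the shift $L+i$ for the $\mu$-block enter. The step also relies on the zero-probability-on-ties assumption being exactly what the continuous mapping theorem requires. Everything else is routine.
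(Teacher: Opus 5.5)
Your proposal is correct and follows the paper's proof essentially step by step. In part (ii) the paper makes the same split into tied pairs and non-tied pairs. Its $W_1$ collects the tied pairs, rewrites them as a function of $a_n(\hat\theta_n-\theta_\circ)$, and handles them by the continuous mapping theorem using $\mathbb{P}(Z_i=Z_j)=0$. Its $W_2$ collects the non-tied pairs, which converge in probability to the constant, and Slutsky's lemma finishes. The only cosmetic difference is in part (i): the paper invokes the continuous mapping theorem together with the finiteness of the range of $\Lambda$, whereas you make the same local-constancy argument explicit via the event $E_n$.
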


In the independent (i.e.\ fully tied) case, the last two sums in \eqref{e:limit} vanish, so the limit in Proposition~\ref{coop_index_convergence} is purely random. In the regular case with no ties, the first two sums in \eqref{e:limit} vanish, and weak convergence reduces to convergence in probability to a deterministic limit.

To handle discontinuity points, let $a_n\searrow 0$ and define the thresholded cooperativity index
\begin{equation}\label{coop_index_adjusted}
\Lambda_n(\theta)
=
\frac{1}{L(L-1)}\sum_{s=1}^{L-1}\sum_{r=s+1}^L
\Bigl(
\soft_{a_n}(\mu_s-\mu_r)+\soft_{a_n}(\lambda_{r-1}-\lambda_{s-1})
\Bigr),
\end{equation}
where $\soft_a(x):=\sign(x)\cdot\mathbb{I}\{|x|>a\}$. We call $\Lambda_n(\hat{\theta}_n)$ the \emph{empirical cooperativity index}. It is consistent when the estimation error is asymptotically negligible relative to the threshold $a_n$.

\begin{theorem}[Pointwise convergence]\label{consistency_ind}
Assume that, for some $a_n\searrow 0$,
$\hat{\mu}_s-\mu_s=o_{\mathbb{P}}(a_n)$ and $\hat{\lambda}_{s-1}-\lambda_{s-1}=o_{\mathbb{P}}(a_n)$ for all $s\in\{1,\dots,L\}$. Then, \(
\Lambda_n(\hat{\theta}_n)\xrightarrow{\mathbb{P}}\Lambda(\theta)\),  as \( n\to\infty\), where $\Lambda_n(\cdot)$ is defined in \eqref{coop_index_adjusted} and $\hat\theta = (\hat\lambda_0,\ldots,\hat\theta_{L-1},\hat\mu_1,\ldots,\hat\mu_L)$. \end{theorem}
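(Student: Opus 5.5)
Since $\Lambda_n$ and $\Lambda$ are both finite averages of $L(L-1)$ bounded summands, I will show that each summand of $\Lambda_n(\hat\theta_n)$ agrees with the corresponding summand of $\Lambda(\theta)$ with probability tending to one. A union bound over the finitely many pairs $(s,r)$ then gives $\mathbb{P}\{\Lambda_n(\hat\theta_n)=\Lambda(\theta)\}\to 1$, which is stronger than convergence in probability. By symmetry it suffices to treat a generic term $\soft_{a_n}(\hat\mu_s-\hat\mu_r)$ versus $\sign(\mu_s-\mu_r)$; the $\lambda$-terms are identical.

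Fix $s<r$ and write $d=\mu_s-\mu_r$ and $\hat d_n=\hat\mu_s-\hat\mu_r$. By the assumption and the triangle inequality, $\hat d_n-d=o_{\mathbb P}(a_n)$, that is, $\mathbb{P}(|\hat d_n-d|>\varepsilon a_n)\to0$ for every $\varepsilon>0$. Take $\varepsilon=1/2$ and split into two cases.

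\emph{Tied case} ($d=0$). On the event $\{|\hat d_n|\le a_n/2\}$, whose probability tends to one, we have $|\hat d_n|\le a_n$. Hence $\soft_{a_n}(\hat d_n)=0=\sign(d)$. This is exactly where thresholding repairs the discontinuity noted before Proposition~\ref{coop_index_convergence}.

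\emph{Untied case} ($d\neq0$). Since $a_n\searrow0$, we have $a_n<|d|/3$ for all large $n$. On the event $\{|\hat d_n-d|\le a_n/2\}$ this gives $|\hat d_n|\ge |d|-a_n/2>a_n$ and $\sign(\hat d_n)=\sign(d)$, because $|\hat d_n-d|<|d|$. Therefore $\soft_{a_n}(\hat d_n)=\sign(d)$ with probability tending to one.

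There is no real obstacle. The only point needing care is that one threshold sequence must serve both cases simultaneously. This works because $a_n\to0$, which handles the untied pairs, while the estimation error is $o_{\mathbb P}(a_n)$, which handles the tied pairs. Note also that $\theta$ in the statement is the true parameter $\theta_\circ$.

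Combining the two cases, every one of the finitely many summands agrees with its limit on an event of probability tending to one. Intersecting these events and noting that both indices are the same normalized sum gives $\Lambda_n(\hat\theta_n)=\Lambda(\theta)$ with probability tending to one. In particular $\Lambda_n(\hat\theta_n)\xrightarrow{\mathbb P}\Lambda(\theta)$.
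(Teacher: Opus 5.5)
Your proof is correct and takes the same route as the paper: a scalar argument (the paper's Lemma~\ref{l:soft}) showing that $\soft_{a_n}$ of each estimated difference converges to the true sign, split into tied and untied cases, followed by combining the finitely many summands. Your version is slightly sharper. It works directly with the hard-threshold definition $\soft_a(x)=\sign(x)\,\mathbb{I}\{|x|>a\}$, justifies via the triangle inequality that $\hat d_n-d=o_{\mathbb P}(a_n)$ for each difference, and yields the stronger conclusion $\mathbb{P}\{\Lambda_n(\hat\theta_n)=\Lambda(\theta)\}\to1$.
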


If $\hat{\theta}_n$ converges at the parametric rate, $\hat{\theta}_n=\theta+\mathcal{O}_{\mathbb{P}}(n^{-1/2})$, one may choose $a_n\searrow 0$ with $a_n^{-1}/\sqrt{n}\to 0$, so that $a_n$ vanishes more slowly than $n^{-1/2}$, for example, $a_n=n^{-1/2}\log(n)$.

\subsection{Other notions of cooperativity}\label{comparison_cooperativity}

In the discrete-time setting, \cite{VanEt24} define several notions of cooperativity, including cooperative/competitive and $i$-cooperative/$i$-competitive behavior, $i\in\{0,\ldots,L-1\}$, based on selected pairwise comparisons of the conditional transition probabilities $p^{\mathrm{o}}_s(\delta)$ and $p^{\mathrm{c}}_s(\delta)$ from Lemma~\ref{le:mon:prob}, for a fixed sampling interval $\delta>0$. By Lemma~\ref{le:mon:prob}, these orderings agree with those of the rate parameters $\lambda_s$ and $\mu_s$. In particular, when $L=2$, our notions of full positive and full negative cooperativity coincide with their notions of cooperativeness and competitiveness, respectively. The main difference is the scope of interaction captured. Our index summarizes the overall interaction structure among all coordinate processes, whereas the notions in \citet{VanEt24} emphasize selected configurations, typically when the sum process $(S_t)_{t\ge0}$ equals $0$ or a specified level $i$.

One could alternatively try to measure cooperativity from the marginal behavior of the sum process, motivated by the intuition that positively cooperative coordinates tend to evolve similarly. This suggests using the invariant distribution $\pi^\star$ of $(S_t)_{t\ge0}$. However, Proposition~\ref{bd_proc} gives
\[
\frac{\pi^\star_{j}}{\pi^\star_{i}}
= \prod_{k = i}^{j-1}\frac{(L-k)\lambda_k}{(k+1)\mu_{k+1}},
\quad 0 \le i < j \le L.
\]
Thus, $\pi^\star$ depends on the SDMC parameters only through ratios of aggregated birth and death rates. Cooperativity, by contrast, is characterized by pairwise relations among the conditional opening rates $\lambda_i$ and $\lambda_j$ and closing rates $\mu_i$ and $\mu_j$. Hence, $\pi^\star$ alone cannot capture the full range of cooperative structures allowed by SDMCs.

Dwell time distributions are also central in biophysical applications, particularly ion channel dynamics. Let the jump times of $(S_t)_{t\ge0}$ be $\tau_{n}:=\inf\{t\in [\tau_{n-1},\infty):S_t\neq S_{\tau_{n-1}}\},$
 $n\in \mathbb{N},$ and $\tau_0:=0,$
with the convention that the infimum of the empty set is $\infty$, and define the $n$-th holding time by
\(D_n:=\tau_{n+1}-\tau_n,\)
By the Markov property,
\[
D_n\mid S_{\tau_n}=s\;
\sim\;
\operatorname{Exp}\!\bigl(
(L-s)\lambda_s\,\mathbb{I}(s\in\{0,\dots,L-1\})
+s\mu_s\,\mathbb{I}(s\in\{1,\dots,L\})
\bigr).
\]
Thus, dwell times depend on the SDMC parameters only through the total rate of leaving each state, namely the scaled sum of the opening and closing rates. The full cooperativity structure of $(X_t)_{t\ge0}$ thus cannot be recovered from dwell times alone.

Consequently, the information-theoretic metrics of \citet{wawrzkiewicz2026information}, which rely on either marginal state or dwell time distributions, are not adequate for quantifying cooperativity among channels.

\section{Parameter estimation in hidden Markov models}
\label{s:estimation}

In many applications, the SDMC governing the system dynamics is not observable, and only a discrete, noisy signal related to the sum process is available. Thus neither the individual state transitions of the latent process nor the exact trajectory of the sum process can be recovered without uncertainty. This motivates a hidden Markov model in which the continuous-time sum process is latent and discrete-time observations arise through a noisy emission mechanism.

\subsection{Sum-dependent hidden Markov models}

\begin{model}[Sum-dependent hidden Markov model, SD-HMM]\label[model]{model:SDHMM}
Let $(X_t)_{t\ge0}$ be an~SDMC and $S_t=\s{X_t}$, $t\ge0$, its sum process, both evolving in continuous time and unobserved. Assume that the model parameters are strictly positive,
\begin{equation}\label{e:strict:positive}
\theta=(\lambda_0,\ldots,\lambda_{L-1},\mu_1,\ldots,\mu_L)
\in\Theta:=(0,\infty)^{2L},
\end{equation}
and an initial distribution $\pi_\theta$ for $S_0$ with full support,
$\pi_\theta(j)=\mathbb P(S_0=j)>0,$ $j\in \{0,\ldots,L\}.$
Real-valued observations $Y_k$ are collected at $t_k=(k-1)\delta$, $k\in\{1,\ldots,n\}$, for a fixed $\delta>0$. Conditional on $S_{t_k}=j$, $Y_k$ has density $g_\phi(\cdot\mid j)$ with respect to a $\sigma$-finite measure $\nu$ on $\mathbb R$, where $\phi\in\Phi\subseteq\mathbb R^d$. The observations are conditionally independent given the latent states,
\begin{equation}\label{HMM_model}
\mathbb P(Y_1,\ldots,Y_n\mid S_{t_1},\ldots,S_{t_n})
=\prod_{k=1}^n \mathbb P(Y_k\mid S_{t_k}).
\end{equation}
We call this model the \emph{sum-dependent hidden Markov model} (SD-HMM), with full parameter $\eta=(\theta,\phi)\in\Theta\times\Phi\subseteq\mathbb R^{2L+d}$.
\end{model}

Inference from discrete-time observations requires identifiability of the continuous-time rate matrix from the corresponding transition matrix. This is related to the classical embedding problem \citep{elfving1937theorie}, but the present setting is identifiable due to reversibility \citep{Jia2016} and the specific structure of SDMCs.

\begin{lemma}[Unique embedding]\label[lemma]{identifiability}
Let $(X_t)_{t\ge0}$ be an SDMC satisfying \eqref{e:strict:positive}. Then, for every $\delta>0$, the map
$\theta\mapsto e^{\delta R(\theta)}$,  $\theta\in\Theta$, is injective, with $R(\theta)$ the rate matrix of $S_t=\s{X_t}$ in~\eqref{Theorem2_eq}.
\end{lemma}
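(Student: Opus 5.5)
Suppose $\theta,\theta'\in\Theta$ satisfy $e^{\delta R(\theta)}=e^{\delta R(\theta')}=:P$. Write $R=R(\theta)$ and $R'=R(\theta')$. The plan is to show $R=R'$ and then conclude $\theta=\theta'$ by Theorem~\ref{Theorem2}\,(iii). Indeed, given $R$ one reads off $\lambda_s=r_{s,s+1}/(L-s)$ and $\mu_s=r_{s,s-1}/s$.

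\textbf{Symmetrization.} By Proposition~\ref{bd_proc}, both $R$ and $R'$ are irreducible birth--death generators, reversible with respect to $\pi^\star(\theta)$ and $\pi^\star(\theta')$, which are their unique invariant laws. Since $\pi^\star(\theta)P=\pi^\star(\theta)$ and $P$ has strictly positive entries, $\pi^\star(\theta)$ is the unique invariant law of $P$. The same holds for $\pi^\star(\theta')$, so $\pi^\star(\theta)=\pi^\star(\theta')=:\pi$. Let $D=\operatorname{diag}(\pi)^{1/2}$. Detailed balance makes $A=DRD^{-1}$ and $A'=DR'D^{-1}$ real symmetric, and $e^{\delta A}=e^{\delta A'}=DPD^{-1}=:M$ is symmetric positive definite.

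\textbf{Uniqueness of the logarithm.} Diagonalize $A=U\operatorname{diag}(a_i)U^\top$. Then $M=U\operatorname{diag}(e^{\delta a_i})U^\top$, and $A=\delta^{-1}\log M$, where $\log$ is the principal logarithm. This logarithm is defined spectrally: on each eigenspace of $M$ with eigenvalue $m>0$, it acts as multiplication by $\log m$. Because $x\mapsto e^{\delta x}$ is injective on $\mathbb{R}$, each eigenspace of $M$ for eigenvalue $m$ is exactly the eigenspace of $A$ for the eigenvalue $\delta^{-1}\log m$. The same reasoning applies to $A'$. Hence $A=A'$, so $R=D^{-1}AD=R'$, and Theorem~\ref{Theorem2} finishes the argument.

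\textbf{Main obstacle.} The delicate step is justifying that both logarithms coincide. For a general (non-symmetric) generator the matrix logarithm is non-unique, because complex eigenvalues can be shifted by multiples of $2\pi i/\delta$; this is the classical embedding problem. This ambiguity is ruled out by real spectrum, which comes from reversibility, as in \citet{Jia2016}. The key point to check carefully is that reversibility holds for both candidate generators with respect to the \emph{same} $\pi$, so that a single symmetrizer $D$ works for both. This is why I establish $\pi^\star(\theta)=\pi^\star(\theta')$ from $P$ first. Alternatively, one could work directly with $R$: it is diagonalizable with real eigenvalues, so $R=\delta^{-1}\log P$ is the unique real logarithm with real spectrum. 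The tridiagonal structure is then used only in the final read-off via Theorem~\ref{Theorem2}.
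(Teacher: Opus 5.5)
Your proof is correct and follows the same route as the paper: reversibility of the sum process (Proposition~\ref{bd_proc}) makes the reversible generator of $e^{\delta R(\theta)}$ unique, and Theorem~\ref{Theorem2}\ref{i:th2:3} then gives injectivity of $\theta\mapsto R(\theta)$. The only difference is that the paper cites \citet[Theorem~1]{Jia2016} for the uniqueness of the reversible logarithm, whereas you prove that step directly, via the common symmetrizer $\operatorname{diag}(\pi)^{1/2}$ and the spectral uniqueness of the real logarithm of a symmetric positive definite matrix.
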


\subsection{Estimation of model parameters}\label{ss:mle}

We estimate the parameters in SD-HMM (Model~\ref{model:SDHMM}) by maximum likelihood. By \eqref{HMM_model} and the Markov property of $(S_t)_{t\ge0}$, the likelihood of $(Y_1,\ldots,Y_n)$ under $\eta=(\theta,\phi)$ is
\begin{equation}\label{e:likelihood}
p_\eta(y_1,\ldots,y_n)
=\sum_{s_1=0}^L\cdots \sum_{s_n=0}^L
\left\{\pi_\theta(s_1)
\prod_{k=1}^{n-1}p_{s_k,s_{k+1}}(\theta)
\prod_{k=1}^n g_\phi(y_k\mid s_k)\right\}.
\end{equation}
The \emph{maximum likelihood estimator} is any maximizer
\begin{equation}\label{e:mle}
\hat \eta_n(y_1,\ldots,y_n)
\in \mathop{\arg \max}_{\eta\in\Theta\times\Phi}
p_\eta(y_1,\ldots,y_n).
\end{equation}
Let the true parameter of the SD-HMM be
\(
\eta_{\circ}=(\theta_{\circ},\phi_{\circ})\in\Theta^\circ\times\Phi^\circ,
\)
the interior of $\Theta \times \Phi$.
We impose the following conditions on the emission distributions.

\begin{assumption}[Emission distributions]\label{a:em}
\begin{enumerate}
\item  \label{HMM_consist_1}
For any weights $a_0,\ldots,a_L,a_0',\ldots,a_L'\ge 0$ with
$\sum_{i=0}^L a_i=\sum_{i=0}^L a_i'=1$, and any $\phi,\phi'\in \Phi^\circ$, if 
$\sum_{i=0}^{L} a_i\, g_\phi(y\mid i)
=
\sum_{i=0}^{L} a_i'\, g_{\phi'}(y\mid i)
$ for $\nu$-almost every $y\in\mathbb R$, then $(a_0,\ldots,a_L,\phi)=(a_0',\ldots,a_L',\phi')$.

\item \label{HMM_consist_2}
For every $y\in\mathbb R$ and $i\in\{0,\ldots,L\}$, $\phi\mapsto g_\phi(y\mid i)$ is continuous. If $\Phi$ is unbounded, then
\(
\lim_{\|\phi\|\to\infty} g_\phi(y\mid i)=0.
\)

\item \label{HMM_consist_3}
For every $i\in\{0,\ldots,L\}$,
\(
\limsup_{n\to\infty}
\mathbb E_{\eta_{\circ}}\!\left[\lvert\log g_{\phi_{\circ}}(Y_n\mid i)\rvert\right]<\infty.
\)

\item \label{HMM_consist_4}
For every $\phi\in\Phi$, there exists $\delta>0$ such that
\[
\limsup_{n\to\infty}\mathbb E_{\eta_{\circ}}\!\left[
\sup_{\substack{\phi'\in\Phi:\,\|\phi'-\phi\|<\delta\\ i\in\{0,\ldots,L\}}}
\max\!\left\{\log g_{\phi'}(Y_n\mid i),0\right\}
\right]<\infty.
\]
\end{enumerate}
\end{assumption}

Assumption~\ref{a:em}\ref{HMM_consist_1} is a slightly strengthened form of Condition~2 in \cite{LEROUX1992}, yielding full rather than equivalence-class identifiability of the emission parameters (Proposition~\ref{HMM_ident_prop}). Assumptions~\ref{a:em}\ref{HMM_consist_2}--\ref{HMM_consist_4} are standard regularity conditions controlling the likelihood and its stochastic fluctuations. Assumption~\ref{a:em} holds for common emission models, including Gaussian and more general exponential-family emissions.

\begin{theorem}[Consistency] \label{HMM_consistency}
Consider the SD-HMM (Model~\ref{model:SDHMM}) with parameter $\eta_{\circ}=(\theta_{\circ},\phi_{\circ})\in\Theta^\circ\times\Phi^\circ$, and let Assumption~\ref{a:em} hold. Then the maximum likelihood estimator in \eqref{e:mle} satisfies $\hat{\eta}_n\xrightarrow{\mathrm{a.s.}}\eta_{\circ}$ as $n\to\infty.$
\end{theorem}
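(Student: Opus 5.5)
The plan is to reduce the statement to Leroux's (1992) consistency theorem for maximum likelihood in finite-state hidden Markov models. To do so, I will re-parametrize the discretely sampled latent chain $(S_{t_k})_{k\ge1}$ by its transition matrix $P(\theta)=e^{\delta R(\theta)}$. Then I check Leroux's conditions one at a time.

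\emph{Ergodicity.} By Proposition~\ref{bd_proc}, $(S_t)$ is irreducible whenever $\theta\in\Theta$, since all rates are strictly positive. Consequently every entry of $e^{\delta R(\theta)}$ is strictly positive. The sampled chain is therefore irreducible and aperiodic, with the unique invariant law $\pi^\star$ given in Proposition~\ref{bd_proc}.

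\emph{Continuity and compactification.} The map $\theta\mapsto P(\theta)$ is continuous, and by Assumption~\ref{a:em}\ref{HMM_consist_2} so is $\phi\mapsto g_\phi$.

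\emph{Integrability.} Assumptions~\ref{a:em}\ref{HMM_consist_3}--\ref{HMM_consist_4} supply Leroux's integrability conditions. Because the initial law $\pi_\theta$ may differ from $\pi^\star$, I will either appeal to the non-stationary version of the theorem (hence the $\limsup_n$ form of these assumptions) or note that the chain is geometrically ergodic. In the latter case the initial distribution only enters through an $O(1)$ term in the log-likelihood, which vanishes after division by $n$.

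Leroux's theorem then gives almost sure convergence of $\hat\eta_n$ to the equivalence class of parameters that induce the same law of $(Y_k)$. The remaining work is to show that this class is the singleton $\{\eta_\circ\}$.

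\emph{Identifiability.} Suppose $(\theta,\phi)$ and $(\theta_\circ,\phi_\circ)$ induce the same stationary law of $(Y_k)$.
\begin{enumerate}
\item The one-dimensional marginal is $\sum_i\pi^\star_i g_\phi(\cdot\mid i)$. By Assumption~\ref{a:em}\ref{HMM_consist_1}, equality of these marginals gives $\phi=\phi_\circ$ and $\pi^\star(\theta)=\pi^\star(\theta_\circ)$.
\item For the bivariate law I extend the finite-mixture identifiability to the products $g_\phi(y_1\mid i)g_\phi(y_2\mid j)$. I integrate over $y_2\in A$ for arbitrary Borel sets $A$; for each $A$ the result is a mixture in $y_1$, which by part~\ref{HMM_consist_1} has uniquely determined weights $\sum_j \pi^\star_iP_{ij}\int_A g(\cdot\mid j)$. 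Letting $A$ vary, and again using part~\ref{HMM_consist_1} in the second coordinate, recovers each row $\pi^\star_iP_{ij}$. This is Proposition~\ref{HMM_ident_prop} as referenced.
\item Hence $P(\theta)=P(\theta_\circ)$. Lemma~\ref{identifiability} (unique embedding) then yields $\theta=\theta_\circ$, equivalently recovering $R(\theta)$ and, by Theorem~\ref{Theorem2}\ref{i:th2:3}, $\theta$ itself.
\end{enumerate}

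\emph{Main obstacle: non-compactness of $\Theta$.} I expect the hardest step to be that $\Theta=(0,\infty)^{2L}$ is not compact, whereas Leroux works on a compactified parameter space. I will compactify $\Theta$ through the set of attainable transition matrices $\{e^{\delta R(\theta)}\}$. The boundary consists of limits in which rates tend to $0$ or $\infty$; these give matrices with zero entries, or degenerate matrices such as a row concentrated at an absorbing state or limits of rank-one type. I need to show that any boundary point induces a law of $(Y_k)$ different from the true one. Then the Kullback--Leibler contrast is strictly negative in a neighbourhood of each such point, and Leroux's argument goes through.

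The idea for this step is that the true $P(\theta_\circ)$ has all entries positive and belongs to the (closed) set of embeddable reversible birth--death transition matrices. A boundary limit either has a zero entry, which contradicts the identification of $P$ obtained above, or corresponds to $\lambda_s$ or $\mu_s$ tending to $\infty$. In the infinite-rate case I plan to show that the limiting matrix has equal rows on a block, and therefore differs from $P(\theta_\circ)$, which is invertible since it is an exponential.

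Unbounded $\phi$ is handled by the vanishing condition in Assumption~\ref{a:em}\ref{HMM_consist_2}, exactly as in Leroux. Combining these pieces gives $\hat\eta_n\to\eta_\circ$ almost surely.
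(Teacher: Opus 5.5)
Your proposal is correct and follows the paper's route. You reduce to Leroux's (1992) Theorem~3 by checking his Conditions~1--6: irreducibility from strictly positive rates, continuity of $\theta\mapsto e^{\delta R(\theta)}$, and Assumption~\ref{a:em} for mixture identifiability and integrability. You then upgrade equivalence-class consistency to point consistency as in Proposition~\ref{HMM_ident_prop}, using the marginal and bivariate mixture identities followed by the unique-embedding Lemma~\ref{identifiability}.

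The paper leaves the non-compactness of $\Theta$ and the non-stationary start implicit in its appeal to Leroux, whereas you treat them explicitly. For the compactness step there is a shorter route than your boundary case analysis. The matrix $e^{\delta R(\theta)}$ has positive real spectrum, and its principal logarithm returns $\delta R(\theta)$ and is continuous at $e^{\delta R(\theta_\circ)}$. Hence convergence of the transition matrices to $e^{\delta R(\theta_\circ)}$ already forces $\hat\theta_n\to\theta_\circ$, and no boundary limit can coincide with the true matrix.
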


We next establish asymptotic normality. Following \citet{Bickel98}, define the Fisher information matrix at $\eta$ by
\begin{equation}\label{e:fisher}
\mathcal I_\eta=\mathbb E_\eta[\zeta_1\zeta_1^\intercal],
\end{equation}
where
\(
\zeta_1=\lim_{n\to\infty}\nabla_\eta
\log p_\eta(Y_n\mid Y_1,\ldots,Y_{n-1}),
\)
and $p_\eta(Y_n\mid Y_1,\ldots,Y_{n-1})$ is the conditional density under $\eta\in\Theta\times\Phi$. We assume the following local regularity.

\begin{assumption}[Local regularity of emission distributions]\label{a:reg:em2}
There exists $\varepsilon>0$ such that $\{\phi:\norm{\phi-\phi_{\circ}}<\varepsilon\}\subseteq\Phi$ and the following conditions hold:
\begin{enumerate}
\item \label{HMM_consist_6}
For all $y\in\mathbb R$ and $s\in\{0,\ldots,L\}$, $\phi\mapsto g_\phi(y\mid s)$ is twice continuously differentiable whenever $\norm{\phi-\phi_{\circ}}<\varepsilon$.

\item \label{HMM_consist_7}
For all $s\in\{0,\ldots,L\}$ and $k,k'\in\{1,\ldots,d\}$,
\[
\mathbb E_{\eta_{\circ}}\!\left[
\sup_{\phi:\norm{\phi-\phi_{\circ}}<\varepsilon}
\left|\frac{\partial}{\partial\phi_k}\log g_\phi(Y_1\mid s)\right|^2
\right]<\infty,\quad
\mathbb E_{\eta_{\circ}}\!\left[
\sup_{\phi:\norm{\phi-\phi_{\circ}}<\varepsilon}
\left|\frac{\partial^2}{\partial\phi_k\partial\phi_{k'}}
\log g_\phi(Y_1\mid s)\right|
\right]<\infty,
\]
and, for $l\in\{1,2\}$, any $1\le k_a\le d$, $a=1,\ldots,l$, and all $s\in\{0,\ldots,L\}$,
\[
\int_{\mathbb R}
\sup_{\phi:\norm{\phi-\phi_{\circ}}<\varepsilon}
\left|
\frac{\partial^l}{\partial\phi_{k_1}\cdots\partial\phi_{k_l}}
g_\phi(y\mid s)
\right|\,\mathrm dy<\infty.
\]

\item \label{HMM_consist_8}
For all $s\in\{0,\ldots,L\}$, with $0/0=0$,
\[
\mathbb P_{\eta_{\circ}}\!\left(
\sup_{\phi:\norm{\phi-\phi_{\circ}}<\varepsilon}
\max_{0\le i,j\le L}
\frac{g_\phi(Y_1\mid i)}{g_\phi(Y_1\mid j)}
=\infty
\;\bigg|\; S_{t_1}=s
\right)<1.
\]
\end{enumerate}
\end{assumption}

Assumptions~\ref{a:reg:em2}\ref{HMM_consist_6}--\ref{HMM_consist_7}~are standard regularity conditions typical for the asymptotic analysis of maximum likelihood estimators. These conditions concern only the emission densities, as differentiability of the transition probabilities and the invariant distribution with respect to parameters follows under the SD-HMM. Assumption~\ref{a:reg:em2}\ref{HMM_consist_8}~rules out the case where the supports of $g_\phi(y\mid 0),\dots, g_\phi(y\mid L)$ are disjoint. This technical restriction could be relaxed, as disjoint supports increase information in distinguishing latent states, see \citet{Bickel98}.

\begin{theorem}[Asymptotic normality]
\label[theorem]{HMM_asymptotic_normal}
Under the SD-HMM (\Cref{model:SDHMM}) with parameter $\eta_{\circ}=(\theta_{\circ},\phi_{\circ})\in\Theta^\circ\times\Phi^\circ$, assume that $(X_t)_{t\ge0}$, equivalently $S_t=\s{X_t}$, is stationary. Let Assumption~\ref{a:reg:em2}~hold and let $\hat\eta_n$ be a maximum likelihood estimator satisfying $\hat\eta_n\xrightarrow{\mathrm{a.s.}}\eta_\circ$. If $\mathcal I_{\eta_{\circ}}$ in \eqref{e:fisher} is nonsingular, then $\sqrt n(\hat\eta_n-\eta_{\circ})
\xrightarrow{D}\mathcal N(0,\mathcal I_{\eta_{\circ}}^{-1})$ as $ n\to\infty.$
\end{theorem}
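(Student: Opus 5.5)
The plan is to reduce the statement to the general asymptotic normality theorem for maximum likelihood estimators in finite-state hidden Markov models of \citet{Bickel98}. Their theorem delivers $\sqrt n(\hat\eta_n-\eta_\circ)\xrightarrow{D}\mathcal N(0,\mathcal I_{\eta_\circ}^{-1})$ for any consistent MLE, given a list of conditions.
\begin{itemize}
\item[(A1)] The hidden chain is stationary and ergodic, with a transition matrix whose entries are strictly positive.
\item[(A2)] The transition probabilities are twice continuously differentiable in the parameter near the truth.
\item[(A3)] The emission densities satisfy the moment and integrability bounds on their first and second derivatives.
\item[(A4)] The ratio condition on the emission densities holds.
\item[(A5)] The Fisher information is nonsingular.
\end{itemize}
The work therefore splits into a transition-side part and an emission-side part. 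The transition-side part is specific to the SD-HMM. On the emission side, Assumption~\ref{a:reg:em2}\ref{HMM_consist_6}--\ref{HMM_consist_8} is essentially a verbatim transcription of Bickel et al.'s conditions (A3)--(A4) for the emission family, so those are immediate. Condition (A5) is assumed in the statement.

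\textbf{Step 1: the discretely sampled chain.} The observed latent chain is the skeleton $S_{t_k}$ with transition matrix $P(\theta)=e^{\delta R(\theta)}$, where $R(\theta)$ is given in \eqref{Theorem2_eq}. For $\theta\in\Theta=(0,\infty)^{2L}$, Proposition~\ref{bd_proc} shows that $(S_t)$ is an irreducible birth--death process. For any $\delta>0$, irreducibility of the continuous-time chain gives $p_{s,s'}(\theta)>0$ for all $s,s'$.

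To see this, write $e^{\delta R}=e^{-c\delta}e^{\delta(R+cI)}$ with $c$ large enough that $R+cI$ is nonnegative. Expanding the exponential as a power series, the $|s-s'|$-th term is positive along the tridiagonal path from $s$ to $s'$.

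Strict positivity of $P$ gives uniform ergodicity of the discretized chain. It also gives positivity of the stationary law, which is $\pi^\star$ from Proposition~\ref{bd_proc} because $\pi^\star$ is invariant for the semigroup. Stationarity is assumed, so (A1) holds. By continuity, the lower bound $\min_{s,s'}p_{s,s'}(\theta)\ge\epsilon>0$ persists on a neighbourhood of $\theta_\circ$, which Bickel et al.\ require locally uniformly.

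\textbf{Step 2: smoothness of the transition side.} The map $\theta\mapsto R(\theta)$ is linear, and the matrix exponential is real-analytic. Hence $\theta\mapsto P(\theta)$ is $C^\infty$, which gives (A2). The invariant distribution $\pi^\star(\theta)$ is a rational function of $\theta$ with positive denominator on $\Theta$, so it is smooth as well. Because $\eta_\circ$ lies in the interior, a small ball around it lies in $\Theta\times\Phi$, as the theorem requires.

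\textbf{Step 3: applying the theorem.} Combining Steps 1 and 2 with Assumption~\ref{a:reg:em2}, the hypotheses of Bickel et al.'s Theorem~1 hold. These hypotheses are: the score central limit theorem for $\zeta_1$ via the forgetting properties of the filter; the uniform law of large numbers for the observed information; and the identity of its limit with $\mathcal I_{\eta_\circ}$ in \eqref{e:fisher}. With the assumed a.s.\ consistency of $\hat\eta_n$ (for instance from Theorem~\ref{HMM_consistency}), the standard Taylor expansion of the score around $\eta_\circ$ then gives the stated limit, since $\mathcal I_{\eta_\circ}$ is nonsingular.

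\textbf{Main obstacle.} I expect the main difficulty to be matching parametrizations. Bickel et al.\ state their conditions for the transition matrix entries themselves, or for a generic smooth parametrization of them. Here the parameter is $\theta$, entering through $e^{\delta R(\theta)}$. One must check that the chain-rule derivatives of $\log p_{s,s'}(\theta)$ are bounded near $\theta_\circ$, which follows from Step 1's uniform positivity together with the smoothness in Step 2. One must also check that no degeneracy is hidden in the reparametrization. This is where Lemma~\ref{identifiability} (unique embedding) matters: it makes $\theta$ a genuine coordinate system for the transition law, although nonsingularity of $\mathcal I_{\eta_\circ}$ is in any case taken as an assumption.
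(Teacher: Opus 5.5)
Your proposal is correct and follows the paper's route. You reduce to Theorem~1 of \citet{Bickel98}, get ergodicity of the sampled chain from stationarity plus irreducibility of the birth--death sum process, obtain (A2) from smoothness of $\theta\mapsto e^{\delta R(\theta)}$ together with the emission smoothness, read the emission conditions off Assumption~\ref{a:reg:em2}, and treat consistency and nonsingularity of $\mathcal I_{\eta_\circ}$ as assumed. The paper cites \citet[Lemma~1]{LEROUX1992} for ergodicity where you instead verify entrywise positivity of $e^{\delta R(\theta)}$ directly, and one presentational fix is needed: the score CLT and the convergence of the observed information are conclusions of Bickel et al.'s lemmas under (A1)--(A4), not hypotheses you must check.
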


\begin{remark}
The stationarity assumption in Theorem~\ref{HMM_asymptotic_normal} can be relaxed. Let $\Theta\times\Phi$ be compact, which is mild given strong consistency of $\hat\eta_n$, and allow an initial distribution $\pi=\delta_{s_0}$ for arbitrary $s_0\in\{0,\ldots,L\}$. Suppose also that: (i) $\phi\mapsto g_\phi(y\mid s)$ is continuous on $\Phi$ for all $y\in\mathbb R$ and $s\in\{0,\ldots,L\}$; (ii) for every $y\in\mathbb R$ there exists $s$ such that $g_\phi(y\mid s)>0$ for all $\phi\in\Phi$; and (iii), for each $s$,
$\sup_{\phi\in\Phi,\,y\in\mathbb R} g_\phi(y\mid s)<\infty$ and $\mathbb E_{\eta_{\circ}}\!\left[
\abs{\log\left(\inf_{\phi\in\Phi}\sum_{s=0}^L g_\phi(Y_0\mid s)\right)}
\right]<\infty.$
Together with Assumptions~\ref{a:reg:em2}\ref{HMM_consist_6}--\ref{a:reg:em2}\ref{HMM_consist_7}, these conditions imply asymptotic normality of $\hat\eta_n$, see \citet[Theorem~12.5.7]{CappeMoulinesRyden2005} and \cite{DoMa01}.
\end{remark}

We next give an example that will be revisited in the data analysis of Section~\ref{section8}.

\begin{example}[Voltage-clamp measurements of ion channels]
\label{HMM_example2}
Hidden Markov models are standard for ion channel data (\citealp{FredkinRice1992}). The number of open channels $S_t$ is not directly observable, whereas a noisy conductance signal is recorded. For a sampling interval $\delta>0$ and $t_k=(k-1)\delta$, we observe $Y_k=b+aS_{t_k}+\varepsilon_k,$ $k=1,\ldots,n$. Here $b\in\mathbb R$ is a baseline offset, $a>0$ is the single channel conductance, and the errors $(\varepsilon_k)_{k=1}^n$ are conditionally independent given $(S_{t_k})_{k=1}^n$, with the conditional distribution of $\varepsilon_k$ depending on the latent path only through $S_{t_k}$. To model open-channel noise \citep{Sigworth1985}, this distribution may depend on $S_{t_k}$, for example,
\(
\varepsilon_k\mid S_{t_k}\sim\mathcal N(0,\sigma_{S_{t_k}}^2).
\)
Then the emission parameter is $\phi=(b,a,\sigma_0,\ldots,\sigma_L)$, with $a\neq0$ and $\sigma_0,\ldots,\sigma_L>0$. The conditions of Theorems~\ref{HMM_consistency} and~\ref{HMM_asymptotic_normal} hold for this model, see Supplement~\ref{verify}.
\end{example}

The SD-HMM can be extended by relaxing the parametric emission model and using nonparametric alternatives, to improves robustness to diverse noise structures. One approach characterizes the latent sum process through emission medians rather than means, as in \citet{Requadt2025} for a related discrete-time coupled Markov model. A full treatment is beyond the scope of this paper, but presents a promising direction for future research. 

\subsection{Testing of cooperativity}\label{sec:testing}

We further develop a test for cooperativity (\Cref{cooperativity}) within the SD-HMM (\Cref{model:SDHMM}). Let
\(
\theta_{\circ}=(\theta_{\circ,i})_{i=1}^{2L}
=(\lambda_0,\ldots,\lambda_{L-1},\mu_1,\ldots,\mu_L)
\)
be the parameter of the hidden SDMC. We test the pairwise hypotheses $H_{0,(i,j)}:\theta_{\circ,i}=\theta_{\circ,j}$ versus $H_{1,(i,j)}:\theta_{\circ,i}\neq\theta_{\circ,j}$ for 
\begin{equation}\label{eq:index:hyp}
(i,j)\in\mathcal H
:=\{(i,j):1\le j<i\le L\text{ or }L+1\le i<j\le2L\}.
\end{equation}
The intersection of all null hypotheses corresponds to independence of the individual coordinate processes of the hidden SDMC.

Let $\hat\theta_n$ be the first $2L$ components of the maximum likelihood estimator $\hat\eta_n$ in \eqref{e:mle}. By \Cref{HMM_asymptotic_normal},
\(
\sqrt n(\hat\theta_n-\theta_{\circ})
\xrightarrow{D}
\mathcal N\!\left(0,\,
(\mathcal I_{\eta_\circ}^{-1})_{\{1,\ldots,2L\},\{1,\ldots,2L\}}\right).
\)
The limiting covariance matrix is unknown, but can be consistently estimated. Following \citet{Bickel98}, define
\[
\hat{\mathcal I}_n
=-\frac1n\nabla_\eta^2\log p_\eta(Y_1,\ldots,Y_n)\big|_{\eta=\hat\eta_n},
\qquad
\hat\Sigma_n
=\bigl(\hat{\mathcal I}_n^\dagger\bigr)_{\{1,\ldots,2L\},\{1,\ldots,2L\}},
\]
where $\dagger$ denotes the Moore--Penrose pseudoinverse. Under the assumptions of \Cref{HMM_asymptotic_normal}, $\hat{\mathcal I}_n\xrightarrow{\mathbb{P}}\mathcal I_{\eta_\circ}$, which is nonsingular. Thus, with probability tending to one, $\hat{\mathcal I}_n^\dagger=\hat{\mathcal I}_n^{-1}$ and $\hat\Sigma_n$ consistently estimates
\(
\Sigma_\circ:=
(\mathcal I_{\eta_\circ}^{-1})_{\{1,\ldots,2L\},\{1,\ldots,2L\}}.
\)
For $(i,j)\in\mathcal H$, define the test statistic $\hat T_{n,(i,j)}={\sqrt n(\hat\theta_{n,i}-\hat\theta_{n,j})}/{\hat s_{n,(i,j)}}$ whenever $ s_{n,(i,j)}>0$, and set to zero otherwise. Here  $\hat s_{n,(i,j)}
=
\sqrt{(\hat\Sigma_n)_{i,i}-2(\hat\Sigma_n)_{i,j}+(\hat\Sigma_n)_{j,j}}$.  
To tackle multiplicity, we adopt the general stepdown procedure of \citet{RoWo05}. For nonempty $\mathcal K\subseteq\mathcal H$, let $(\xi_k)_{k=1}^{2L}\sim\mathcal N(0,\hat\Sigma_n)$ and define
\begin{equation}\label{eq:thd:joint}
\hat c_{n,\alpha}(\mathcal K)=\inf\left\{t\ge 0:\prob{\max_{(i,j)\in\mathcal K,\, \hat s_{n,(i,j)}>0}\frac{\lvert \xi_i - \xi_j\rvert}{\hat s_{n,(i,j)}} \le t }\ge 1-\alpha\right\}.
\end{equation} 
The map $\mathcal K\mapsto\hat c_{n,\alpha}(\mathcal K)$ is monotone:
$\hat c_{n,\alpha}(\mathcal K_1)\le \hat c_{n,\alpha}(\mathcal K_2)$ whenever
$\emptyset\neq\mathcal K_1\subseteq\mathcal K_2\subseteq\mathcal H$. This ensures asymptotic family-wise error-rate control in the stepdown framework. We call the resulting procedure the Stepdown Cooperativity Test (SCoT), see \Cref{alg:stepdown}.

\begin{algorithm}
\caption{SCoT: Stepdown Cooperativity Test}
\label{alg:stepdown}
\begin{algorithmic}[1]
\Require Test statistics $\hat{T}_{n,(i,j)}$ for $(i,j)\in \mathcal{H}$ in \eqref{eq:index:hyp}, significance level $\alpha \in (0,1)$
\State Initialize $\widehat{\mathcal H}_{1,n} \gets \emptyset$, \; $\mathcal K \gets \mathcal H$
\While{$\mathcal K \neq \emptyset$}
    \State Compute the critical value $\hat c_{n,\alpha}(\mathcal K)$ from \eqref{eq:thd:joint}
    \State Update $\mathcal R \gets \left\{(i,j)\in \mathcal K : \bigl|\hat{T}_{n,(i,j)} \bigr|> \hat c_{n,\alpha}(\mathcal K)\right\}$
    \If{$\mathcal R = \emptyset$}
        \State \textbf{break}
    \Else
        \State Update $\widehat{\mathcal H}_{1,n} \gets \widehat{\mathcal H}_{1,n} \cup \mathcal R$, \;  $\mathcal K \gets \mathcal K \setminus \mathcal R$
    \EndIf
\EndWhile
\State \Return the rejection set  $\widehat{\mathcal H}_{1,n}$
\end{algorithmic}
\end{algorithm}

\begin{theorem}[Strong control of the family-wise error rate]\label{hypothesis_test_main_thm}
Fix $\alpha\in(0,1)$ and define
$\mathcal H_0=\{(i,j)\in\mathcal H:\theta_{\circ,i}=\theta_{\circ,j}\}$, $\mathcal H_1=\{(i,j)\in\mathcal H:\theta_{\circ,i}\neq\theta_{\circ,j}\}.$
Under the assumptions of \Cref{HMM_asymptotic_normal}, the rejection set $\widehat{\mathcal H}_{1,n}$ returned by \Cref{alg:stepdown} satisfies:
\begin{enumerate}
\item \label{test_proof1}
$\displaystyle \lim_{n\to\infty}
\mathbb P\{\mathcal H_1\subseteq\widehat{\mathcal H}_{1,n}\}=1$, and
\item \label{test_proof2}
$\displaystyle \limsup_{n\to\infty}
\mathbb P\{\mathcal H_0\cap\widehat{\mathcal H}_{1,n}\neq\emptyset\}\le\alpha$.
\end{enumerate}
\end{theorem}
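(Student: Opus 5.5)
The plan is to follow the standard Romano--Wolf stepdown argument. Its inputs are the joint asymptotic normality of $\hat\theta_n$ from \Cref{HMM_asymptotic_normal} and the consistency $\hat\Sigma_n\xrightarrow{\mathbb P}\Sigma_\circ$ stated above.

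First, the preliminaries. Since $\mathcal I_{\eta_\circ}$ is nonsingular, $\Sigma_\circ$ is positive definite, being a principal submatrix of a positive definite inverse. Hence for every $(i,j)\in\mathcal H$ we have $s_{\circ,(i,j)}^2:=(\Sigma_\circ)_{ii}-2(\Sigma_\circ)_{ij}+(\Sigma_\circ)_{jj}>0$. It follows that $\hat s_{n,(i,j)}\xrightarrow{\mathbb P}s_{\circ,(i,j)}>0$, and with probability tending to one all $\hat s_{n,(i,j)}>0$. On that event, for each nonempty $\mathcal K$, the random vector $\bigl((\xi_i-\xi_j)/\hat s_{n,(i,j)}\bigr)_{(i,j)\in\mathcal K}$, conditionally on the data, has a Gaussian law whose covariance converges in probability to that of $W_{\mathcal K}:=\bigl((\xi^\circ_i-\xi^\circ_j)/s_{\circ,(i,j)}\bigr)_{(i,j)\in\mathcal K}$ with $\xi^\circ\sim\mathcal N(0,\Sigma_\circ)$. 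The law of $\max_{\mathcal K}|W_{\mathcal K}|$ is continuous and strictly increasing on $(0,\infty)$, because it is the maximum of nondegenerate Gaussian absolute values. Therefore its $(1-\alpha)$-quantile $c_\alpha(\mathcal K)$ is well defined and $\hat c_{n,\alpha}(\mathcal K)\xrightarrow{\mathbb P}c_\alpha(\mathcal K)<\infty$ for each of the finitely many $\mathcal K$. In particular $\max_{\mathcal K}\hat c_{n,\alpha}(\mathcal K)=\hat c_{n,\alpha}(\mathcal H)=O_{\mathbb P}(1)$ by the monotonicity noted before \Cref{alg:stepdown}.

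Power (part~\ref{test_proof1}). For $(i,j)\in\mathcal H_1$, write $\hat T_{n,(i,j)}=\sqrt n(\theta_{\circ,i}-\theta_{\circ,j})/\hat s_{n,(i,j)}+O_{\mathbb P}(1)$. Its absolute value diverges in probability, so $\mathbb P\{\min_{\mathcal H_1}|\hat T_{n,(i,j)}|>\hat c_{n,\alpha}(\mathcal H)\}\to1$. On this event every alternative is rejected in the first step: the first step uses $\mathcal K=\mathcal H$, and any later critical value is no larger. Hence $\mathcal H_1\subseteq\widehat{\mathcal H}_{1,n}$ with probability tending to one.

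FWER (part~\ref{test_proof2}). This uses the usual stepdown observation, and it is deterministic. Suppose some true null is rejected. Consider the first step at which this happens, with current set $\mathcal K$. Since no null has been removed yet, $\mathcal H_0\subseteq\mathcal K$. By monotonicity, $\max_{\mathcal H_0}|\hat T_{n,(i,j)}|>\hat c_{n,\alpha}(\mathcal K)\ge\hat c_{n,\alpha}(\mathcal H_0)$. If $\mathcal H_0=\emptyset$ there is nothing to show. Otherwise the bound reduces to showing
\[
\limsup_{n\to\infty}\mathbb P\Bigl\{\max_{(i,j)\in\mathcal H_0}|\hat T_{n,(i,j)}|>\hat c_{n,\alpha}(\mathcal H_0)\Bigr\}\le\alpha .
\]
Under the null, $\hat T_{n,(i,j)}=\sqrt n\bigl((\hat\theta_{n,i}-\theta_{\circ,i})-(\hat\theta_{n,j}-\theta_{\circ,j})\bigr)/\hat s_{n,(i,j)}$. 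By \Cref{HMM_asymptotic_normal}, Slutsky and the continuous mapping theorem, $\max_{\mathcal H_0}|\hat T_{n,(i,j)}|\xrightarrow{D}\max_{\mathcal H_0}|W_{\mathcal H_0}|$. Combined with $\hat c_{n,\alpha}(\mathcal H_0)\xrightarrow{\mathbb P}c_\alpha(\mathcal H_0)$, and using that $c_\alpha(\mathcal H_0)$ is a continuity point of the limit law, the probability converges to exactly $\alpha$.

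The main obstacle is the quantile-consistency step $\hat c_{n,\alpha}(\mathcal K)\xrightarrow{\mathbb P}c_\alpha(\mathcal K)$. It requires that the conditional Gaussian maxima distributions converge uniformly, which holds by Pólya's theorem since the limit cdf is continuous. It also requires that the limit cdf is strictly increasing at its $(1-\alpha)$-quantile, so that quantiles are continuous functionals. Both follow from the positive definiteness of $\Sigma_\circ$, which is where the nonsingularity of $\mathcal I_{\eta_\circ}$ enters. On the event where $\hat I_n$ is singular or some $\hat s_{n,(i,j)}=0$, the definitions switch to the pseudoinverse or to $\hat T=0$; this event has vanishing probability and does not affect the limits.
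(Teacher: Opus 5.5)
Your proposal is correct and follows essentially the same route as the paper: part (i) is the same first-step argument, combining $|\hat T_{n,(i,j)}|\xrightarrow{\mathbb P}\infty$ on $\mathcal H_1$ with $\max_{\mathcal K}\hat c_{n,\alpha}(\mathcal K)=O_{\mathbb P}(1)$, which follows from quantile consistency. In part (ii) the paper verifies Assumption~A1, the monotonicity condition and condition~(31) of Romano--Wolf's Theorem~4(b) and then cites that theorem, whereas you carry out its deterministic reduction to $\mathcal H_0$ and the ensuing limit argument yourself; this is the same proof written out in self-contained form.
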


The SCoT procedure yields a test-based estimator of the cooperativity index (\Cref{def_coop_index}).

\begin{cor}\label[cor]{test_proof3}
For $\alpha\in(0,1)$, define
\begin{equation}\label{e:estviatest}
\hat\Lambda_{n,\alpha}
=
\frac{1}{L(L-1)}
\sum_{(i,j)\in\mathcal H}
\mathbb I\{(i,j)\in\widehat{\mathcal H}_{1,n}\}
\,\mathrm{sign}(\hat\theta_{n,i}-\hat\theta_{n,j}).
\end{equation}
Let $\Lambda(\theta_\circ)$ be the true cooperativity index. Under the assumptions of \Cref{hypothesis_test_main_thm}, it holds that
$\liminf_{n\to\infty}
\mathbb P\{\hat\Lambda_{n,\alpha}=\Lambda(\theta_\circ)\}
\ge1-\alpha.$
Moreover, if $\mathcal H_0=\emptyset$, then
\(
\lim_{n\to\infty}
\mathbb P\{\hat\Lambda_{n,\alpha}=\Lambda(\theta_\circ)\}=1.
\)
\end{cor}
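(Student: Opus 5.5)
The plan is to work on the event where the SCoT procedure makes no errors and show that $\hat\Lambda_{n,\alpha}$ equals $\Lambda(\theta_\circ)$ there. Let
\[
E_n:=\{\mathcal H_1\subseteq\widehat{\mathcal H}_{1,n}\}\cap\{\mathcal H_0\cap\widehat{\mathcal H}_{1,n}=\emptyset\}.
\]
On $E_n$ we have $\widehat{\mathcal H}_{1,n}=\mathcal H_1$. By \Cref{hypothesis_test_main_thm} and a union bound,
\[
\liminf_n\mathbb P(E_n)\ge 1-\limsup_n\mathbb P\{\mathcal H_1\not\subseteq\widehat{\mathcal H}_{1,n}\}-\limsup_n\mathbb P\{\mathcal H_0\cap\widehat{\mathcal H}_{1,n}\neq\emptyset\}\ge 1-0-\alpha .
\]
If $\mathcal H_0=\emptyset$, the second event is empty, so $\mathbb P(E_n)\to1$.

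Next I rewrite $\Lambda(\theta_\circ)$ as a sum over $\mathcal H$. The pairs $(i,j)$ with $1\le j<i\le L$ correspond to $\sign(\lambda_{r-1}-\lambda_{s-1})$ with $i=r$, $j=s$. The pairs with $L+1\le i<j\le 2L$ correspond to $\sign(\mu_s-\mu_r)$ with $i=L+s$, $j=L+r$. Hence
\[
\Lambda(\theta_\circ)=\frac1{L(L-1)}\sum_{(i,j)\in\mathcal H}\sign(\theta_{\circ,i}-\theta_{\circ,j}).
\]
Pairs in $\mathcal H_0$ contribute zero, so the sum may be restricted to $\mathcal H_1$. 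On $E_n$, the estimator \eqref{e:estviatest} is likewise a sum over exactly $\mathcal H_1$. It therefore remains to show that the signs agree:
\[
\sign(\hat\theta_{n,i}-\hat\theta_{n,j})=\sign(\theta_{\circ,i}-\theta_{\circ,j})\quad\text{for all }(i,j)\in\mathcal H_1 .
\]

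For this, let $F_n$ be the event that every pair in $\mathcal H_1$ has the correct sign. Since $\hat\theta_n\to\theta_\circ$ (a.s. by \Cref{HMM_consistency}, or in probability via the $\sqrt n$-rate of \Cref{HMM_asymptotic_normal}), and $\mathcal H_1$ is finite with $|\theta_{\circ,i}-\theta_{\circ,j}|>0$ on it, we get $\mathbb P(F_n)\to1$. Consequently $\mathbb P(E_n\cap F_n)\ge \mathbb P(E_n)-\mathbb P(F_n^c)$. This gives
\[
\liminf_n\mathbb P\{\hat\Lambda_{n,\alpha}=\Lambda(\theta_\circ)\}\ge 1-\alpha,
\]
and the limit equals $1$ when $\mathcal H_0=\emptyset$.

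The only delicate step is the index bookkeeping that matches $\mathcal H$, with its orientations $j<i$ for $\lambda$ and $i<j$ for $\mu$, to the signs in \Cref{def_coop_index}. I would check it directly against the two summation ranges. The limit statements themselves are a routine combination of \Cref{hypothesis_test_main_thm} with consistency.
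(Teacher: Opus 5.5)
Your proof is correct and follows essentially the same route as the paper's. The paper intersects $\{\mathcal H_1\subseteq\widehat{\mathcal H}_{1,n}\}$, $\{\mathcal H_0\cap\widehat{\mathcal H}_{1,n}=\emptyset\}$ and the correct-sign event on $\mathcal H_1$, shows $\hat\Lambda_{n,\alpha}=\Lambda(\theta_\circ)$ on that intersection, and concludes with a union bound using Theorem~\ref{hypothesis_test_main_thm} and consistency. Your explicit check that $\Lambda(\theta_\circ)=\frac{1}{L(L-1)}\sum_{(i,j)\in\mathcal H}\sign(\theta_{\circ,i}-\theta_{\circ,j})$ is right and spells out a step the paper leaves implicit; one small correction is that consistency of $\hat\theta_n$ is already a hypothesis of Theorem~\ref{HMM_asymptotic_normal}, so you need not cite Theorem~\ref{HMM_consistency}, whose emission assumptions are not in force here.
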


Recall that $ \Lambda(\theta_\circ) = 0$ corresponds either to independence or to cancellation of positive and negative dependencies (cf.~\Cref{prop_rel_null_coop_ind}). The rejection set resolves this ambiguity. Estimated independence corresponds to
\(
\hat\Lambda_{n,\alpha} = 0\) and 
\(
\widehat{\mathcal H}_{1,n} = \emptyset,
\)
whereas
\(
\hat\Lambda_{n,\alpha} = 0
\) and  \(
\widehat{\mathcal H}_{1,n} \neq \emptyset
\)
indicates cancellation effects. This distinction is unavailable from the point estimator in \eqref{coop_index_adjusted}.

\section{Simulations and data applications}\label{section8}

We assess the finite-sample performance of the proposed methodology through simulations and applications to voltage-clamp recordings of ion channels.

\subsection{Asymptotic normality}\label{section8.3}

We first examine the weak convergence of the maximum likelihood estimator from \Cref{ss:mle} under the SD-HMM (\Cref{model:SDHMM}). Observations are generated as in Example~\ref{HMM_example2}, with emission parameters $b=0$, $a=1$, and $\sigma_0=\cdots=\sigma_L=0.1$. We set $L=2$ and let $(S_t)_{t\ge0}$ be the sum process induced by an SDMC (\Cref{def_rate}) with
\(
\theta_\circ=(3,4,4,3).
\)
The sampling interval is fixed at $\delta=0.05$, with observation times $t_k=(k-1)\delta$, $k=1,\ldots,n$. Thus the true joint parameter is
\(
\eta_\circ
=(\lambda_0,\lambda_1,\mu_1,\mu_2,b,a,\sigma_0,\sigma_1,\sigma_2)
=(3,4,4,3,0,1,0.1,0.1,0.1)
\in(0,\infty)^9 .
\)
By \Cref{HMM_asymptotic_normal},
\(
\sqrt{n}(\hat{\eta}_n-\eta_{\circ})
\xrightarrow{D}
\mathcal{N}(0,\mathcal{I}^{-1}_{\eta_{\circ}}).
\)
We evaluate this asymptotic behavior using $5000$ Monte Carlo repetitions for each $n\in\{50,100,1000\}$. \Cref{fig:MLE_sim} shows QQ-plots of 
\[
\sqrt{
\frac{n}{(\mathcal{I}^{-1}_{\eta_{\circ}})_{j,j}}
}
(\hat{\eta}_{n,j}-\eta_{\circ,j})
\]
against the standard normal distribution. The empirical distributions move toward normality as $n$ increases, consistent with \Cref{HMM_asymptotic_normal}.

\begin{figure}
    \centering
    \includegraphics[width=0.9\linewidth]{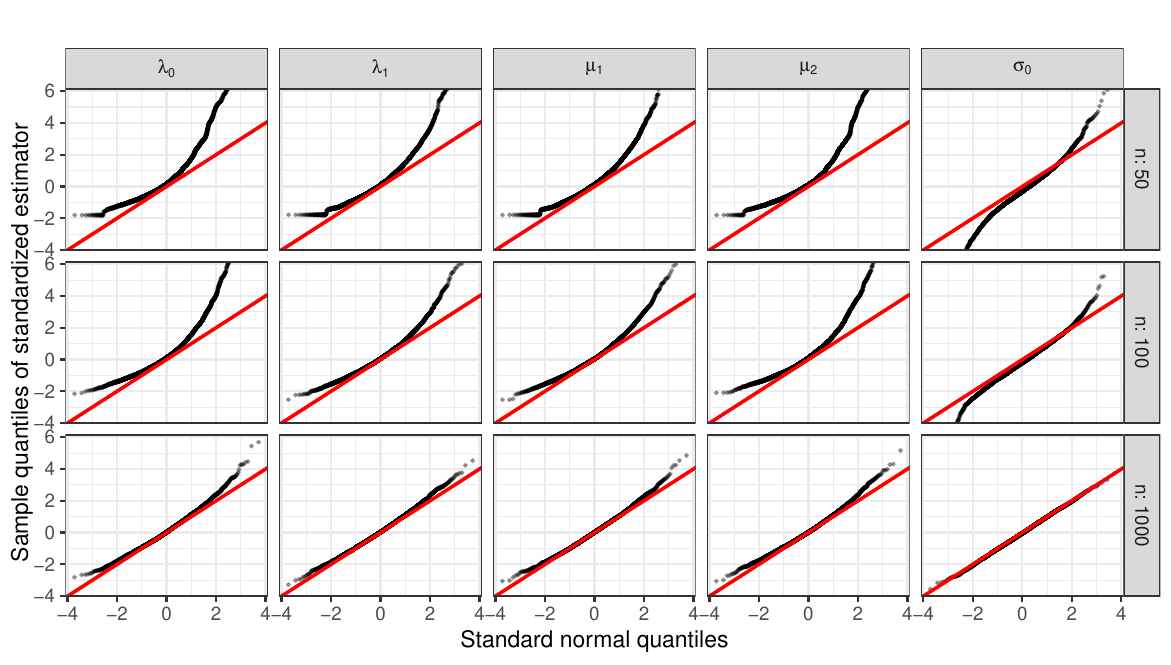}
    \caption{QQ-plots of the standardized estimators
    $n^{1/2}(\hat \eta_{n,j}-\eta_{\circ,j}) / (\mathcal{I}^{-1}_{\eta_\circ})_{j,j}^{1/2}$,
    based on $5000$ Monte Carlo repetitions, compared with the standard normal distribution.
    To save space, we show only $j\in\{1,2,3,4,7\}$ from left to right, corresponding to
    $\lambda_0,\lambda_1,\mu_1,\mu_2,\sigma_0$, respectively.
    Rows correspond to $n\in\{50,100,1000\}$ from top to bottom.
    Values outside $[-4,6]$ are omitted for visual clarity; this truncation does not affect the qualitative assessment of convergence.}
    \label{fig:MLE_sim}
\end{figure}

\subsection{Inference of cooperativity}\label{simulations}

We next evaluate estimation and testing of cooperativity. Observations are again generated as in Example~\ref{HMM_example2}, with Gaussian emissions satisfying $b=0$, $a=1$, and $\sigma_0=\cdots=\sigma_L=0.1$, at times $t_k=(k-1)\delta$, $k\in\{1,\ldots,n\}$. The hidden process is the sum process of an SDMC with $L=3$ coordinates. We consider three scenarios representing different types of cooperativity:
\begin{itemize}
    \item fully positive cooperativity, with $\theta=(1,5,9,9,5,1)$;
    \item null cooperativity, or equivalently independence, with $\theta=(5,5,5,5,5,5)$;
    \item fully negative cooperativity, with $\theta=(9,5,1,1,5,9)$.
\end{itemize}
To assess the influence of the sampling interval, we take
\(
\delta\in\{2^{-5},2^{-6},\ldots,2^{-10}\}.
\)
The observation horizon is fixed at $t_{\max}=15$, so that 
\(
n=n_\delta:=\lfloor 15/\delta\rfloor+1.
\)

For each scenario, we compute the maximum likelihood estimator \(\hat{\eta}_n=(\hat{\theta}_n,\hat{\phi}_n)\) from \Cref{ss:mle}, and estimate the cooperativity index by the empirical cooperativity index $\Lambda_n(\hat{\theta}_n)$ in \eqref{coop_index_adjusted} using the truncation threshold $a_n={\log(n)}/{\sqrt n}$. By \Cref{consistency_ind}, this yields a consistent estimator of the cooperativity index (\Cref{def_coop_index}).

We compare the proposed method with the procedures of \cite{VanEt24} and \cite{CHUNG}, denoted by VND and CK, respectively. Since VND estimates conditional transition probabilities, we compute its empirical cooperativity index using these conditional transition probabilities with truncation threshold $a_n\delta$, where the additional factor $\delta$ accounts for the scale difference between transition probabilities in $[0,1]$ and rate parameters in $[0,\infty)$. Since CK does not provide a natural cooperativity index, we use its estimated coupling factor $\omega\in[0,1]$ as a surrogate. Here $\omega=1$ corresponds to a fully coupled model, interpretable as positive cooperativity, whereas $\omega=0$ corresponds to independence.

\Cref{fig:sim2_2} summarizes the results. The proposed method and VND yield nearly identical estimates and are stable across all considered values of $\delta$. This is consistent with the discussion in \Cref{comparison_cooperativity}, which implies that, for small $\delta$, comparisons based on transition probabilities should agree with those based on the rate parameters. Because $L=3$ here, VND is not subject to identifiability issues. By contrast, CK fails to detect negative cooperativity and the more structured positive cooperativity pattern, despite performing well under independence. This reflects the limited flexibility of CK, which models dependence through a single coupling factor $\omega$ interpolating between a fully coupled model and an independent model.

\begin{figure}[ht]
    \centering
    \includegraphics[width=0.9\linewidth]{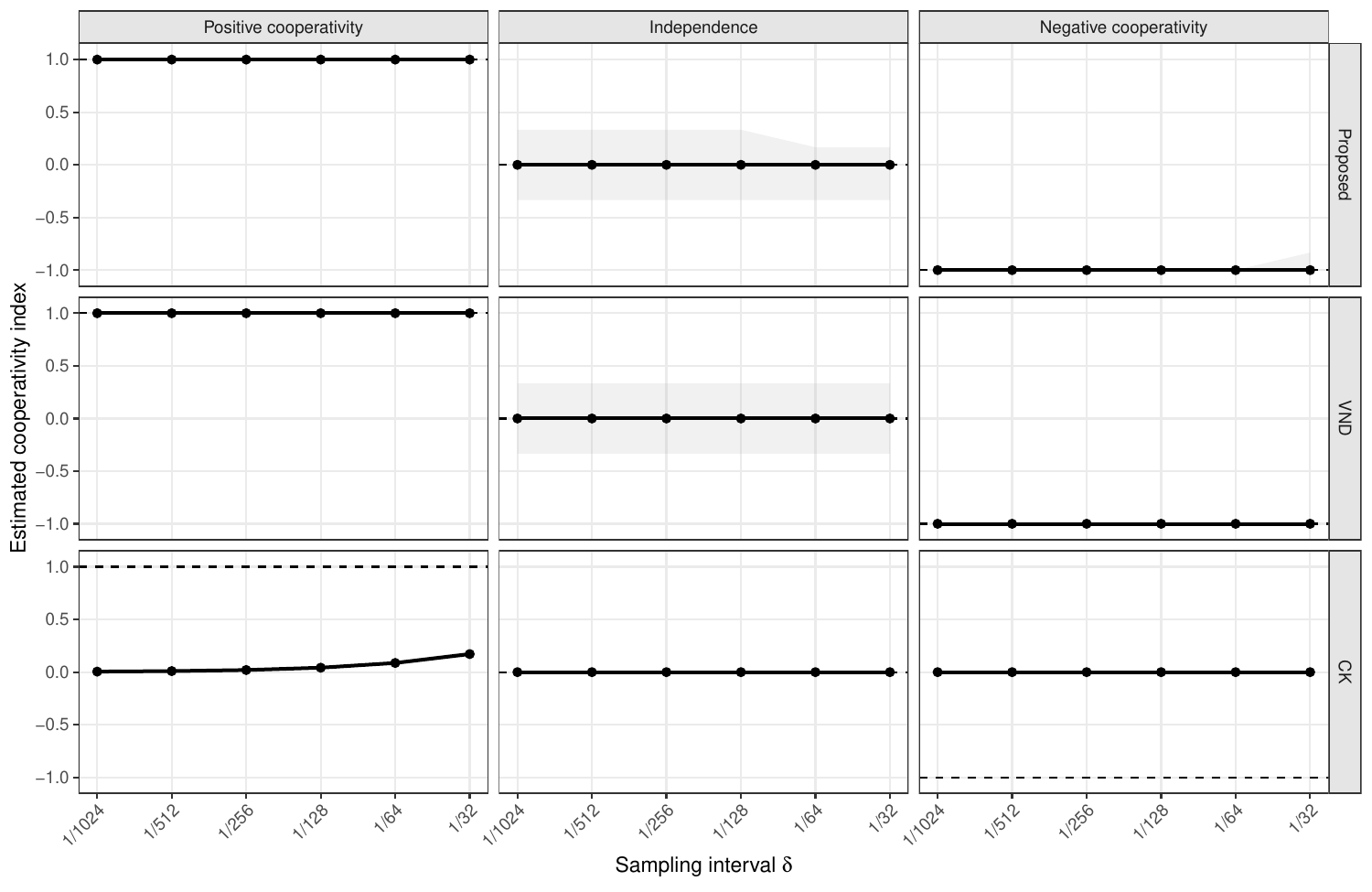}
    \caption{Estimated cooperativity index for the proposed method, VND and CK under fully positive cooperativity, independence and fully negative cooperativity. Solid curves denote medians, and shaded bands denote interquartile ranges, based on $1000$ Monte Carlo repetitions. Dashed horizontal lines mark the true values of $\Lambda(\theta)$.}
    \label{fig:sim2_2}
\end{figure}

We next evaluate the proposed SCoT procedure (\Cref{alg:stepdown}) for testing cooperativity. As benchmarks, we use Holm's method \citep{Holm1979} and the Benjamini--Hochberg procedure \citep{BH95}, both based on marginal $p$-values computed from the asymptotic distribution in \Cref{HMM_asymptotic_normal}. SCoT and Holm control the family-wise error rate (FWER), whereas Benjamini--Hochberg targets the false discovery rate (FDR). Under general dependence between $p$-values, valid FDR control may require additional adjustment, as in \citet{BeYe01}. We nevertheless include the original Benjamini--Hochberg procedure as a high-power benchmark. All nominal error rates are set to $0.05$.

The results are given in \Cref{fig:fwer,fig:power}. Under independence, FDR and FWER coincide, and all procedures provide satisfactory error control, see \Cref{fig:fwer}. Under fully positive and fully negative cooperativity, no null hypothesis is true, and  performance is thus measured by the true positive rate, see \Cref{fig:power}. Benjamini--Hochberg has the highest power, as expected from its less stringent error criterion. SCoT is more powerful than Holm's method, reflecting the benefit of exploiting the dependence structure of the limiting distribution.

\begin{figure}[ht]
    \centering
    \includegraphics[width=0.7\linewidth]{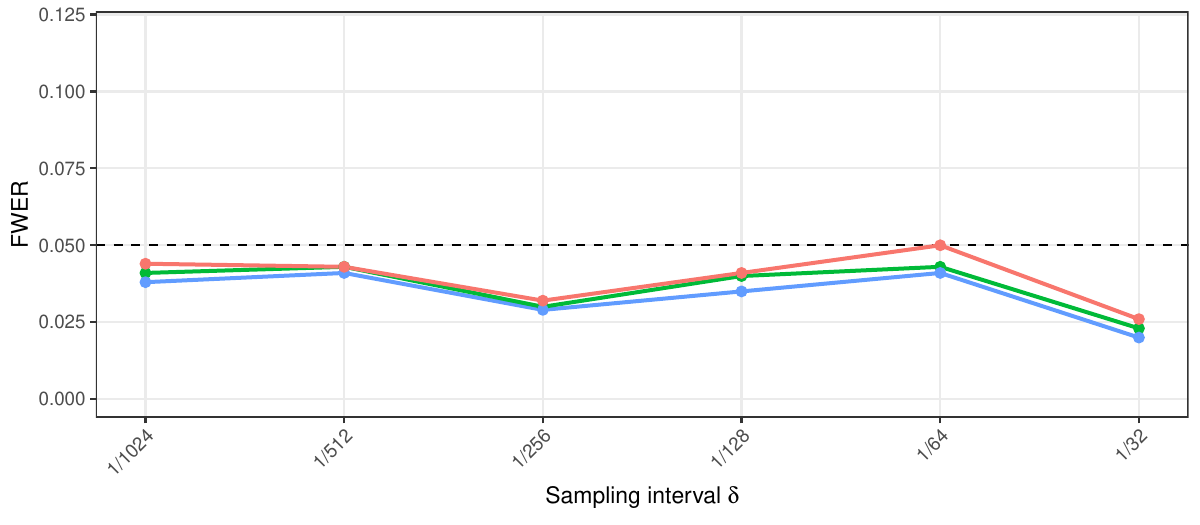}
    \caption{FWER of the proposed SCoT (orange), Holm's method (blue) and the Benjamini--Hochberg procedure (green) under independence, based on $1000$ Monte Carlo repetitions.}
    \label{fig:fwer}
\end{figure}

\begin{figure}[ht]
    \centering
    \includegraphics[width=0.8\linewidth]{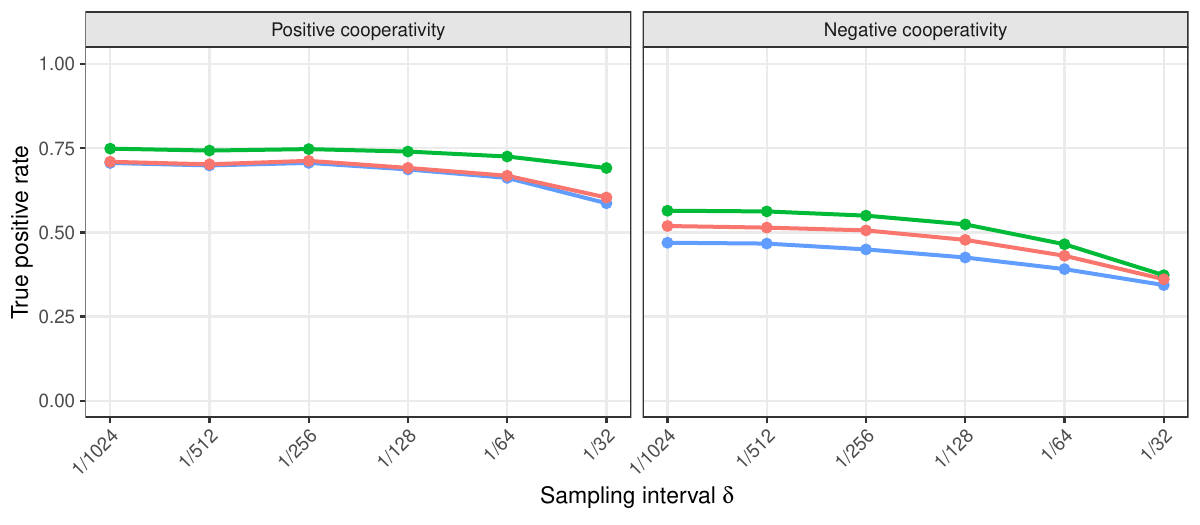}
    \caption{True positive rates of the proposed SCoT (orange), Holm's method (blue) and the Benjamini--Hochberg procedure (green) under fully positive and fully negative cooperativity, based on $1000$ Monte Carlo repetitions.}
    \label{fig:power}
\end{figure}

Finally, \Cref{fig:coop_est} reports the performance of the cooperativity index estimator in \eqref{e:estviatest}. We also include two variants that replace $\widehat{\mathcal H}_{1,n}$ in \eqref{e:estviatest} by the rejection sets from Holm's method and Benjamini--Hochberg. The three estimators perform similarly and are close to the empirical index obtained with a hand-tuned truncation threshold (\Cref{fig:sim2_2}). This supports the theoretical finding in  \Cref{test_proof3}.

\begin{figure}
    \centering
    \includegraphics[width=0.8\linewidth]{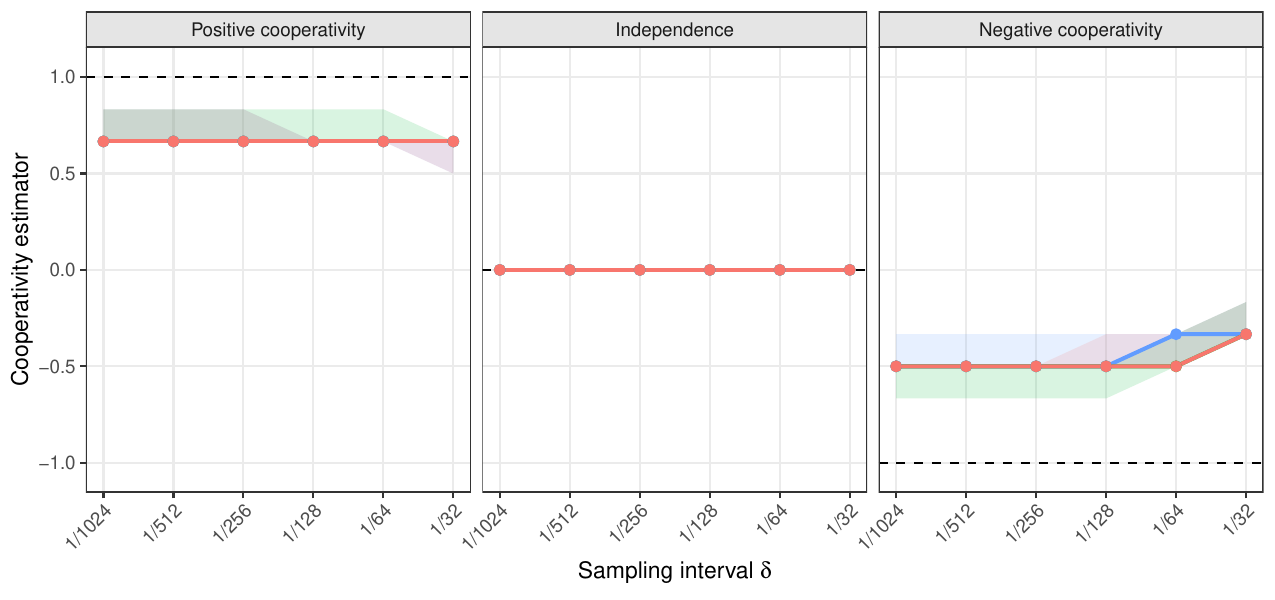}
    \caption{Performance of the cooperativity estimator form \eqref{e:estviatest} using rejection sets from the proposed SCoT (orange), Holm's method (blue) and the Benjamini--Hochberg procedure (green) under fully positive cooperativity, independence and fully negative cooperativity. Solid curves denote medians, and shaded bands denote interquartile ranges, based on $1000$ Monte Carlo repetitions. Dashed horizontal lines mark the true values of $\Lambda(\theta)$.}
    \label{fig:coop_est}
\end{figure}

\subsection{Ion channel data applications}\label{section8.4}

We apply the proposed methodology to voltage-clamp recordings of ion channels in artificial lipid membranes. Such experiments record current traces generated by ion transport through multiple channels, and further background is given in \citet{Requadt2025}. Since these recordings are naturally modeled by Example~\ref{HMM_example2}, they provide a relevant setting for inference on cooperativity among ion channels.

The theory assumes that the number $L$ of coordinates is known. In voltage-clamp applications, $L$ corresponds to the number of ion channels contributing to the measured current trace and is typically unknown. We thus select $L$ by model selection. Following \citet{VanEt24}, we minimize either the Bayesian information criterion (BIC),
\(\text{BIC}(L)=(3L+3)\log(n)-2\log p_{\hat \eta_L}(y_1,\dots,y_n)\)
or the Akaike
information criterion (AIC),
\(
\text{AIC}(L)=(6L+6)-2\log p_{\hat \eta_L}(y_1,\dots,y_n),
\)
where $(y_i)_{i=1}^n$ are the observations, and $p_{\hat \eta_L}$ is the likelihood in \eqref{e:likelihood} evaluated at the maximum likelihood estimator $\hat \eta_L$ in \eqref{e:mle} for a fixed $L$.

\subsubsection{Ryanodine receptor type 2}

We first analyze two datasets of wild-type ryanodine receptor type 2 (RyR2) ion channels from \citet{VanEt24}. The recordings were obtained from voltage-clamp experiments on artificial membranes under asymmetric ion concentrations, with two trans-side Ca$^{2+}$ conditions. No external voltage was applied, and the observed currents were induced by the chemical gradient across the membrane. Details are given in \citet{VanEt24}.

For both datasets, the sampling rate is $4$ kHz, corresponding to  $\delta=1/4000$, and the sample size is $n=600000$. AIC and BIC both suggest $L=3$, see \Cref{fig:bic_both} in Supplement~\ref{s:imp}. The maximum likelihood estimator in \eqref{e:mle} of the rate parameters in SDMC is:
\[
\hat{\theta}_{n}=\begin{cases}
(33.36,  4.50,   2.99,  31.65, 271.55, 243.76), &\text{for Dataset 1},\\
(43.90,   5.65,  8.30, 19.33, 195.32, 259.87), &\text{for Dataset 2}.
\end{cases}
\]
As a visual assessment of model fit, \Cref{fig3} shows the observed trace for Dataset~1 and its Viterbi path \citep{Vit67,MLM26}, the maximum a posteriori estimate of the hidden sum process. The corresponding plot for Dataset~2 is  in \Cref{ryr_dataset2} of Supplement~\ref{s:imp}.

The testing results by the proposed SCoT at significance level $\alpha\in\{0.01,0.05,0.1\}$ are reported in \Cref{tab:test_ryr2}. Most test statistics are negative, indicating negative cooperativity among RyR2 channels. Moreover, for all three significance levels, the cooperativity index estimator in \eqref{e:estviatest} gives $\hat \Lambda_{n,\alpha}=-{1}/{2}$ for Dataset~1, and $\hat \Lambda_{n,\alpha}=-{2}/{3}$ for Dataset~2.
These results provide evidence of negative cooperativity, consistent with \citet{VanEt24}, who used a discrete-time Markov model. The present analysis complements their conclusions with asymptotically justified statistical guarantees..

\begin{table}[h]
\centering
\caption{Testing results by the proposed SCoT for the RyR2 datasets. For each pairwise comparison, we report the test statistic and the smallest critical value used in \Cref{alg:stepdown} at significance level $\alpha\in\{0.01,0.05,0.1\}$.}
\label{tab:test_ryr2}
{\footnotesize
\begin{tabular}{ccSSSSc}
\toprule
& & & \multicolumn{3}{c}{\textbf{Critical value $\hat c_{n,\alpha}(\cdot)$}} & \\ 
\cmidrule{4-6}
& \textbf{Index pair} & \textbf{Test statistic} 
& {$\alpha = 0.01$} & {$\alpha = 0.05$} & {$\alpha = 0.1$} 
& \textbf{Null hypothesis} \\ 
\midrule
\multirow{6}{*}{\textbf{Dataset 1}} 
& (1, 2) & -47.7136 & 3.0632 & 2.5319 & 2.2658 & Rejected \\
& (4, 5) & -26.2961 & 2.9884 & 2.4469 & 2.1739 & Rejected \\
& (1, 3) & -20.6334 & 2.8857 & 2.3240 & 2.0377 & Rejected \\
& (4, 6) & -1.9363  & 2.8224 & 2.2521 & 1.9640 & Not rejected \\
& (2, 3) & -1.1162  & 2.8224 & 2.2521 & 1.9640 & Not rejected \\
& (5, 6) & 0.2528   & 2.8224 & 2.2521 & 1.9640 & Not rejected \\ 
\midrule
\multirow{6}{*}{\textbf{Dataset 2}}
& (1, 2) & -40.2092 & 3.0726 & 2.5392 & 2.2723 & Rejected \\
& (4, 5) & -32.0277 & 3.0028 & 2.4591 & 2.1863 & Rejected \\
& (1, 3) & -18.8108 & 2.9063 & 2.3414 & 2.0556 & Rejected \\
& (4, 6) & -4.7443  & 2.8211 & 2.2560 & 1.9678 & Rejected \\
& (2, 3) & 1.6013   & 2.8055 & 2.2347 & 1.9463 & Not rejected \\
& (5, 6) & -1.2586  & 2.8055 & 2.2347 & 1.9463 & Not rejected \\ 
\bottomrule
\end{tabular}}
\end{table}

\begin{figure}
    \centering
    \includegraphics[width=0.9\linewidth]{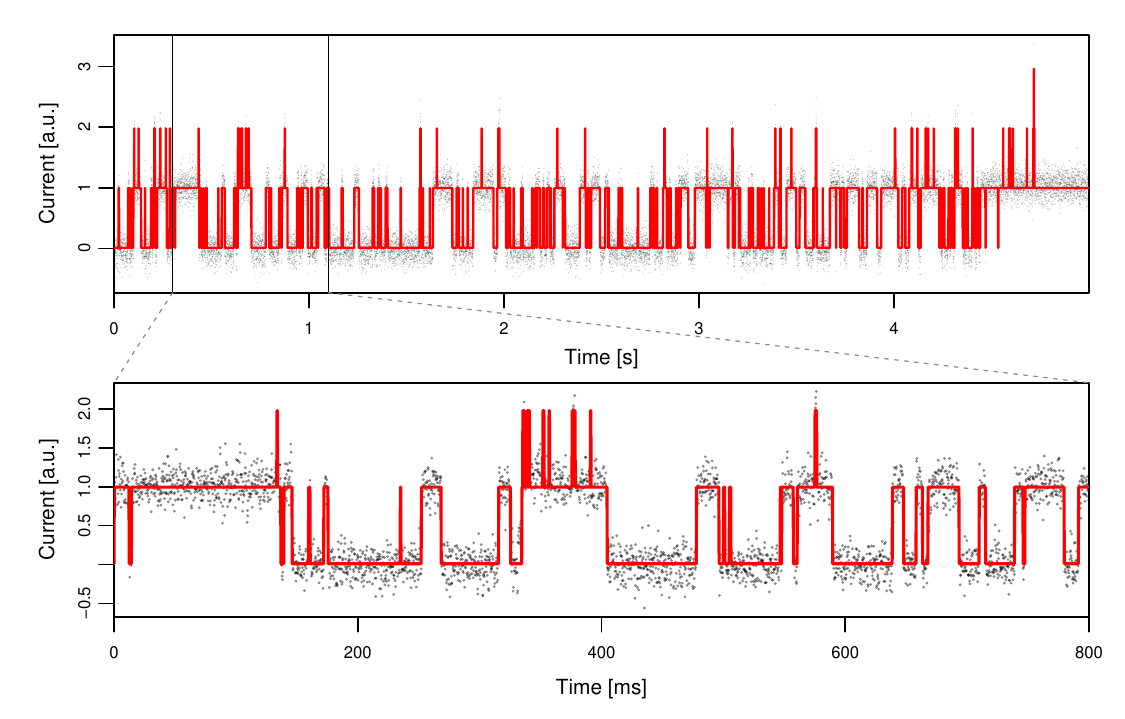}
    \caption{Dataset~1 of RyR2 channels:  current trace, consisting of $20000$ observations, is shown as black dots, and its Viterbi path is plot as a red line.}
    \label{fig3}
\end{figure}

\subsubsection{Gramicidin D}

We next analyze a dataset of gramicidin D channels from \citet{Requadt2025}. Gramicidin channels were incorporated into the membrane, and current traces were recorded under a fixed holding voltage of $150$ mV, see \citet{Requadt2025} for the full protocol. The sampling rate is $19.53$ kHz, corresponding to \(\delta=5.12\cdot 10^{-5}.\) For this dataset, AIC and BIC favor more complex models, see \Cref{fig:BIC_gram} in Supplement~\ref{s:imp}. This may be due to the substantially higher noise level in the measurement. To avoid overfitting, we instead choose $L=5$, as suggested by the histogram of idealized states in Figure~14 of \citet{Requadt2025}.

The maximum likelihood estimator of the SDMC model parameter is 
\[
\hat{\theta}_n
=(0.63, 1.32, 1.32, 6.91, 19.06, 6.15, 5.87, 3.33, 12.59, 20.59).
\]
\Cref{fig:viterbi_gram} displays the measured current trace and the corresponding Viterbi path.
The  testing results are reported in \Cref{tab:gram}. The global null hypothesis is not rejected at any significance level $\alpha\in\{0.01,0.05,0.1\}$. Thus, we find no statistically significant evidence of cooperativity among the gramicidin D channels. This agrees with \citet{Requadt2025}, while providing an asymptotic statistical justification within the proposed SD-HMM framework.

\begin{table}[h]
\centering
\caption{Testing results by the proposed SCoT for the gramicidin D dataset. For each pairwise comparison, we report the test statistic and the smallest critical value used in \Cref{alg:stepdown} at significance level $\alpha\in\{0.01,0.05,0.1\}$.}
\label{tab:gram}
{\footnotesize
\begin{tabular}{ccSSSSc}
\toprule
& &
& \multicolumn{3}{c}{\textbf{Critical value $\hat c_{n,\alpha}(\cdot)$}} 
&  \\ 
\cmidrule{4-6}
&  \textbf{Index pair} &  \textbf{Test statistic} 
& {$\alpha = 0.01$} & {$\alpha = 0.05$} & {$\alpha = 0.1$} 
& \textbf{Null hypothesis} \\ 
\midrule
& (1, 2)  & 1.2740  & 3.3807 & 2.8592 & 2.5940 & Not rejected \\
& (6, 7)  & 0.0873  & 3.3807 & 2.8592 & 2.5940 & Not rejected \\
& (1, 3)  & 0.7188  & 3.3807 & 2.8592 & 2.5940 & Not rejected \\
& (6, 8)  & 0.8818  & 3.3807 & 2.8592 & 2.5940 & Not rejected \\
& (1, 4)  & 1.4531  & 3.3807 & 2.8592 & 2.5940 & Not rejected \\
& (6, 9)  & -0.7885 & 3.3807 & 2.8592 & 2.5940 & Not rejected \\
& (1, 5)  & 0.9659  & 3.3807 & 2.8592 & 2.5940 & Not rejected \\
& (6, 10) & -0.6970 & 3.3807 & 2.8592 & 2.5940 & Not rejected \\
& (2, 3)  & -0.0035 & 3.3807 & 2.8592 & 2.5940 & Not rejected \\
& (7, 8)  & 0.7548  & 3.3807 & 2.8592 & 2.5940 & Not rejected \\
& (2, 4)  & 1.2866  & 3.3807 & 2.8592 & 2.5940 & Not rejected \\
& (7, 9)  & -0.8164 & 3.3807 & 2.8592 & 2.5940 & Not rejected \\
& (2, 5)  & 0.9295  & 3.3807 & 2.8592 & 2.5940 & Not rejected \\
& (7, 10) & -0.7097 & 3.3807 & 2.8592 & 2.5940 & Not rejected \\
& (3, 4)  & 1.2669  & 3.3807 & 2.8592 & 2.5940 & Not rejected \\
& (8, 9)  & -1.1262 & 3.3807 & 2.8592 & 2.5940 & Not rejected \\
& (3, 5)  & 0.9290  & 3.3807 & 2.8592 & 2.5940 & Not rejected \\
& (8, 10) & -0.8322 & 3.3807 & 2.8592 & 2.5940 & Not rejected \\
& (4, 5)  & 0.6214  & 3.3807 & 2.8592 & 2.5940 & Not rejected \\
& (9, 10) & -0.3629 & 3.3807 & 2.8592 & 2.5940 & Not rejected \\
\bottomrule
\end{tabular}}
\end{table}

\begin{figure}[ht]
    \centering
    \includegraphics[width=\linewidth]{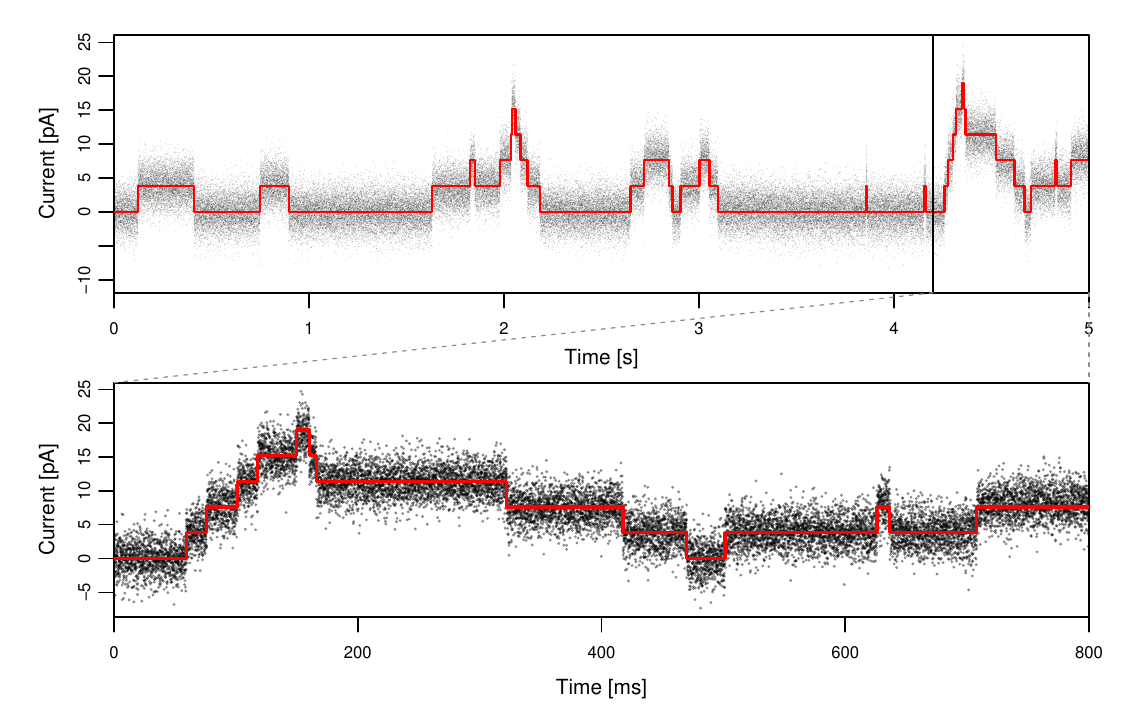}
    \caption{Dataset of Gramicidin D channels:  current trace, consisting of $100001$ observations, is shown as black dots, and its Viterbi path is plot as a red line.}
    \label{fig:viterbi_gram}
\end{figure}

\section*{Acknowledgments}
The authors thank the authors of \cite{VanEt24}, particularly B.~Eltzner, for sharing the two RyR2 datasets. They also thank the C.~Steinem lab at the University of Göttingen, especially M.~Fink, for measuring and sharing the gramicidin D voltage-clamp recordings. This work was supported by by German
Research Foundation (DFG) through the Mathematics of Experiment
under Grant CRC 1456 and in part by DFG under Germany's Excellence
Strategy through the Multiscale Bioimaging: From Molecular Machines
to Networks of Excitable Cells (MBExC) under Project EXC 2067.


\clearpage

\appendix

\section{Extension to multiple states}\label{section7}

We now extend the framework of SDMCs to a multi-state setting, thus broadening its applicability to settings where more than two regimes occur. In ion channels, for example, conductance may not switch directly between open and closed states but instead exhibit intermediate \enquote{sub-gating} levels corresponding to partially open conformations \citep{DANI1991401}. Such behavior cannot be captured by a binary-state model and motivates the extension developed in this section.

\subsection{Multi-state SDMCs}
For an integer $\kappa \ge 2$, we introduce an \emph{alphabet} 
$\mathcal{A}_{\kappa} = \{c_1,\dots,c_{\kappa}\}$ with ordered levels 
$0 = c_1 < c_2 < \dots < c_{\kappa} = 1$. 
When $\kappa = 2$, this reduces to the binary alphabet $\mathcal{A}_2 = \{0,1\}$.
We consider a continuous-time Markov chain 
$X_t = (X_{1,t}, \dots, X_{L,t})$, $t \ge 0$, 
with state space $\mathcal{A}_{\kappa}^L$. 
At a fixed time $t$, let $m_i$ denote the number of coordinates of $X_t$ 
that take the level $c_i \in \mathcal{A}_{\kappa}$. 
The vector $(m_1, \ldots, m_{\kappa})$ is called a \emph{configuration} of $X_t$ and belongs to
\[
\mathcal{M}_{L}^{\kappa} 
:= 
\left\{
(m_1,\ldots, m_\kappa) \in \{0,1, \ldots, L\}^\kappa 
:\; 
\sum_{i = 1}^{\kappa} m_i = L
\right\}.
\]
Let $S_t := \s{X_t}$, $t \ge 0$, denote the sum process, 
whose values correspond to the total level of $X_t$ over coordinates, and lie in
\begin{equation}\label{e:sum:k:state}
\mathcal{T}_L^{\kappa}
:=
\left\{
\sum_{i=1}^{\kappa} c_i m_i 
:\;
(m_1,\ldots, m_{\kappa}) \in \mathcal{M}_{L}^{\kappa}
\right\}.
\end{equation}
In particular, $\mathcal{T}_L^{2} = \{0, 1,\ldots, L\}$. One important property of an alphabet $\mathcal{A}_{\kappa}$ is that it may allow the unique recovery of the configuration of $X_t$ from the value of $S_t$.

\begin{definition}[Informativity]\label[definition]{d:inform}
We say that an alphabet $\mathcal{A}_{\kappa}$ is \emph{$L$-informative} if the map
\[
(m_1,\dots,m_{\kappa}) 
\mapsto 
\sum_{i=1}^{\kappa} c_i m_i
\]
is injective (and hence bijective) from $\mathcal{M}_{L}^{\kappa}$ to $\mathcal{T}_L^{\kappa}$. Further, we call $\mathcal{A}_{\kappa}$ \emph{informative} if it is $L$-informative for all $L \in \mathbb{N}$.
\end{definition}

Clearly, if $\mathcal{A}_{\kappa}$ is $L$-informative, then it is also $L'$-informative for all $L' \le L$. In particular, $\mathcal{A}_2$, as well as $\mathcal{A}_3$ with $c_2$ irrational, are informative.  More generally, informativity is closely related to linear Diophantine equations,  see \Cref{app:diophantine}.

In the sequel, we always assume that $\mathcal{A}_{\kappa}$ is $L$-informative; namely, for every $s\in \mathcal{T}_L^{\kappa}$ there exists a unique configuration $(m_1, \ldots, m_\kappa) \in \mathcal{M}_{L}^{\kappa}$ satisfying $s = \sum_{i = 1}^{\kappa}c_i m_i$. In such a case, we write, with slight abuse of notation,  
\[
m_i = (s)_i^{-1},\qquad i \in \{1,\ldots, \kappa\}.
\]
We now introduce the multi-state extension of SDMCs. 

\begin{definition}[$\kappa$-state SDMC]\label{d:kSDMC}
A continuous-time Markov chain $(X_t)_{t\ge 0}$ with state space $\mathcal{X}=\mathcal{A}_\kappa^L$ is called a \emph{$\kappa$-state sum-dependent Markov chain ($\kappa$-state SDMC)} if its rate matrix $Q=(q_{x,y})_{x,y\in\mathcal{X}}$ satisfies, for $x\neq y$,
\begin{equation*}
q_{x,y}=
\begin{cases}
 \sum_{j=1}^{\kappa}\sum_{k = 1}^{\kappa}\param_{j,k;\s{x}}\mathbb{I}(x_{i}=c_j, y_{i} = c_{k}),
& \text{if } \|x-y\|_0=1,\\
0, & \text{if } \|x-y\|_0>1,
\end{cases}
\end{equation*}
and $q_{x,x} = -\sum_{y: y\neq x} q_{x,y}$. Here 
$$
\left\{\param_{j,k;s}\; :\; j\neq k , j,k \in \{1,\ldots, \kappa\},\, s \in \mathcal{T}_L^\kappa \text{ and } (s)_j^{-1} \ge 1 \right\} \;\subseteq\; [0,\infty)
$$
is the set of model parameters, the cardinality of which is $\kappa\cdot (\kappa-1)\cdot{L+\kappa-2\choose \kappa-1 }$.
\end{definition}

Similar to binary-state SDMCs (cf.~Section~\ref{sec1}), a $\kappa$-state SDMC can equivalently be characterized by permutation invariance (\Cref{as1}) and conditional independence at infinitesimal times (\Cref{as2}) in exactly the same manner. Moreover, the model parameters of the $\kappa$-state SDMC satisfy 
\[
\param_{j,k;s} = \lim_{\delta\searrow 0} \frac{\prob{X_{i,t+\delta}=c_{k}\;\middle\vert\;X_{i,t}=c_{j}, \s{X_t}=s} }{\delta}
\]
which is independent of both $i \in\{1,\ldots, L\}$ and $t\ge0$. Consequently, Theorem~\ref{Theorem1} extends directly to general $\kappa$-state SDMCs.

There is also a one-to-one correspondence between the rate matrix (or equivalently the model parameters $\param_{j,k;s}$) of a $\kappa$-state SDMC and that of its associated sum process.

\begin{theorem}[Identifiability]\label{th:id:kSDMC}
Let $(X_t)_{t\ge 0}$ be a $\kappa$-state SDMC. Then: 
\begin{enumerate}
\item \label{th:id:kSDMC:1} The sum process $S_t= \s X_t$, $t\ge 0$, is a continuous-time Markov chain.
\item \label{th:id:kSDMC:2}The rate matrix $R = (r_{s,s'})_{s,s'\in \mathcal{T}_L^{\kappa}}$ of $(S_t)_{t\ge0}$ takes the form of 
\[
r_{s,s'} = 
\begin{cases}
(s)^{-1}_j\param_{j,k; s}, & \text{ if }s' = s  - c_j +c_k\text{ for some distinct }j,k,\\
-\sum_{\text{distinct }j, k}(s)^{-1}_j \param_{j,k;s},& \text{ if } s' = s,\\
0, & \text{ otherwise.}
\end{cases}
\]
\item \label{th:id:kSDMC:3} The model parameters $\param_{j,k;s}$, for distinct $j,k \in \{1,\ldots, \kappa\}$ and $s \in \mathcal{T}_L^\kappa$ satisfying $(s)_j^{-1} \ge 1$, are uniquely determined by the rate matrix $R$ defined above.
\end{enumerate}
\end{theorem}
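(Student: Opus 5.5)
The plan is to mirror the proof of Theorem~\ref{Theorem2}, using $L$-informativity in place of the trivial binary bijection between sums and configurations.

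\emph{Part~\ref{th:id:kSDMC:1}: Markov property of $S_t$.} I would use the Dynkin-type lumpability criterion (the same criterion invoked for \Cref{Lemma3.9} and \Cref{prop1}). A function of a continuous-time Markov chain is Markov if, for every $x$ and every target value $s'\neq \s{x}$, the aggregated rate
\[
\sum_{y:\s{y}=s'} q_{x,y}
\]
depends on $x$ only through $\s{x}$.

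Fix $x$ with $\s{x}=s$. By the sparse transition structure in Definition~\ref{d:kSDMC}, only $y$ differing from $x$ in a single coordinate $i$ contribute. For such $y$, with $x_i=c_j$ and $y_i=c_k$, we have $\s{y}=s-c_j+c_k$ and $q_{x,y}=\param_{j,k;s}$. Summing over coordinates gives
\[
\sum_{y:\s{y}=s'} q_{x,y}=\sum_{(j,k):\,c_k-c_j=s'-s} m_j(x)\,\param_{j,k;s},
\]
where $m_j(x)$ is the number of coordinates of $x$ at level $c_j$.

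By $L$-informativity, $m_j(x)=(s)_j^{-1}$ depends only on $s$, so the lumpability condition holds. The same computation gives the rate formula in Part~\ref{th:id:kSDMC:2}, once I check that, for fixed $s$ and $s'$, at most one pair $(j,k)$ with $(s)_j^{-1}\ge1$ satisfies $s'=s-c_j+c_k$.

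\emph{Part~\ref{th:id:kSDMC:2}: no collisions.} This uniqueness is the main obstacle, and I would again use informativity. Suppose $s-c_j+c_k=s-c_{j'}+c_{k'}$ with $(j,k)\neq(j',k')$, both pairs admissible. Let $m=(s)^{-1}$ be the configuration of $s$. Moving one coordinate from level $c_j$ to $c_k$ produces a configuration $m'\in\mathcal M_L^\kappa$. Moving one coordinate from $c_{j'}$ to $c_{k'}$ produces $m''\in\mathcal M_L^\kappa$. Both lie in $\mathcal M_L^\kappa$ because $m_j\ge1$ and $m_{j'}\ge1$.

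These two configurations have the same image $s'$ under the injective map of Definition~\ref{d:inform}, so $m'=m''$. Since $m'-m=e_k-e_j$ and $m''-m=e_{k'}-e_{j'}$, with $j\neq k$ and $j'\neq k'$, equality of these differences forces $j=j'$ and $k=k'$.

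The diagonal entries follow from the zero row sums of $R$. The case $s'=s$ cannot come from an off-diagonal move, since $c_k\neq c_j$.

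\emph{Part~\ref{th:id:kSDMC:3}: recovering the parameters.} Given $R$, take any admissible $(j,k,s)$ and set $s'=s-c_j+c_k\in\mathcal T_L^\kappa$. By Part~\ref{th:id:kSDMC:2} and the uniqueness just shown, the entry $r_{s,s'}$ equals $(s)_j^{-1}\param_{j,k;s}$ and involves no other parameter. Since $(s)_j^{-1}\ge1$ and $(s)_j^{-1}$ is determined by $s$ alone, we recover
\[
\param_{j,k;s}=\frac{r_{s,s'}}{(s)_j^{-1}},
\]
so every model parameter is uniquely determined by $R$.
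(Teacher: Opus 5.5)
Your proof is correct. Parts~\ref{th:id:kSDMC:2} and~\ref{th:id:kSDMC:3} follow the paper's computation: fix $x$ with $\s{x}=s$, keep only single-coordinate neighbours, count them via $(s)_j^{-1}$, use zero row sums, then divide by $(s)_j^{-1}\ge 1$.

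You are more careful than the paper at one step. The paper's identity $\sum_{y:\,\s{y}=s-c_j+c_k,\ \|x-y\|_0=1}q_{x,y}=(s)_j^{-1}\param_{j,k;s}$ tacitly assumes that no other admissible move $(j',k')$ lands on the same sum. Your injectivity argument applied to $m-e_j+e_k$ and $m-e_{j'}+e_{k'}$ is exactly what justifies this. It is also what makes the first case of the displayed formula unambiguous, once $(j,k)$ is restricted to admissible pairs.

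Part~\ref{th:id:kSDMC:1} is where you take a genuinely different route. The paper goes through permutation invariance:
\begin{itemize}
\item \Cref{perm_inv_char} lifts $q_{x,y}=q_{\Perm x,\Perm y}$ to transition probabilities;
\item \Cref{prop1} turns this into the transition-level lumping property of Definition~\ref{d:lump}, using $L$-informativity so that states with equal sums are permutations of one another;
\item \Cref{Lemma3.9} concludes.
\end{itemize}
You instead check the generator-level lumpability condition directly, from the same aggregated-rate computation you need for Part~\ref{th:id:kSDMC:2}. This is more economical, makes the role of informativity ($m_j(x)=(s)_j^{-1}$) explicit, and avoids routing through permutation invariance.

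It is not, however, ``the same criterion'' as in \Cref{Lemma3.9}, which is phrased via transition probabilities. To connect them, note that your computation says $QU=UR$ with $U_{x,s}=\mathbb{I}(\s{x}=s)$. Hence $Q^kU=UR^k$ and $e^{\delta Q}U=Ue^{\delta R}$, i.e.\ $\prob{S_{t+\delta}=s'\mid X_t=x}=(e^{\delta R})_{\s{x},s'}$. After that, \Cref{prop2} and \Cref{Lemma3.9} apply verbatim, and they also identify $R$ as the rate matrix of $(S_t)_{t\ge0}$.
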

\begin{proof}
\textbf{Part~\ref{th:id:kSDMC:1}} By Definition~\ref{d:kSDMC}, the rate matrix $Q = (q_{x,y})_{x,y\in \mathcal A_\kappa^L}$ is permutation invariant, as in Assumption~\ref{as1} with $\mathcal{X} = \mathcal{A}_\kappa^L$. Then, by Proposition~\ref{prop1} and Lemma~\ref{Lemma3.9}, the sum process $(S_t)_{t\ge0}$ is a Markov chain.

\smallskip

\textbf{Part~\ref{th:id:kSDMC:2}} By Proposition \ref{prop2}, it holds for any $x\in \mathcal{A}_\kappa^L$ with $\s{x}=s$,
\[
r_{s,s'}=\sum_{y:\s{y}=s'}q_{x,y}=\sum_{y:\s{y}=s',\norm{x-y}_0=1}q_{x,y}.
\]
Then $r_{s,s'}=0$, whenever 
\[
s'\notin \bigl\{s-c_{j}+c_{k}\;:\; j,k\in\{1,\dots,\kappa\}\bigr \}.
\]
By the $L$-informativity of $\mathcal{A}_\kappa$, the above displayed condition is equivalent to $\norm{x-y}_0>1$ for all $x,y\in \mathcal{A}_\kappa^L$ such that $\s{x}=s$ and $\s{y}=s'$. Fix arbitrarily  $x\in \mathcal{A}_\kappa^L$  such that  $\s{x}=s$. Then,  for any $k\neq j$, we have
\[         
r_{s,s- c_{j}+c_{k}}=\sum_{y:\s{y}=s-c_{j}+c_{k},\norm{x-y}_0=1}q_{x,y}
         =\sum_{y:\s{y}=s-c_{j}+c_{k},\norm{x-y}_0=1} \param_{j,k;s}
         =(s)_{j}^{-1}\param_{j,k;s}.
\]
The formula for the diagonal elements follows from the fact that the row sums are zero.

\smallskip

\textbf{Part~\ref{th:id:kSDMC:3}} For distinct $j,k \in \{1,\ldots, \kappa\}$ and $s \in \mathcal{T}_L^\kappa$ satisfying $(s)_j^{-1} \ge 1$, we have, by part~\ref{th:id:kSDMC:2}, 
\[
\param_{j,k;s}  = \frac{r_{s,s-c_{j}+c_{k}}}{(s)_{j}^{-1}}.
\]
Thus, the one-to-one correspondence between the rate matrix $R$ and the model parameters $\param_{j,k;s}$ is immediate. 
\end{proof}

In contrast to the binary-state setting, the sum process $S_t$ of a $\kappa$-state SDMC is no longer a birth--death process. Instead, via $L$-informativity, it can be identified with a Markov population process taking values in $\mathcal{M}_{L}^{\kappa}$ \citep{Kingman69}. Unlike birth--death processes, reversibility does not generally hold for Markov population processes. Nevertheless, the special structure of $\kappa$-state SDMCs still guarantees reversibility, a property that, as noted earlier, is desirable in many applications.

\begin{theorem}[Irreducibility and reversibility]\label{th:rev:kSDMC}
Let $(X_t)_{t\ge 0}$ be a $\kappa$-state SDMC with $\kappa\ge 3$, and let $S_t=\s X_t$, $t\ge0$, denote the associated sum process. Then:
\begin{enumerate}
\item \label{th:rev:kSDMC:1}
 $(X_t)_{t\ge0}$ is irreducible {if and only if} $(S_t)_{t\ge0}$ is irreducible. Further, if both chains are irreducible, they admit unique invariant distributions
$\pi=(\pi_x)_{x\in \mathcal{A}^L_\kappa}$ and
$\gamma=(\gamma_s)_{s\in\mathcal{T}_L^\kappa}$, respectively, which satisfy
\[
\gamma_s
= \sum_{x':\,\s{x'}=s}\pi_{x'}
= \frac{L!}{(s)^{-1}_1!\cdots (s)^{-1}_\kappa!}\,\pi_x
\]
for any $x$ such that $\s{x}=s$.

\item \label{th:rev:kSDMC:2}
Suppose that $(X_t)_{t\ge0}$ or $(S_t)_{t\ge0}$ is irreducible. Then $(X_t)_{t\ge0}$ is reversible if and only if $(S_t)_{t\ge0}$ is reversible.

\item \label{th:rev:kSDMC:3}
Suppose that all model parameters are  nonzero, and thus strictly positive, i.e.
\[
\{\param_{j,k;s} : j\neq k, j,k\in\{1,\ldots,\kappa\},\ s\in\mathcal{T}_L^\kappa,\ (s)_j^{-1}\ge1\}
\subset (0,\infty).
\]
Then both $(X_t)_{t\ge0}$ and $(S_t)_{t\ge0}$ are irreducible. Moreover, both chains are reversible if and only if, for all pairwise distinct
$a_1,a_2,a_3\in\{1,\ldots,\kappa\}$ and all
$s\in\mathcal{T}_L^\kappa$ with $(s)^{-1}_{a_1}>0$,
\begin{equation}
\param_{a_1,a_2;s}\,
\param_{a_2,a_3;s+c_{a_2}-c_{a_1}}\,
\param_{a_3,a_1;s+c_{a_3}-c_{a_1}}
=
\param_{a_1,a_3;s}\,
\param_{a_3,a_2;s+c_{a_3}-c_{a_1}}\,
\param_{a_2,a_1;s+c_{a_2}-c_{a_1}}.
\label{eq:cycle}
\end{equation}
\end{enumerate}
\end{theorem}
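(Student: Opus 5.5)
Throughout, write $m=(s)^{-1}$ for the configuration of $S_t=s$ and use $L$-informativity to identify $\mathcal{T}_L^\kappa$ with $\mathcal{M}_L^\kappa$. In this identification the transition $s\to s-c_j+c_k$ becomes $m\to m-e_j+e_k$. The tool used repeatedly is the lumping identity $r_{s,s'}=\sum_{y:\s y=s'}q_{x,y}$ for any $x$ with $\s x=s$, which holds by permutation invariance (Proposition~\ref{prop2}). We also use Theorem~\ref{th:id:kSDMC}: the rates are $r_{s,s-c_j+c_k}=m_j\param_{j,k;s}$, and the only $y$ contributing are those differing from $x$ in exactly one coordinate.

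\textbf{Part~\ref{th:rev:kSDMC:1}.} If $X$ is irreducible, then so is $S$, since any $X$-path projects to an $S$-path. For the converse, take an $S$-path $s\to s'$ and lift it step by step. A positive $r_{s,s-c_j+c_k}$ means $\param_{j,k;s}>0$, so from any $x$ with $\s x=s$ we can flip any coordinate currently at level $c_j$ to $c_k$. This reaches some $y$ with $\s y=s'$; it remains to reach every $y'$ with the same sum, i.e.\ every permutation of $y$. I would generate transpositions by a cycle: follow an $S$-path from $s'$ back to $s'$ that moves a chosen coordinate away and then restores levels, selecting which coordinates flip. This needs care, and the plan is to realize a swap of two coordinates at levels $c_a\ne c_b$ as follows. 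Since $S$ is irreducible and $\kappa\ge3$, there exist routes that change the level of the chosen coordinates; flip coordinate $u$ from $c_a$ along a path, flip $v$ from $c_b$ to $c_a$, then bring $u$ to $c_b$, choosing at each step which coordinate moves. This works because the rates depend on the coordinate only through its current level. I expect this lifting to be the main obstacle, and in particular the choice of an ordering that keeps every intermediate transition positive.

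For the invariant distributions, permutation invariance of $Q$ together with uniqueness gives $\pi_x=\pi_{Px}$. Lumping then gives $\gamma_s=\sum_{\s{x'}=s}\pi_{x'}$, and the number of such $x'$ is the multinomial coefficient $L!/\prod_i m_i!$.

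\textbf{Part~\ref{th:rev:kSDMC:2}.} Suppose $X$ is reversible. For adjacent states, multiply $\pi_x q_{x,y}=\pi_y q_{y,x}$ through and sum over the $y$ with $\s y=s'$; using the formula for $\gamma$, this yields $\gamma_s r_{s,s'}=\gamma_{s'}r_{s',s}$. Conversely, suppose $S$ is reversible and take $x,y$ differing in one coordinate ($c_j\to c_k$). Then $q_{x,y}=\param_{j,k;s}=r_{s,s'}/m_j$ and $q_{y,x}=r_{s',s}/m'_k$ with $m'_k=m_k+1$. Writing $\pi_x=\gamma_s\prod_i m_i!/L!$, detailed balance for $X$ reduces to $\gamma_s r_{s,s'}\,\prod_i m_i!/m_j=\gamma_{s'}r_{s',s}\,\prod_i m'_i!/m'_k$. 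The factorial ratios agree because $m'=m-e_j+e_k$.

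\textbf{Part~\ref{th:rev:kSDMC:3}.} Positivity of all parameters makes every single-coordinate flip possible, so $X$ is irreducible, and by Part~\ref{th:rev:kSDMC:1} so is $S$. By Part~\ref{th:rev:kSDMC:2} it then suffices to study $X$, and I would apply Kolmogorov's cycle criterion to $X$, where all transitions are single flips. The necessity of \eqref{eq:cycle} comes from the $3$-cycle in which one coordinate moves $c_{a_1}\to c_{a_2}\to c_{a_3}\to c_{a_1}$; writing out the forward and reverse rate products gives exactly \eqref{eq:cycle}. For sufficiency, I would show that the state graph's cycle space is generated by two kinds of cycles. The first kind are these single-coordinate triangles. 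The second kind are squares in which two different coordinates flip in either order; their Kolmogorov ratio equals $1$ automatically, which I would check by a telescoping identity on the $\param$'s with shifted sums. An alternative is to construct $\pi$ explicitly as a path product from $(c_1,\dots,c_1)$ and verify that it is well defined using these two cycle types; generating the cycle space is the delicate point here.
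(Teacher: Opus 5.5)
There are two genuine gaps. The second cannot be closed, because the claim it targets is false.

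\textbf{Part (i), ``$S$ irreducible $\Rightarrow$ $X$ irreducible''.} Lifting one $S$-path only reaches \emph{some} $y$ with $\s{y}=\s{y'}$. The transposition you then sketch (move $u$ off $c_a$, move $v$ from $c_b$ to $c_a$, bring $u$ to $c_b$) is never shown to use only positive-rate moves. Positivity of $\param_{j,k;s}$ depends on the current sum $s$. Irreducibility of $(S_t)$ only guarantees the moves along the $S$-paths you pick, not the particular moves your swap needs at the sums where it needs them.

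The missing idea is to route through the unique state with sum $0$, namely $(c_1,\dots,c_1)$, which every permutation fixes:
\begin{enumerate}
\item Lift an $S$-path $\s{x}\to 0$ to $z_0\to\dots\to z_m=(c_1,\dots,c_1)$, and an $S$-path $0\to\s{y}$ to $z_m\to\dots\to z_{m+n}$.
\item Since $q_{Pz,Pz'}=q_{z,z'}$, apply a permutation $P$ with $Pz_0=x$ to the first segment and a permutation $P'$ with $P'z_{m+n}=y$ to the second.
\item Both segments still pass through $(c_1,\dots,c_1)$, so they concatenate to a positive-rate path from $x$ to $y$.
\end{enumerate}
No transpositions are needed; this is the paper's argument. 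The invariant-distribution formula and Part (ii) are fine and agree with the paper.

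\textbf{Part (iii), sufficiency of \eqref{eq:cycle}.} Your necessity argument, with a single coordinate running $c_{a_1}\to c_{a_2}\to c_{a_3}\to c_{a_1}$, is correct; the paper's proof does not write that direction out. Your description of the cycles of the $X$-graph via single-coordinate triangles and two-coordinate squares is also right. What fails is the claim that squares balance automatically.

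Suppose coordinate $i$ moves $c_a\to c_b$ and coordinate $i'\neq i$ moves $c_{a'}\to c_{b'}$, starting at sum $s$. With $\Delta=c_b-c_a$ and $\Delta'=c_{b'}-c_{a'}$, Kolmogorov's criterion on that square reads
\[
\param_{a,b;s}\,\param_{a',b';s+\Delta}\,\param_{b,a;s+\Delta+\Delta'}\,\param_{b',a';s+\Delta'}=\param_{a',b';s}\,\param_{a,b;s+\Delta'}\,\param_{b',a';s+\Delta+\Delta'}\,\param_{b,a;s+\Delta}.
\]
This holds identically when $(a',b')\in\{(a,b),(b,a)\}$, hence always when $\kappa=2$. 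Otherwise it is a genuine constraint, and it does not follow from \eqref{eq:cycle}.

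Counterexample: take $\kappa=3$, $L=2$, $\mathcal{A}_3=\{0,1/3,1\}$ (which is $2$-informative), and all parameters equal to $1$ except $\param_{1,2;0}=1/2$ and $\param_{2,3;1/3}=2$.
\begin{itemize}
\item Both modified rates occur only in the single-coordinate triangles where one coordinate cycles while the other sits at $c_1$. There $\param_{1,2;0}\param_{2,3;1/3}\param_{3,1;1}=1=\param_{1,3;0}\param_{3,2;1}\param_{2,1;1/3}$, so \eqref{eq:cycle} holds.
\item Along $(c_1,c_1)\to(c_2,c_1)\to(c_2,c_3)\to(c_1,c_3)\to(c_1,c_1)$, the forward product is $\param_{1,2;0}\param_{1,3;1/3}\param_{2,1;4/3}\param_{3,1;1}=1/2$.
\item The backward product is $\param_{1,3;0}\param_{1,2;1}\param_{3,1;4/3}\param_{2,1;1/3}=1$.
\end{itemize}
So $(X_t)$ is not reversible, and by Part (ii) neither is $(S_t)$.

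Thus the ``if'' direction of Part (iii) is false as stated, and no argument can fill this step. The same example shows that the paper's appeal to Kingman's results on Markov population processes cannot deliver it either. Your framework does prove a corrected statement, once you add the fact that fibre triangles and two-coordinate squares generate all cycles of the $X$-graph. Under positivity, both chains are reversible if and only if \eqref{eq:cycle} holds and the square identity above holds for every $s$ and every pair of moves that two distinct coordinates can make from a state with sum $s$.
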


\begin{proof}
\textbf{Part~\ref{th:rev:kSDMC:1}} By Propositions~\ref{prop2} and \ref{prop1}, it holds, for any $s,s' \in \mathcal{T}_L^\kappa$, and any $\delta>0$, 
\begin{equation} \label{sum_id}
\prob{S_{t+\delta}=s'\mid S_t=s}=\prob{S_{t+\delta}=s'\mid X_t=x}=\sum_{y:\s{y}=s'}\prob{X_{t+\delta}=y\mid X_t=x},    
\end{equation}
where $x$ is an arbitrary element in $\mathcal{A}_\kappa^L$ that satisfies $\s{x}=s$. As a consequence, if $(X_t)_{t\ge0}$ is irreducible, then $(S_t)_{t\ge0}$ is irreducible.

Now consider the other direction, and let $x$ and $y$ be arbitrary elements of $\mathcal{A}_\kappa^L$.  The irreducibility of $(S_t)_{t\ge 0}$ implies that there exists a sequence of $s_0,\ldots,s_{m},\ldots,s_{m+n} \in \mathcal{T}_L^{\kappa}$ such that $s_0 = \s x$, $s_{m} = 0$,  $s_{m+n} = \s y$, and 
\[
 \prod_{k=1}^{m+n}r_{s_{k-1},s_{k}} > 0.
\]
By the relation of $(S_t)_{t\ge0}$ and $(X_t)_{t\ge0}$, there exists a sequence of $z_0, \ldots, z_m,\ldots, z_{m+n} \in \mathcal{A}_\kappa^L$ such that $\s z_k = s_k$ for $k\in\{0,\ldots, m+n\}$ and 
\[
 \prod_{k=1}^{m+n}q_{z_{k-1},z_{k}} > 0.
\]
By the $L$-informativity of $\mathcal{A}_\kappa$, there exist (possibly distinct) permutation matrices ${\Perm},{\Perm}'\in  \{0,1\}^{L \times L}$ such that ${\Perm} z_0 = x$ and ${\Perm}'z_{m+n} = y$. We define 
\[
\tilde{z}_k = 
\begin{cases}
{\Perm} z_k&\text{ if } k \in\{0,\ldots,m\},\\
{\Perm}' z_k & \text{ if } k \in\{m,\ldots,m+n\}.
\end{cases}
\]
The definition of $\tilde{z}_m= {\Perm} z_m = {\Perm}' z_m$ is consistent, because $s_m = \s z_m = 0$ and thus $z_m = (0,\ldots,0) \in  \mathcal{A}_\kappa^L$. By the permutation invariance (\Cref{as1}) of $(X_t)_{t\ge0}$, we have 
\[
 \prod_{k=1}^{m+n}q_{\tilde{z}_{k-1},\tilde{z}_{k}} > 0.
\]
Note that $\tilde{z}_0 = x$ and $\tilde{z}_{m+n} = y$. Thus, $(X_t)_{t\ge0}$ is irreducible. 
If both chains $(X_t)_{t\ge0}$ and $(S_t)_{t\ge0}$ are irreducible, they are positive recurrent (as the state space is finite), and thus have unique invariant distributions  (see e.g.\ \citealp[Theorem~3.5.2]{Norris1998}). Let $\pi=(\pi_x)_{x\in \mathcal{A}^L_\kappa}$ be the invariant distribution of $(X_t)_{t\ge0}$. Then, for any $y \in \mathcal{A}^L_\kappa$, 
\[
\sum_{x \in \mathcal{A}^L_\kappa}\pi_x q_{x,y} = 0. 
\]
Define $\gamma=(\gamma_s)_{s\in\mathcal{T}_L^\kappa}$ with $\gamma_s = \sum_{x:\s{x}=s}\pi_x$. By Proposition \ref{prop2}, we have 
\begin{equation}\label{e:r:sum:q}
r_{s,s'}=\sum_{y:\s{y}=s'}q_{x,y}
\end{equation}
for any $x\in \mathcal{A}_\kappa^L$ with $\s{x}=s$. Then, for any $s'\in \mathcal{T}_L^\kappa$, 
\[ 
\sum_{s \in\mathcal{T}_L^\kappa}\gamma_s r_{s,s'}= \sum_{s \in\mathcal{T}_L^\kappa}\sum_{y:\s{y}=s'}\sum_{x:\s{x}=s}\pi_xq_{x,y}=\sum_{y:\s{y}=s'}\sum_{x \in \mathcal{A}^L_\kappa} \pi_xq_{x,y}=0.
\]
Thus, $\gamma=(\gamma_s)_{s\in\mathcal{T}_L^\kappa}$ is the invariant distribution of $(S_t)_{t\ge0}$. Due to the permutation invariance (cf.\ Assumption~\ref{as1}) of $Q = (q_{x,y})_{x,y\in\mathcal{X}}$ and the uniqueness of invariant distribution, we obtain that $\pi_x=\pi_{x'}$ for all $x,x'$ satisfying $\s{x}=\s{x'}$. This together with the $L$-informativity of $\mathcal{A}_{\kappa}$ implies that, for any  $x$ with $\s{x}=s$,
\[ 
\gamma_s=\sum_{x':\s{x'}=s}\pi_{x'}=\frac{L!}{(s)^{-1}_1! \dots (s)^{-1}_\kappa!}\pi_x.
\]
\smallskip

\textbf{Part~\ref{th:rev:kSDMC:2}}  As in part~\ref{th:rev:kSDMC:1}, let $\pi = (\pi_x)_{x\in \mathcal{A}_{\kappa}^L}$ and $\gamma=(\gamma_s)_{s\in\mathcal{T}_L^\kappa}$ denote the invariant distributions of $(X_t)_{t\ge0}$ and  $(S_t)_{t\ge0}$, respectively.

Assume first that $(X_t)_{t\ge0}$ is reversible. Then
\begin{equation*}
\pi_x q_{x,y} = \pi_y q_{y,x}\quad\text{for all }x,y \in \mathcal{A}_{\kappa}^L.
\end{equation*}
Then, by \eqref{e:r:sum:q} and $\gamma_s = \sum_{x:\s{x} = s} \pi_x$, we have, for any $s,s'\in \mathcal{T}_L^\kappa$,
\begin{equation}\label{e:db:r}
\gamma_s r_{s,s'}=\sum_{x,y:\s{x}=s,\s{y}=s'}\pi_x q_{x,y}     =\sum_{x,y:\s{x}=s,\s{y}=s'}\pi_y q_{y,x}=\gamma_{s'} r_{s',s},
\end{equation}
which shows the reversibility of $(S_t)_{t\ge0}$.

Assume next that $(S_t)_{t\ge0}$ is reversible, namely, the detailed balance equations~\eqref{e:db:r} hold. Consider arbitrarily $x,y\in \mathcal{A}_{\kappa}^L$ with  $\norm{x-y}_0 = 1$. Then there exists $i \in \{1,\ldots, L\}$ such that $x_i = c_a$ and $y_i = c_b$ for some $a\neq b \in \{1,\ldots,\kappa\}$, while $x_j = y_j$ for all $j \neq i$. Let $s = \s{x}$ and $s' = \s{y}$. Then $s'=s+c_b-c_a$, and thus 
\begin{equation}\label{e:relation:s:sp}
(s')_b^{-1}=1+(s)^{-1}_b, \quad (s)_a^{-1}=1+(s')^{-1}_a, \quad (s')_k^{-1} = (s)_k^{-1}\text{ for } k\not\in\{a,b\}.
\end{equation} 
By Theorem~\ref{th:id:kSDMC}\ref{th:id:kSDMC:2} and~\eqref{e:db:r}, we have
\[
\gamma_s (s)_a^{-1}\param_{a,b;s}=\gamma_{s'}(s')_b^{-1}\param_{b,a;s'}. 
\]
This, together with \eqref{e:relation:s:sp} and the relation between $\pi$ and $\gamma$ in part~\ref{th:rev:kSDMC:1}, implies 
\[
\pi_x q_{x,y}=\frac{\prod_{k=1}^\kappa (s)^{-1}_k!}{L}\gamma_{s} \param_{a,b;s} =\frac{\prod_{k=1}^\kappa (s')^{-1}_k!}{L}\gamma_{s'}\param_{b,a;s'}=\pi_yq_{y,x},
\]
which further yields the reversibility of $(X_t)_{t\ge0}$, due to the sparse structure of $Q = (q_{x,y})$, see Definition~\ref{d:kSDMC}.

\smallskip

\textbf{Part~\ref{th:rev:kSDMC:3}} Consider arbitrarily two distinct states $x,y\in \mathcal{A}_\kappa^L$. Let $1\le i_1 < \cdots < i_m \le L$ coordinates where $x$ and $y$ differ, i.e.~$x_j \neq y_j$ for $j \in \{i_1,\ldots, i_m\}$ and $m = \norm{x - y}_0$. We define a sequence of states as follows: $z_0 = x$ and $z_k = (z_{k,1},\ldots,z_{k,L})$, $k\in \{1,\ldots,m\}$, with
$$
z_{k,j} = 
\begin{cases}
y_j, & \text{ if } j \le i_k,\\
x_j, & \text{ if } j > i_k.
\end{cases}
$$
Then $z_m = y$, and $\norm{z_k - z_{k-1}}_0 = 1$ for $k \in \{1,\ldots, m\}$. Thus, $q_{z_{k-1}, z_k} > 0$, by positivity of the model parameters. Hence, $(X_t)_{t\ge0}$ is irreducible, and by part~\ref{th:id:kSDMC:1} $(S_t)_{t\ge0}$ is also irreducible.

Due to the $L$-informativity of $\mathcal{A}_{\kappa}$, we can identify $(S_t)_{t\ge 0}$ as a Markov population process taking values in the set $\mathcal{M}_L^\kappa$ of configurations. Then, by \citet[Section~2]{Kingman69}, assumption~\eqref{eq:cycle} implies that $(S_t)_{t\ge 0}$ is reversible, and by part~\ref{th:id:kSDMC:2} $(X_t)_{t\ge0}$ is also reversible. 
\end{proof}

In the more-than-two-state setting ($\kappa \ge 3$), in contrast to the case $\kappa = 2$ (cf.\ Lemma~\ref{l:ir:sd} and Proposition~\ref{bd_proc}\ref{i:bd_proc:1}), positivity of all model parameters is sufficient, but not necessary, for irreducibility of $(S_t)_{t\ge0}$ and $(X_t)_{t\ge0}$. For the characterization of reversibility, the additional condition~\eqref{eq:cycle} is required. This condition constitutes a simplified form of Kolmogorov's criterion \citep[Theorem 1.8]{Kelly1979}, as it imposes constraints only on triangles (i.e.\ closed paths of length three) rather than on cycles of arbitrary length.

\subsection{Informativity and Diophantine equations}\label[appendix]{app:diophantine}
The informativity condition (introduced in \Cref{d:inform}) is closely related to linear Diophantine equations.

\begin{lemma}\label{dioph}
The following two statements are equivalent:
\begin{enumerate}
    \item \label{Lemma_inf_1} $\mathcal{A}_\kappa = \{c_1,\dots,c_{\kappa}\}$ is $L$-informative.
    \item \label{Lemma_inf_2} If a nonzero vector $d=(d_1,\dots,d_\kappa)\in \mathbb{Z}^\kappa$ satisfies the linear Diophantine equations
    \[
    \sum_{i=1}^\kappa c_i d_i=0\quad\text{and}\quad \sum_{i=1}^\kappa d_i=0,
    \]
    then 
    \[
    \sum_{i= 1}^{\kappa}\max\{d_i, 0\} >L.
    \]
\end{enumerate}
\end{lemma}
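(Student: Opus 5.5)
The plan is to prove both implications by contraposition, using the correspondence between pairs of colliding configurations and integer difference vectors. For two configurations $m,m'\in\mathcal{M}_L^\kappa$, set $d=m-m'\in\mathbb{Z}^\kappa$. Then $\sum_i d_i=0$ automatically, since both configurations sum to $L$. Moreover, $\sum_i c_i d_i=0$ holds exactly when $m$ and $m'$ are mapped to the same value in $\mathcal{T}_L^\kappa$. The remaining task is to match the size condition $\sum_i\max\{d_i,0\}\le L$ with the requirement that $d$ splits as a difference of two elements of $\mathcal{M}_L^\kappa$.

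For (i)$\Rightarrow$(ii), suppose some nonzero $d\in\mathbb{Z}^\kappa$ satisfies both equations and $p:=\sum_i\max\{d_i,0\}\le L$. Write $d=d^+-d^-$ with $d_i^+=\max\{d_i,0\}$ and $d_i^-=\max\{-d_i,0\}$. Since $\sum_i d_i=0$, we have $\sum_i d_i^-=p$. Now pad both parts with the same slack, for instance by putting $L-p$ extra units on level $c_1$. This gives $m=d^++(L-p)e_1$ and $m'=d^-+(L-p)e_1$. Both have nonnegative integer entries summing to $L$, so both lie in $\mathcal{M}_L^\kappa$. They are distinct because $d\neq0$, yet $\sum_i c_i m_i-\sum_i c_i m'_i=\sum_i c_i d_i=0$. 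This contradicts injectivity.

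For (ii)$\Rightarrow$(i), suppose instead that $m\neq m'$ in $\mathcal{M}_L^\kappa$ have the same image, and set $d=m-m'$. Then $d$ is a nonzero integer vector satisfying both Diophantine equations. For the size condition, note that $\max\{d_i,0\}\le m_i$ because $m'_i\ge0$. Hence $\sum_i\max\{d_i,0\}\le\sum_i m_i=L$, which contradicts (ii).

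Neither direction is a real obstacle. The only point needing care is the padding step in (i)$\Rightarrow$(ii): we must check that adding the same slack $L-p$ to both $d^+$ and $d^-$ keeps the entries nonnegative and the totals equal to $L$. This is exactly where the assumption $p\le L$ and the identity $\sum_i d_i^+=\sum_i d_i^-$ (which comes from $\sum_i d_i=0$) are used.
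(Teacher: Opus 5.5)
Your proposal is correct and follows the paper's proof essentially step for step. For (i)$\Rightarrow$(ii), the paper also pads $d^+$ and $d^-$ with the slack $L-p$ on the first level. For (ii)$\Rightarrow$(i), it uses the same bound $\max\{d_i,0\}\le m_i$ to get $\sum_i\max\{d_i,0\}\le L$.
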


\begin{proof}
\textbf{\ref{Lemma_inf_1} $\implies$ \ref{Lemma_inf_2} }
Anticipating contradiction, we assume that there exists a nonzero vector  $d\in  \mathbb{Z}^\kappa$ such that 
\[
\sum_{i=1}^\kappa c_i d_i=0,\quad \sum_{i=1}^\kappa d_i=0\quad \text{and}\quad \sum_{i:d_i\ge 0}d_i\le L.
\] 
It follows that $\sum_{i:d_i\ge 0}d_i=\sum_{i:d_i\le 0}(-d_i)$ and $L-\sum_{i:d_i\ge 0}d_i\ge 0$. Define 
\begin{align*}
m_k^+ & = \max(d_k,0) + \mathbb{I}(k=1)\Bigl(L - \sum_{i : d_i \ge 0} d_i\Bigr),\\
\text{and}\quad m_k^- & = \max(-d_k,0) + \mathbb{I}(k=1)\Bigl(L - \sum_{i : d_i \ge 0} d_i\Bigr).
\end{align*} 
Then $m_k^+,m_k^-\in \{0,\dots,L\}$ and $\sum_{k=1}^\kappa m_k^+=\sum_{k=1}^\kappa m_k^-=L$. Thus,  $m^+=(m^+_1,\dots,m^+_\kappa)$ and $m^-=(m^-_1,\dots,m^-_\kappa)$ are valid configurations in $\mathcal{M}^\kappa_L$. Moreover, it holds
\[ 
\sum_{k=1}^\kappa c_k m_k^+ - \sum_{k=1}^\kappa c_k m^-_k=\sum_{k=1}^\kappa c_k d_k=0.
\]
By the $L$-informativity of $\mathcal{A}_\kappa$,  this implies that $m^+_k=m^-_k$ and hence $d_k = m^+_k-m^-_k$, for all $k\in\{1,\dots,\kappa\}$, which contradicts with the fact that $d$ is a nonzero vector. 

\smallskip

\textbf{\ref{Lemma_inf_2}$\implies$ \ref{Lemma_inf_1} } 
Consider arbitrarily $m = (m_1,\ldots, m_{\kappa}), m' = (m'_1,\ldots, m'_{\kappa})\in \mathcal{M}_L^\kappa$ such that $\sum_{i=1}^{\kappa}c_im_i = \sum_{i= 1}^{\kappa}c_i m'_i$. We define $d = (d_1,\ldots, d_{\kappa})$ with $d_i = m_i - m'_i$. Then
\[
\sum_{i=1}^\kappa d_i=0\quad\text{and}\quad\sum_{i=1}^\kappa  c_i d_i = 0.
\]
Note further that 
$$
\sum_{i : d_i\ge 0}d_i\le \sum_{i:d_i\ge 0}m_i\le \sum_{i=1}^\kappa m_i=L.
$$
Thus, by assumption, $d = 0$, i.e.~$m_i = m'_i$ for all $i$, which shows that  $\mathcal{A}_\kappa$ is $L$-informative. 
\end{proof}

The previous result establishes a direct connection to the linear Diophantine system
\begin{equation}\label{e:diosys}
\begin{pmatrix}
1 & 1 & \dots & 1\\
c_1 & c_2 & \dots & c_\kappa
\end{pmatrix} d =
\begin{pmatrix}
0\\
0
\end{pmatrix}, \quad d \in \mathbb{Z}^\kappa.
\end{equation}

We next consider two special cases in which informativity, or equivalently the structure of the Diophantine system, can be characterized more explicitly.

\begin{lemma}
The following statements hold:
\begin{enumerate}
    \item (Irrational case) \label{cor_F_1}
    If real numbers $c_1,\dots,c_\kappa$ are linearly independent over $\mathbb{Q}$, i.e.~linearly independent with respect to rational coefficients, then $\mathcal{A}_\kappa = \{c_1,\dots,c_\kappa\}$ is $L$-informative for all $L\in \mathbb{N}$.
    \item \label{cor_F_4} (Rational case with $\kappa=3$) Let $L\in \mathbb{N}$, and $p,q \in \mathbb{N}$ such that $\gcd(p,q)=1$ and $p<q$, where $\gcd(p,q)$ denotes the greatest common divisor of $p$ and $q$. Then $\mathcal{A}_3=\{0,p/q,1\}$ is $L$-informative if and only if $L<q$.
\end{enumerate}
\end{lemma}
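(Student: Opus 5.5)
Both parts will be derived from Lemma~\ref{dioph}, which reduces $L$-informativity to the question of whether the system \eqref{e:diosys} has a nonzero integer solution $d$ with positive part $\sum_i\max\{d_i,0\}\le L$.

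\textbf{Irrational case.} Suppose $d\in\mathbb{Z}^\kappa$ satisfies $\sum_i c_i d_i=0$. Since the integers $d_i$ are in particular rationals, linear independence of $c_1,\dots,c_\kappa$ over $\mathbb{Q}$ forces $d=0$. So no nonzero $d$ solves \eqref{e:diosys} at all, and condition~\ref{Lemma_inf_2} of Lemma~\ref{dioph} holds vacuously for every $L$. Hence $\mathcal{A}_\kappa$ is $L$-informative for all $L$. One remark is needed here: because $c_1=0$, literal $\mathbb{Q}$-linear independence cannot hold. I will therefore read the hypothesis as saying that $\sum_i c_i d_i=0$ with $d\in\mathbb{Z}^\kappa$ has only the trivial solution, or more naturally as linear independence modulo the constraint $\sum_i d_i=0$ (for example, $1,c_2,\dots,c_{\kappa-1}$ independent over $\mathbb{Q}$). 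Under the latter reading, the two equations together with $c_1=0$ and $c_\kappa=1$ give $d_\kappa+\sum_{1<i<\kappa}c_id_i=0$, so $d_2=\cdots=d_\kappa=0$, and then $\sum_i d_i=0$ yields $d_1=0$. In either reading the conclusion is the same.

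\textbf{Rational case, $\kappa=3$.} Write $d=(d_1,d_2,d_3)$. The two equations are $d_1+d_2+d_3=0$ and $(p/q)d_2+d_3=0$, i.e.\ $pd_2=-qd_3$. Since $\gcd(p,q)=1$, $q$ divides $d_2$, so $d_2=qk$ and $d_3=-pk$ for some $k\in\mathbb{Z}$, and then $d_1=(p-q)k$. Every solution is therefore an integer multiple of $k\,(p-q,\,q,\,-p)$. For $k>0$ the positive entry is $qk$ alone, since $p-q<0$ and $-p<0$. For $k<0$ the positive entries are $(q-p)|k|$ and $p|k|$, which sum to $q|k|$. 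In both cases $\sum_i\max\{d_i,0\}=q|k|$. Over nonzero solutions this quantity is minimized at $|k|=1$, where it equals $q$. Condition~\ref{Lemma_inf_2} of Lemma~\ref{dioph} thus reads $q>L$, and the equivalence in Lemma~\ref{dioph} gives that $\mathcal{A}_3$ is $L$-informative if and only if $L<q$.

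The only delicate step is the parametrization of the solution set in the rational case, which rests on coprimality, together with the resolution of the $c_1=0$ issue in the irrational case; everything else is immediate from Lemma~\ref{dioph}.
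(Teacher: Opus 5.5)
Your proposal is correct and follows the paper's proof essentially step for step: both parts go through Lemma~\ref{dioph}, the irrational case by showing that \eqref{e:diosys} has no nonzero integer solution, and the rational case via the parametrization $d=k(p-q,\,q,\,-p)$, whose positive part is $q\lvert k\rvert$ and is minimized at $\lvert k\rvert=1$.

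Your observation that $c_1=0$ makes literal $\mathbb{Q}$-linear independence impossible is valid, and the paper does not address it. Note that your first proposed reading (only the trivial integer solution to $\sum_i c_i d_i=0$) is equivalent to the literal hypothesis and is therefore equally vacuous. It is your second reading, $\mathbb{Q}$-independence of $1,c_2,\dots,c_{\kappa-1}$, that gives the statement content; this is equivalent to independence of the unnormalized levels, because informativity is invariant under affine rescaling of the alphabet.
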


\begin{proof}
\textbf{Part~\ref{cor_F_1}} 
Let $d = (d_1,\ldots, d_{\kappa}) \in \mathbb{Z}^\kappa$ be a solution to \eqref{e:diosys}. By the linear independence of $c_1,\dots,c_\kappa$  over $\mathbb{Q}$, we have $d=0$. Further, Lemma~\ref{dioph} yields the statement.
   
\smallskip

\textbf{Part~\ref{cor_F_4}} 
Note that \eqref{e:diosys} becomes
\[ 
d_2 (p/q)+d_3=0,\quad d_1+d_2+d_3=0.
\]
The general integer solution is given by 
$$
d_1 = n(p-q),\quad d_2=nq,\quad d_3=-np, \quad n\in \mathbb{Z}.
$$
If $d = (d_1,d_2,d_3) \neq 0$, then $n \neq 0$ and thus
\[
\sum_{i:d_i\ge 0}d_i=\mathbb{I}(n\ge 0)nq+\mathbb{I}(n<0)\bigl(n(p-q)-np\bigr)=\abs{n}q \ge q,
\]
with equality attained for $n \in \{-1,1\}$. By Lemma~\ref{dioph}, $\mathcal{A}_3$ is $L$-informative if and only $L<\sum_{i:d_i\ge 0}d_i$ for any nonzero solution $d = (d_1,d_2,d_3)$, which is equivalent to $L < q$.
\end{proof}

\subsection{Informativity under noisy measurements}
It is natural to expect that, for a fixed alphabet $\mathcal{A}_{\kappa}$, the large-sample properties (i.e.~as the number of observations tends to infinity) of maximum likelihood estimator in Section~\ref{s:estimation} would extend to general $\kappa$-state SDMCs. However, when the alphabet $\mathcal{A}_{\kappa}$ is allowed to vary with the sample size, a stronger notion of informativity is required to recover the latent configuration of $X_t$ from noisy observations of the associated sum process $S_t$. We therefore introduce a strengthened notion of informativity that serves as a foundation for asymptotic analysis in settings with varying alphabets. A systematic treatment of the resulting asymptotic statistical theory lies beyond the scope of the present paper but constitutes a promising direction for future research.

\begin{definition}[$\delta$-informativity]\label[definition]{d:delta:inform}
An alphabet $\mathcal{A}_{\kappa}$ is said to be \emph{$L$-$\delta$-informative}, for some $\delta \ge 0$, if for every pair of distinct configurations $(m_1,\dots,m_{\kappa}) \neq (m_1',\dots,m_{\kappa}') \in \mathcal{M}_{L}^{\kappa}$,
\[
\min_{|\varepsilon| \le \delta,\; |\varepsilon'| \le \delta}
\left|
\sum_{i=1}^{\kappa} c_i m_i + \varepsilon
-
\sum_{i=1}^{\kappa} c_i m_i' - \varepsilon'
\right| > 0.
\]
The alphabet $\mathcal{A}_{\kappa}$ is called \emph{$\delta$-informative} if it is $L$-$\delta$-informative for every $L \in \mathbb{N}$.
\end{definition}

When $\delta = 0$, $L$-$\delta$-informativity (\Cref{d:delta:inform}) reduces to $L$-informativity (\Cref{d:inform}). 
We define the \emph{discernibility} of $\mathcal{A}_{\kappa}$ over the configuration space $\mathcal{M}_{L}^{\kappa}$ by
\[
d_{L}(\mathcal{A}_{\kappa})
\;:=\;
\min_{\substack{(m_1,\ldots,m_\kappa) \neq (m_1',\ldots,m_\kappa')\\ \in \mathcal{M}_{L}^{\kappa}}}
\left|
\sum_{i=1}^{\kappa} c_i m_i
-
\sum_{i=1}^{\kappa} c_i m_i'
\right|.
\]
The quantity $d_L(\mathcal{A}_\kappa)$ measures the minimal separation between distinct configurations in the aggregated domain and provides an equivalent characterization of $L$-$\delta$-informativity.

\begin{lemma}\label{e-inf}
An alphabet $\mathcal{A}_{\kappa}$ is $L$-$\delta$-informative if and only if 
\[
d_{L}(\mathcal{A}_{\kappa}) > 2\delta.
\]
\end{lemma}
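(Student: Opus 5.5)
The plan is to reduce the min–max expression in Definition~\ref{d:delta:inform} to a closed form, pair by pair, and then minimise over pairs. Fix two distinct configurations $m\neq m'$ in $\mathcal{M}_L^\kappa$. Write $D:=\sum_{i=1}^\kappa c_i m_i-\sum_{i=1}^\kappa c_i m_i'$. The quantity in the definition is then
\[
f(D):=\min_{|\varepsilon|\le\delta,\;|\varepsilon'|\le\delta}\abs{D+\varepsilon-\varepsilon'} .
\]
As $(\varepsilon,\varepsilon')$ ranges over $[-\delta,\delta]^2$, the difference $\varepsilon-\varepsilon'$ ranges exactly over the closed interval $[-2\delta,2\delta]$. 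Hence $f(D)$ is the distance from $D$ to the interval $[-2\delta,2\delta]$, namely $f(D)=\max\{\abs{D}-2\delta,0\}$, and the minimum is attained by compactness. It follows that $f(D)>0$ if and only if $\abs{D}>2\delta$.

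Applying this to every pair, $\mathcal{A}_\kappa$ is $L$-$\delta$-informative if and only if $\abs{\sum_i c_i m_i-\sum_i c_i m_i'}>2\delta$ for all distinct $m,m'\in\mathcal{M}_L^\kappa$. The set $\mathcal{M}_L^\kappa$ is finite, so the minimum defining $d_L(\mathcal{A}_\kappa)$ is attained whenever $\mathcal{M}_L^\kappa$ has at least two elements, which holds since $\kappa\ge2$ and $L\ge1$. A finite collection of strict inequalities $\abs{D_{m,m'}}>2\delta$ holds exactly when its minimum exceeds $2\delta$. That minimum is $d_L(\mathcal{A}_\kappa)$, which proves both directions.

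There is no real obstacle. The only points needing care are the identity $\{\varepsilon-\varepsilon'\}=[-2\delta,2\delta]$ together with the resulting distance formula, and the observation that finiteness of $\mathcal{M}_L^\kappa$ makes the minimum over pairs attained. Attainment is what turns "strictly greater for every pair" into "the minimum is strictly greater", so no strict inequality is lost in passing to the infimum. As a consistency check, the case $\delta=0$ recovers $L$-informativity as $d_L(\mathcal{A}_\kappa)>0$.
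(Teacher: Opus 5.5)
Your proof is correct and is essentially the paper's argument. The paper bounds $|D+\varepsilon-\varepsilon'|\ge |D|-2\delta$ by the reverse triangle inequality for one direction. For the other direction it argues by contradiction, using finiteness to pick a pair with $|D|\le 2\delta$ and taking the midpoint choice $\varepsilon=-D/2$, $\varepsilon'=D/2$. These two steps are exactly the two halves of your closed form $\min_{|\varepsilon|,|\varepsilon'|\le\delta}|D+\varepsilon-\varepsilon'|=\max\{|D|-2\delta,0\}$, which lets you state both directions as a single equivalence.
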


\begin{proof}
Introduce the notation $w(m):=\sum_{i=1}^\kappa c_i m_i$ for $m=(m_1,\dots,m_\kappa)\in \mathcal{M}_{L}^{\kappa}$. 

\smallskip

\textbf{Direction ``$\impliedby$'':}  Consider arbitrarily $\varepsilon,\varepsilon'$ with $\abs{\varepsilon}\le \delta,\abs{\varepsilon'}\le \delta$,  and arbitrarily $m,m' \in \mathcal{M}_L^\kappa$ with $m\neq m'$. Note that 
\[
\abs{w(m)+\varepsilon-w(m')-\varepsilon'}\ge\abs{w(m)-w(m')}-\abs{\varepsilon-\varepsilon'} \ge d_{L}(\mathcal{A}_{\kappa})-2\delta>0.
\]
This yields the $L$-$\delta$-informativity of $\mathcal{A}_{\kappa}$.

\smallskip

\textbf{Direction ``$\implies$'':} Arguing by contradiction, suppose that $d_{L}(\mathcal{A}_{\kappa})\le 2 \delta$. Then there exist $m,m'\in \mathcal{M}_{L}^{\kappa} $ with $m\neq m'$ such that $\abs{w(m)-w(m')}\le 2 \delta$. We set 
$$
\varepsilon=\frac{w(m')-w(m)}{2} \quad \text{ and }\quad \varepsilon'=-\varepsilon.
$$
Then $\abs{\varepsilon}\le  \delta$, $\abs{\varepsilon'}\le \delta$ and 
\(
\abs{w(m)+\varepsilon-w(m')-\varepsilon'}=0.
\)
This shows that $\mathcal{A}_{\kappa}$ is not $L$-$\delta$-informative, yielding a contradiction. Therefore, it must hold that $d_{L}(\mathcal{A}_{\kappa})> 2 \delta$.
\end{proof}

The $L$-$\delta$-informativity, though formulated via deterministic perturbations, is capable of accommodating stochastic noises. 

\begin{example}
Fix $t \ge 0$ and suppose that the sum process $S_t$, induced by a $\kappa$-state SDMC $X_t$, is observed with additive noise,
\[
Y_t = S_t + \varepsilon_t,
\]
where $\varepsilon_t \sim \mathcal{N}(0,\sigma^2)$ is independent of $S_t$. 
Assume that $\mathcal{A}_{\kappa}$ is $L$-$\delta$-informative and consider the event
\[
A_t := \left\{\text{the configuration of $X_t$ is uniquely recoverable from $Y_t$}\right\}.
\]
By Lemma~\ref{e-inf}, any two distinct configurations are separated by at least $2\delta$ in the noiseless domain. Hence, misclassification can occur only if $|\varepsilon_t| \ge \delta$. 
Applying standard Gaussian tail bounds (i.e.~the Mills ratio) together with Lemma~\ref{e-inf} yields
\[
\prob{A_t}\;\ge\;\max(0,1-2e^{-\delta^2/(2\sigma^2})).
\]
This bound makes precise that $L$-informativity is preserved with high probability under Gaussian noise, with an error exponent determined by the separation margin $\delta$ relative to the noise level $\sigma$.
\end{example}

\clearpage

\section{Continuous-time Markov chains}\label{sec_ct}

This section provides a brief introduction to the theory of continuous-time Markov chains. Let $\{X_t : t \ge 0\}$ be a right-continuous stochastic process with a finite state space $\mathcal{X}$. We begin by recalling the definition of a continuous-time Markov chain.

\begin{definition}[Continuous-time Markov chain]\label{cont_t_mc}
A right-continuous stochastic process $(X_t)_{t \ge 0}$ taking values in a finite state space $\mathcal{X}$ is called a (time-homogeneous) \emph{continuous-time Markov chain} with rate matrix (or infinitesimal generator)
$Q = [q_{x,y}]_{x,y \in \mathcal{X}} \in \mathbb{R}^{|\mathcal{X}| \times |\mathcal{X}|}$
if, for all $x,y \in \mathcal{X}$, uniformly in $t \ge 0$,
\[
\mathbb{P}(X_{t+\delta} = y \;\vert\;X_t = x)
= \mathbb{I}\{x=y\} + \delta q_{x,y} + o(\delta),
\qquad \text{as } \delta \searrow 0.
\]
The vector $\pi = (\pi_x)_{x\in \mathcal{X}}$, with $\pi_x = \mathbb{P}(X_0 = x)$,  is referred to as the initial probability vector.
\end{definition}

The rate matrix $Q = [q_{x,y}]_{x,y \in \mathcal{X}}$ of a continuous-time Markov chain satisfies
\[
q_{x,y} \ge 0 \quad \text{for all } x \neq y,
\qquad \text{and} \qquad
\sum_{y \in \mathcal{X}} q_{x,y} = 0 \quad \text{for all } x \in \mathcal{X}.
\]
Conversely, for any matrix $Q$ satisfying these conditions, there exists a (time-homogeneous) continuous-time Markov chain with $Q$ as its rate matrix.

The infinitesimal characterisation of Markov property in Definition~\ref{cont_t_mc} is equivalent to the formulation via transition probability matrices $P(\delta) =  [p_{x,y}(\delta)]_{x,y \in \mathcal{X}}$, $\delta > 0$, in the sense that 
\[
\prob{X_{t_{n+1}} = x_{n+1} \;\middle\vert\;X_{t_n} = x_n, \dots, X_{t_0} = x_0}
= p_{x_n,x_{n+1}}(t_{n+1} - t_n)
\]
for any $t_0 \le t_1 \le \dots \le t_{n+1}$, see e.g.\ \citet[Theorem 2.8.2]{Norris1998}. For a given continuous-time Markov chain, its rate matrix $Q$ and transition probability matrices $P(\delta)$ are linked via 
\[
P(\delta) = \exp(Q \delta),\quad \delta > 0.
\]
As a consequence, the transition probabilities admit the series representation
\[
p_{x,y}(\delta) = \prob{X_{t+\delta} = y \;\middle\vert\;X_t = x}
= \sum_{k=0}^\infty \frac{\delta^k}{k!} q_{x,y}^{(k)},
\]
for $x,y \in \mathcal{X}$, where $q_{x,y}^{(k)}$ denotes the $(x,y)$-th entry of the matrix power $Q^k$.

The initial probability vector $\pi$ of $(X_t)_{t\ge0}$ is said to be an invariant (or stationary) distribution if $\pi^\intercal Q=0$.

A continuous-time Markov chain $X_t$ is \textit{irreducible}, if for all $x,y\in \mathcal{X}$
\[
\prob{X_{t+\delta}=y\;\middle\vert\;X_t=x}>0
\]
for all $t\ge 0,\delta>0$. By \citet[Theorem 3.2.1]{Norris1998}, a continuous-time Markov chain $X_t$ is irreducible, if and only if for any $x\neq y$ there exist states $s_0,\dots,s_n\in \mathcal{X},n\in \mathbb{N}$ with $s_0=x$ and $s_n=y$ such that
\[ 
\prod_{i=0}^{n-1}q_{s_i,s_{i+1}}(\theta)>0.
\] 

\subsection{Lumpability of Markov chains through sums}
\label{section_lump}
The lumping property (or lumpability)  describes the possibility of aggregating multiple
states of a Markov chain into a smaller state space such that the aggregated process
remains Markovian. A general treatment of lumpability for discrete-time Markov chains can
be found in \citet[Chapter~6.3]{Kemeny1976}, see also \citet{Ball1993} for the continuous-time setting.

Here, we focus on a specific form of lumping: states of a Markov chain
$X_t \in \mathcal{X} \equiv \mathcal{A}_{\kappa}^L$, with integer $\kappa \ge 2$, are aggregated according to their sum over coordinates, i.e.~$\s X_t$.

\begin{definition}[Lumping property via sums]\label{d:lump}
We say that a continuous-time Markov chain $(X_t)_{t \ge 0}$ satisfies the \emph{lumping property}
(with respect to sums) if, with $S_t = \s{X_t}$,
\[
\prob{S_{t+\delta}=s \;\middle\vert\;X_t=x}
=
\prob{S_{t+\delta}=s \;\middle\vert\;X_t=y}
\]
for all $x,y \in \mathcal{X}$ such that $\s{x}=\s{y}$ and
$\prob{X_t=x}\prob{X_t=y}>0$, and for all $s \in \mathcal{T}_L^{\kappa}$ in \eqref{e:sum:k:state}.
\end{definition}

An equivalent characterization of the lumping property is given as follows. 

\begin{prop}
\label{prop2}
Let $(X_t)_{t \ge 0}$ be a continuous-time Markov chain with state space $\mathcal{X}$, and $S_t = \mathcal{S}X_t$, $t \ge 0$, its corresponding sum process. The lumping property (Definition~\ref{d:lump}) of $(X_t)_{t \ge 0}$ is equivalent to
\[
\prob{S_{t+\delta}=s \;\middle\vert\;S_t= \s{x}}
=
\prob{S_{t+\delta}=s \;\middle\vert\;X_t=x}
\]
for any $s\in \mathcal{T}_L^{\kappa}$, and any $x \in \mathcal{X}$ such that $\prob{S_t=\s{x}}\prob{X_t=x}>0. $
\end{prop}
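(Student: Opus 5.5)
We prove the two implications separately. Throughout we write $A_s:=\{x\in\mathcal X:\s{x}=s\}$, and we use the Markov property of $(X_t)_{t\ge0}$ together with the tower property of conditional expectation. We only ever condition on events of positive probability, as the positivity constraints in Definition~\ref{d:lump} and in the statement require.

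\textbf{Lumping property $\Rightarrow$ displayed identity.} Fix $s$, $\delta$, $t$, and $x$ with $\prob{S_t=\s{x}}\prob{X_t=x}>0$. Decompose
\[
\prob{S_{t+\delta}=s\mid S_t=\s{x}}=\sum_{y\in A_{\s{x}},\,\prob{X_t=y}>0}\prob{S_{t+\delta}=s\mid X_t=y}\,\frac{\prob{X_t=y}}{\prob{S_t=\s{x}}}.
\]
This uses $\{S_t=\s{x}\}=\bigcup_{y\in A_{\s{x}}}\{X_t=y\}$, a disjoint union, together with the fact that $\{X_t=y\}\subseteq\{S_t=\s{x}\}$. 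By Definition~\ref{d:lump}, each conditional probability in the sum equals $\prob{S_{t+\delta}=s\mid X_t=x}$, because both $y$ and $x$ have positive probability and the same sum. The weights are nonnegative and sum to one. Hence the whole expression collapses to $\prob{S_{t+\delta}=s\mid X_t=x}$.

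\textbf{Displayed identity $\Rightarrow$ lumping property.} Take $x,y$ with $\s{x}=\s{y}$ and $\prob{X_t=x}\prob{X_t=y}>0$. Then $\prob{S_t=\s{x}}\ge\prob{X_t=x}>0$, so the positivity condition in the statement holds for both $x$ and $y$. Applying the identity to each gives
\[
\prob{S_{t+\delta}=s\mid X_t=x}=\prob{S_{t+\delta}=s\mid S_t=\s{x}}=\prob{S_{t+\delta}=s\mid X_t=y},
\]
which is exactly Definition~\ref{d:lump}.

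Neither direction presents a real obstacle; both are finite averaging arguments. The only point needing care is that every conditional probability is well defined, which is guaranteed by the positivity assumptions. Time-homogeneity is not needed, since $t$ and $\delta$ stay fixed throughout. The set $\mathcal T_L^\kappa$ enters only as the index set for $s$.
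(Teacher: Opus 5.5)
Your proof is correct and follows the paper's argument: the forward direction is the same law-of-total-probability averaging over the fiber $\{S_t=\s{x}\}$, and the converse is the direct comparison through $\prob{S_{t+\delta}=s\mid S_t=\s{x}}$ that the paper dismisses as immediate. You are slightly more careful than the paper in summing only over states $y$ with $\prob{X_t=y}>0$, which is what makes Definition~\ref{d:lump} applicable to every term.
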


\begin{proof}
Assume first that the lumping property (Definition~\ref{d:lump}) holds. Then, with $s' := \s{x}$,
\begin{align*}
\prob{S_{t+\delta}=s \;\middle\vert\;S_t=s'}
&= \sum_{z:\s{z}=s'}
\prob{S_{t+\delta}=s, X_t=z \;\middle\vert\;S_t=s'} \\
&= \frac{1}{\prob{S_t=s'}}
\sum_{z:\s{z}=s'}
\prob{S_{t+\delta}=s \;\middle\vert\;X_t=z}\prob{X_t=z} \\
&= \frac{\prob{S_{t+\delta}=s \;\middle\vert\;X_t=x}}{\prob{S_t=s'}}
\sum_{z:\s{z}=s'}\prob{X_t=z} \\
&= \prob{S_{t+\delta}=s \;\middle\vert\;X_t=x}
\end{align*}
where the third equality follows from the lumping property.

The converse implication is immediate.
\end{proof}

If the lumping property is satisfied, the aggregated sum process is again a
continuous-time Markov chain.

\begin{lemma}
\label[lemma]{Lemma3.9}
If $(X_t)_{t \ge 0}$ satisfies the lumping property (Definition~\ref{d:lump}), then the sum process
$S_t=\s{X_t}$, $t \ge 0$, is a continuous-time Markov chain.
\end{lemma}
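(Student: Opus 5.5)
Write $\bar p_{s,s'}(\delta):=\prob{S_{t+\delta}=s'\mid X_t=x}$ for any $x$ with $\s{x}=s$ and $\prob{X_t=x}>0$. By Definition~\ref{d:lump} this quantity does not depend on the particular $x$, and by time-homogeneity of $(X_t)_{t\ge0}$ it does not depend on $t$ either, since it equals $\sum_{y:\s{y}=s'}p_{x,y}(\delta)$. The goal is to show that for all $0\le t_0<t_1<\dots<t_{n+1}$ and sum values $s_0,\dots,s_{n+1}$ with positive conditioning probability,
\[
\prob{S_{t_{n+1}}=s_{n+1}\mid S_{t_n}=s_n,\dots,S_{t_0}=s_0}=\bar p_{s_n,s_{n+1}}(t_{n+1}-t_n).
\]
Once this holds, $(S_t)_{t\ge0}$ is a Markov process with transition function $\bar P(\delta)=(\bar p_{s,s'}(\delta))$. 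This process is right-continuous because it is a deterministic function of the right-continuous process $X_t$.

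To prove the identity, let $E$ denote the event $\{S_{t_{n-1}}=s_{n-1},\dots,S_{t_0}=s_0\}$ and decompose over the value of $X_{t_n}$:
\[
\prob{S_{t_{n+1}}=s_{n+1},S_{t_n}=s_n,E}=\sum_{x:\s{x}=s_n}\prob{S_{t_{n+1}}=s_{n+1}\mid X_{t_n}=x,E}\,\prob{X_{t_n}=x,E}.
\]
The past event $E$ is measurable with respect to $(X_{t_0},\dots,X_{t_{n-1}})$. The Markov property of $X$ therefore removes $E$ from the conditioning, so that $\prob{S_{t_{n+1}}=s_{n+1}\mid X_{t_n}=x,E}=\prob{S_{t_{n+1}}=s_{n+1}\mid X_{t_n}=x}$, and the right-hand side equals $\bar p_{s_n,s_{n+1}}(t_{n+1}-t_n)$ by the lumping property.

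This step needs care, and it is the main technical point. Definition~\ref{d:lump} only requires equality for states $x$ with $\prob{X_{t_n}=x}>0$. Any term with $\prob{X_{t_n}=x,E}>0$ automatically has $\prob{X_{t_n}=x}>0$, and the terms with $\prob{X_{t_n}=x,E}=0$ contribute nothing. Hence the factor $\bar p_{s_n,s_{n+1}}$ can be pulled out of the sum. Summing the remaining factors gives $\prob{S_{t_n}=s_n,E}$, and dividing through yields the claimed identity. The same computation with $E$ dropped, which is essentially Proposition~\ref{prop2}, gives $\prob{S_{t+\delta}=s'\mid S_t=s}=\bar p_{s,s'}(\delta)$.

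It remains to obtain the infinitesimal form of Definition~\ref{cont_t_mc}. Fix any $x$ with $\s{x}=s$. Summing the expansion $p_{x,y}(\delta)=\mathbb{I}\{x=y\}+\delta q_{x,y}+o(\delta)$ over $y$ with $\s{y}=s'$ gives
\[
\bar p_{s,s'}(\delta)=\mathbb{I}\{s=s'\}+\delta r_{s,s'}+o(\delta),\qquad r_{s,s'}:=\sum_{y:\s{y}=s'}q_{x,y}.
\]
The error is uniform in $t$ because $Q$ is fixed and the state space is finite. The rates satisfy $r_{s,s'}\ge0$ for $s\neq s'$, because then every $y$ in the sum differs from $x$ and so $q_{x,y}\ge0$. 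The rows sum to zero, $\sum_{s'}r_{s,s'}=0$, because the rows of $Q$ sum to zero. Thus $R=(r_{s,s'})$ is a valid rate matrix and $(S_t)_{t\ge0}$ is a continuous-time Markov chain with rate matrix $R$. Finally, $r_{s,s'}$ is well defined independently of the choice of $x$ among states with positive probability, because it is the derivative at $\delta=0$ of $\bar p_{s,s'}(\delta)$, which is itself independent of $x$.
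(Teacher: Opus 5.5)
Your proof is correct and follows essentially the same route as the paper: decompose over the state of $X$ at the penultimate observation time, use the Markov property of $(X_t)_{t\ge0}$ to discard the earlier past, and use lumpability (Proposition~\ref{prop2}) to pull out a factor that does not depend on the particular state. Your extra care with zero-probability states and your explicit derivation of the rate matrix $R$ (which the paper obtains separately in the proof of Theorem~\ref{Theorem2}) are useful refinements rather than a different argument.
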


\begin{proof}
Fix arbitrarily $0 \le t_1 < t_2 < \dots < t_n$ and $s_1,\dots,s_n \in \mathcal{T}_L^{\kappa}$. Then, by Proposition~\ref{prop2}, we obtain
\begin{align*}
&\prob{S_{t_n}=s_n, S_{t_{n-1}}=s_{n-1},\dots,S_{t_1}=s_1} \\
 =\; &
\sum_{x^{(1)}:\s{x^{(1)}}=s_1}\!\!\dots\!\!\sum_{x^{(n)}:\s{x^{(n)}}=s_n}
\prob{X_{t_n}=x^{(n)},\dots,X_{t_1}=x^{(1)}} \\
 =\; &
\sum_{x^{(1)}:\s{x^{(1)}}=s_1}\!\!\dots\!\!\sum_{x^{(n)}:\s{x^{(n)}}=s_n}
\prob{X_{t_n}=x^{(n)} \;\middle\vert\;X_{t_{n-1}}=x^{(n-1)}}
\prob{X_{t_{n-1}}=x^{(n-1)},\dots,X_{t_1}=x^{(1)}} \\
= \;&
\prob{S_{t_n}=s_n \;\middle\vert\;S_{t_{n-1}}=s_{n-1}}
\prob{S_{t_{n-1}}=s_{n-1},\dots,S_{t_1}=s_1}.
\end{align*}
It implies that $\prob{S_{t_n}=s_n \;\middle\vert\;S_{t_{n-1}}=s_{n-1},\dots,S_{t_1}=s_1}
=
\prob{S_{t_n}=s_n \;\middle\vert\;S_{t_{n-1}}=s_{n-1}}$, which is the Markov property of $(S_t)_{t\ge 0}$.
\end{proof}

\subsection{Permutation invariance }
The permutation invariance of $(X_t)_{t\ge0}$ in Assumption~\ref{as1} implies the lumping property in Definition~\ref{d:lump}. In the sequel, we impose Assumption~\ref{as1} in the general setting $\mathcal{X} = \mathcal{A}_{\kappa}^L$, where the alphabet size satisfies $\kappa \ge 2$, and assume that  $\mathcal{A}_\kappa$ is $L$-informative. 
\begin{prop}\label[prop]{prop1}
    Assumption~\ref{as1} implies the lumping property (Definition~\ref{d:lump}).
\end{prop}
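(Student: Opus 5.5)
The proof rests on one observation: any two states with the same sum are permutations of each other, and permutations preserve both the dynamics and the sum. Let $x,y\in\mathcal{X}$ with $\s{x}=\s{y}$ and $\prob{X_t=x}\prob{X_t=y}>0$. Under the standing $L$-informativity assumption on $\mathcal{A}_\kappa$, the value $\s{x}$ determines the configuration $(m_1,\dots,m_\kappa)$, so $x$ and $y$ have the same multiset of coordinates. Hence there is a permutation matrix $\Perm$ with $\Perm x=y$. (For $\kappa=2$ this is immediate, since both vectors contain exactly $\s{x}$ ones.)

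Next, use Assumption~\ref{as1} in its rate-matrix form, $q_{u,v}=q_{\Perm u,\Perm v}$ for all $u,v$. Writing $Q$ for the generator and $\Perm$ also for the induced permutation operator on $\mathcal{X}$, this says that $Q$ commutes with that operator. Therefore it also commutes with $e^{\delta Q}$, which gives
\[
\prob{X_{t+\delta}=z\;\middle\vert\;X_t=x}=\prob{X_{t+\delta}=\Perm z\;\middle\vert\;X_t=\Perm x}
\]
for every $\delta>0$ and every $z$. This transfer from rates to finite-time probabilities is the only point that needs care. The equivalence is already stated in Assumption~\ref{as1}, with justification via \Cref{perm_inv_char}, so the finite-time form can be taken directly from the assumption.

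Finally, sum over the level set of the target value:
\[
\prob{S_{t+\delta}=s\;\middle\vert\;X_t=x}=\sum_{z:\s{z}=s}\prob{X_{t+\delta}=z\;\middle\vert\;X_t=x}=\sum_{z:\s{z}=s}\prob{X_{t+\delta}=\Perm z\;\middle\vert\;X_t=y}.
\]
The map $z\mapsto \Perm z$ is a bijection of $\{z:\s{z}=s\}$ onto itself, because permuting coordinates does not change their sum. The last sum therefore equals $\prob{S_{t+\delta}=s\mid X_t=y}$, which is exactly Definition~\ref{d:lump}.

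The positivity conditions in the definition only ensure that the conditional probabilities are well defined. Since the chain is time-homogeneous, the conditional transition probabilities equal the entries of $e^{\delta Q}$ and do not depend on $t$. I expect no step to be a real obstacle; the one thing to check is that the permutation acts bijectively on the level sets.
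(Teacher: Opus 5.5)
Your proof is correct and follows essentially the same approach as the paper. Both take a permutation $P$ with $Px=y$, apply the finite-time form of Assumption~\ref{as1} to each term of $\sum_{z:\mathcal{S}z=s}\mathbb{P}(X_{t+\delta}=z\mid X_t=x)$, and reindex; you additionally make explicit two points the paper uses tacitly: $L$-informativity guarantees the permutation exists, and $z\mapsto Pz$ is a bijection of the level set $\{z:\mathcal{S}z=s\}$.
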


\begin{proof}
For any $x,y \in \mathcal{X}$ with $\s{x}=\s{y}$, there exists a permutation matrix
${\Perm}$ such that $y={\Perm}x$. Using the permutation invariance (Assumption~\ref{as1}), we obtain
\begin{align*}
\prob{S_{t+\delta}=s \;\middle\vert\;X_t=x}
&= \sum_{z:\s{z}=s}\prob{X_{t+\delta}=z \;\middle\vert\;X_t=x} = \sum_{z:\s{z}=s}\prob{X_{t+\delta}={\Perm}z \;\middle\vert\;X_t={\Perm}x} \\
&= \sum_{z:\s{z}=s}\prob{X_{t+\delta}={\Perm}z \;\middle\vert\;X_t=y} = \prob{S_{t+\delta}=s \;\middle\vert\;X_t=y}.\qedhere
\end{align*}
\end{proof}

The next result establishes the equivalence between two descriptions of permutation invariance in Assumption~\ref{as1}. 

\begin{lemma}\label[lemma]{perm_inv_char} Let $X_t$ be a continuous-time Markov chain with state space $\mathcal{A}_{\kappa}^L$, $\kappa\ge2$. Then, the following two versions of permutation invariance in Assumption \ref{as1} are equivalent:
\begin{enumerate}
\item\label{i:perm:p} For every permutation matrix ${\Perm}\in \{0,1\}^{L\times L}$, $x,y\in \mathcal{A}_{\kappa}^L$, $\delta >0$ and $t \ge 0$, $$\prob{X_{t+\delta}=y\;\middle\vert\;X_t =x}=\prob{X_{t+\delta}={\Perm}y\;\middle\vert\;X_t ={\Perm}x}.$$
\item\label{i:perm:q} For every permutation matrix ${\Perm}\in \{0,1\}^{L\times L}$ and $x,y\in \mathcal{A}_{\kappa}^L$, $q_{x,y}=q_{{\Perm}x,{\Perm}y}.$
\end{enumerate}
\end{lemma}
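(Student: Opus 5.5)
The plan is to show the two implications separately. Throughout we use the identity $P(\delta)=\exp(Q\delta)$ from the background above, together with the series representation $p_{x,y}(\delta)=\sum_{k\ge0}\frac{\delta^k}{k!}q^{(k)}_{x,y}$. It is convenient to encode a permutation $\Perm$ as the permutation matrix $\Pi_{\Perm}$ acting on the state space $\mathcal{A}_\kappa^L$, defined by $(\Pi_{\Perm})_{x,z}=\mathbb{I}(z=\Perm x)$. This is a bijection on states, since $\Perm$ merely relabels coordinates. In this notation, statement~\ref{i:perm:q} says exactly that $\Pi_{\Perm} Q\, \Pi_{\Perm}^{\intercal}=Q$, i.e.\ $Q$ commutes with $\Pi_{\Perm}$. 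Likewise, statement~\ref{i:perm:p} says that $P(\delta)$ commutes with $\Pi_{\Perm}$ for every $\delta>0$. By time-homogeneity, $P(\delta)$ does not depend on $t$.

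For \ref{i:perm:p}$\Rightarrow$\ref{i:perm:q}, use the infinitesimal characterisation in Definition~\ref{cont_t_mc}:
\[
q_{x,y}=\lim_{\delta\searrow0}\bigl(p_{x,y}(\delta)-\mathbb{I}\{x=y\}\bigr)/\delta .
\]
Apply this formula to the pair $(\Perm x,\Perm y)$ as well. Since $\mathbb{I}\{\Perm x=\Perm y\}=\mathbb{I}\{x=y\}$ and $p_{\Perm x,\Perm y}(\delta)=p_{x,y}(\delta)$ by assumption, the two limits coincide. Hence $q_{\Perm x,\Perm y}=q_{x,y}$.

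For \ref{i:perm:q}$\Rightarrow$\ref{i:perm:p}, first show by induction on $k$ that $q^{(k)}_{\Perm x,\Perm y}=q^{(k)}_{x,y}$ for all $k\ge0$. The case $k=0$ is the identity, which is permutation invariant. For the step, write
\[
q^{(k+1)}_{\Perm x,\Perm y}=\sum_{z}q^{(k)}_{\Perm x,z}\,q_{z,\Perm y},
\]
and reindex the sum by $z=\Perm w$, which is legitimate because $w\mapsto \Perm w$ is a bijection of $\mathcal{A}_\kappa^L$. Applying the induction hypothesis and \ref{i:perm:q} then gives $\sum_w q^{(k)}_{x,w}q_{w,y}=q^{(k+1)}_{x,y}$. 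Summing the exponential series term by term (it converges absolutely on a finite state space) yields $p_{\Perm x,\Perm y}(\delta)=p_{x,y}(\delta)$ for every $\delta>0$. Because the chain is time-homogeneous, this equality holds for every $t\ge0$.

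There is no serious obstacle. The only point needing care is the reindexing step, which requires that $\Perm$ act bijectively on $\mathcal{A}_\kappa^L$. This holds for any alphabet, so no informativity assumption is needed here.
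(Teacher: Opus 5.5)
Your proposal is correct and follows essentially the same route as the paper: the forward implication via the infinitesimal limit $q_{x,y}=\lim_{\delta\searrow0}\bigl(p_{x,y}(\delta)-\mathbb{I}\{x=y\}\bigr)/\delta$, and the converse via induction on the powers $Q^k$ (reindexing the sum through the bijection induced by $\Perm$) followed by the exponential series. Your explicit treatment of the $k=0$ term and of the bijectivity of $\Perm$ on $\mathcal{A}_\kappa^L$ makes explicit two points the paper leaves implicit.
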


\begin{proof}If statement~\ref{i:perm:p} holds, then statement~\ref{i:perm:q} holds, because
\begin{align*}
    q_{x,y}& =\lim_{\delta \searrow 0}\frac{\prob{X_{t+\delta}=y\;\middle\vert\;X_t=x}-\mathbb{I}(x=y)}{\delta}\\
    &=\lim_{\delta \searrow 0}\frac{\prob{X_{t+\delta}={\Perm}y\;\middle\vert\;X_t={\Perm}x}-\mathbb{I}({\Perm}x={\Perm}y)}{\delta}=q_{{\Perm}x,{\Perm}y}.
\end{align*}
Assume now statement~\ref{i:perm:q}, and let $q_{x,y}^{(k)}$ denote the entries of $Q^k$, i.e.~$Q^k =(q_{x,y}^{(k)})_{x,y\in \mathcal{A}_{\kappa}^L }$. 
By induction, we can show that, for all $k\in \mathbb{N}$,
\begin{equation}\label{e:perm:q:k}
q^{(k)}_{x,y}=q^{(k)}_{{\Perm}x,{\Perm}y}\qquad\text{for all } x, y \in \mathcal{A}_{\kappa}^L.
\end{equation}
In fact, \eqref{e:perm:q:k} holds for $k =1$; and further if \eqref{e:perm:q:k} is valid for some $k\in \mathbb{N}$, then
\begin{equation*}
    q^{(k+1)}_{x,y}=\sum_{z\in \mathcal{A}_{\kappa}^L }q^{(k)}_{x,z}q_{z,y}=\sum_{z\in \mathcal{A}_{\kappa}^L }q^{(k)}_{{\Perm}x,{\Perm}z}q_{{\Perm}z,{\Perm}y}=q^{(k+1)}_{{\Perm}x,{\Perm}y}.
\end{equation*}
Thus, statement~\ref{i:perm:p} follows, because 
\begin{align*}
\prob{X_{t+\delta}=y\;\middle\vert\;X_t=x}& =\left[\exp(\delta Q)\right]_{x,y}=\sum_{k=0}^\infty \frac{\delta^k}{k!}q_{x,y}^{(k)}\\
& =\sum_{k=0}^\infty \frac{\delta^k}{k!}q_{{\Perm}x,{\Perm}y}^{(k)} 
    =\prob{X_{t+\delta}={\Perm}y\;\middle\vert\;X_t={\Perm}x}. \quad \qedhere
\end{align*}
\end{proof}

\clearpage

\section{Proofs of results in Section~\ref{sec1}}

\subsection{Proof of Lemma \ref{no_quant}}
\begin{proof}
    We focus primarily on Part~\ref{i:equiv:ind}, since Part~\ref{i:imp:sparse} then follows as a by-product of the argument, specifically in Case~\ref{i:case2} below.

   Consider $x,y\in \mathcal{X}$ with $\norm{x-y}_0>1$. There exist two distinct $i,j\in\{1,\dots,L\}$ such that $x_{i}\neq y_{i}$ and $x_{j}\neq y_{j}$. We have, as  $\delta \searrow 0$, 
    \[
    \prob{X_{i, t+\delta}=y_{i}\;\middle\vert\; X_t =x} = O(\delta) \quad\text{and}\quad\prob{X_{j, t+\delta}=y_j\;\middle\vert\; X_t =x} = O(\delta), 
    \]
    since $X_t$ is a continuous-time Markov chain. Then 
    \begin{multline}\label{e:rateprop}
    \frac{1}{\delta}\prod_{\ell =1}^L\prob{X_{\ell, t+\delta}=y_{\ell}\;\middle\vert\; X_t =x} \le \\
    \frac{1}{\delta}\prob{X_{i,t+\delta}=y_{i}\;\middle\vert\; X_t =x}  \prob{X_{j, t+\delta}=y_{j}\;\middle\vert\; X_t =x} = O(1)\cdot O(\delta) =o(1).
    \end{multline}
    
    \smallskip
    
    \textbf{Direction ``$\implies$'':} Assume that \Cref{as2} (conditional independence at infinitesimal times) holds. 
    Then, by \eqref{e:rateprop}, for $x,y\in \mathcal{X}$ with $\norm{x-y}_0>1$, 
\[        \frac{1}{\delta}\prob{X_{t+\delta}=y\;\middle\vert\;X_t =x} 
        =\frac{1}{\delta}\prod_{\ell =1}^L\prob{X_{\ell, t+\delta}=y_{\ell}\;\middle\vert\; X_t =x}+o(1) =o(1),
\]
    which yields  the sparse transition property (\Cref{d:stp}).

    \smallskip

\textbf{Direction ``$\impliedby$'':} Assume now that the sparse transition property holds. Let $Q = (q_{x,y})_{x,y\in \mathcal{A}_2^L}$ be the rate matrix of $(X_t)_{t\ge0}$. For any $x,y\in \mathcal{A}_2^L$, we consider three cases: 
\begin{enumerate}[label=(\alph*)]
\item \label{i:c1} Case of $\norm{x-y}_0>1$. 
	By the sparse transition property, we have  $q_{x,y}=0$, namely, 
    \[
    \frac{1}{\delta}\prob{X_{t+\delta}=y\;\middle\vert\;X_t =x}=o(1).
    \]
  On the other hand, by~\eqref{e:rateprop}, we have
    \[
    \frac{1}{\delta}\prod_{\ell =1}^L\prob{X_{\ell,t+\delta}=y_{\ell}\;\middle\vert\; X_t =x} = o(1).
    \]
\item \label{i:case2} Case of $\norm{x-y}_0=1$. 
There exists $j \in\{1,\dots,L\}$ such that $x_j\neq y_j$ while $x_{\ell}= y_{\ell}$ for all $\ell \neq j$. It holds
    \[\frac{1}{\delta}\prod_{\ell =1}^L\prob{X_{\ell,t+\delta}=y_{\ell} \;\middle\vert\; X_t =x}=\frac{1}{\delta}\prob{X_{j,t+\delta}=y_{j} \;\middle\vert\; X_t =x}(1+o(1)).\]
    Meanwhile, we have
    \begin{multline}\label{e:mcp}
        \prob{X_{t+\delta}=y\;\middle\vert\; X_t=x}\\
        =\prob{X_{j,t+\delta}=y_{j} \;\middle\vert\; X_t =x}-\sum_{z\in \mathcal{A}_2^L:z_{j}=y_{j},z\neq y }\prob{X_{t+\delta}=z \;\middle\vert\;X_t=x}.
    \end{multline}
    Note that for $z\in \mathcal{A}_2^L$ with $z_{j}=y_{j}$ and $z\neq y$, it holds $\norm{x-z}_0\ge 2$. Then, by the sparse transition property and Case~\ref{i:c1}, we have $\prob{X_{t+\delta}=z\;\middle\vert\;X_t=x} = o(\delta)$. Thus, letting $\delta\searrow 0$ in \eqref{e:mcp}, we have
    \[
    \lim_{\delta \searrow 0}\frac{1}{\delta}\prob{X_{t+\delta}=y\;\middle\vert\;X_t=x}=\lim_{\delta \searrow 0}\frac{1}{\delta}\prob{X_{j, t+\delta}=y_{j} \;\middle\vert\;X_t =x},
     \] 
     which is the assertion of Part~\ref{i:imp:sparse}. 
\item Case of $x = y$.  By the sparse transition property, it holds
    \begin{align*}
        \frac{1}{\delta}\bigl(\prob{X_{t+\delta}=x\;\middle\vert\;X_t=x}-1\big)&=-\sum_{z\neq x}\frac{1}{\delta}\prob{X_{t+\delta}=z \;\middle\vert\;X_t=x}\\
        &=-\sum_{z\in\mathcal{A}_2^L:\norm{x-z}_0 =1 }\frac{1}{\delta}\prob{X_{t+\delta}=z \;\middle\vert\;X_t=x}+o(1)\\
        &=-\sum_{m=1}^L\frac{1}{\delta}\prob{X_{t+\delta}=z^{(m)}\;\middle\vert\;X_t=x}+o(1),
    \end{align*}where  $z^{(m)}\in \mathcal{A}_2^L$ is defined as $z^{(m)}_{i}=x_i$ for all $i\neq m$ but $z^{(m)}_{m}=1-x_{m}$. 
    By Case~\ref{i:case2},  we have 
    \[q_{x,z^{(m)}}=\lim_{\delta \searrow 0}\frac{\prob{X_{t+\delta}=z^{(m)}\;\middle\vert\;X_t=x}}{\delta}=\lim_{\delta \searrow 0}\frac{\prob{X_{m,t+\delta}=1-x_{m}\;\middle\vert\;X_t=x}}{\delta} =: \eta_m.\] 
That is, we have shown that 
    \(q_{x,x}=-\sum_{m=1}^L\eta_m.\) 
    Now consider the product
    \begin{align*}
        \prod_{i=1}^L\prob{X_{i,t+\delta}=x_{i}\;\middle\vert\; X_t =x}&=\prod_{i=1}^L\left[1-\prob{X_{i,t+\delta}=1-x_{i}\;\middle\vert\;  X_t =x}\right]\\
      &=\prod_{m=1}^L\left[1-\delta \left(\eta_m +o(1)\right)\right]  =1-\sum_{m=1}^L\delta \eta_m+o(\delta),
    \end{align*}
    as $\delta \searrow 0$. 
    Thus, we have
    \begin{align*}
        &\frac{\prob{X_{t+\delta}=x\;\middle\vert\;X_t=x}-\prod_{i=1}^L\prob{X_{i,t+\delta}=x_{i}\,\middle\vert\, X_t =x}}{\delta}\\=\, &\frac{\prob{X_{t+\delta}=x\,\middle\vert\, X_t=x}-1}{\delta}-\frac{\prod_{i=1}^L\prob{X_{i,t+\delta}=x_{i}\,\middle\vert\, X_t =x}-1}{\delta} = o(1).
    \end{align*}
\end{enumerate}

Therefore, by combining all three cases above, we obtain, for any $x,y\in \mathcal{A}_2^L$,
\[
\prob{X_{t+\delta}=y\;\middle\vert\;X_t=x}
    =\prod_{i=1}^L\prob{X_{i,t+\delta}=y_{i}\,\middle\vert\, X_t =x}+o(\delta),
\]
which leads to Assumption~\ref{as2}.
\end{proof}

\subsection{Proof of Theorem~\ref{Theorem1}}
\begin{proof}
\textbf{Part~\ref{th1:i}}
Let Assumptions \ref{as1} and \ref{as2} hold. The rate matrix $Q=(q_{x,y})_{x,y\in \mathcal{X}}$ satisfies
\begin{align*}
q_{x,y}& =\lim_{\delta\searrow 0}\frac{\prob{X_{t+\delta}=y\;\middle\vert\;X_t =x}}{\delta},\qquad x\neq y\\ \text{and}\quad
q_{x,x}&=\lim_{\delta\searrow 0}\frac{\prob{X_{t+\delta}=x\;\middle\vert\;X_t =x}-1}{\delta}.
\end{align*} 
We fix arbitrarily $x,y \in \mathcal{X}$ with $x\neq y$.  By Assumption~\ref{as2} it holds 
\[
\frac{\prob{X_{t+\delta}=y\;\middle\vert\;X_t =x}}{\delta}=\frac{\prod_{i=1}^L\prob{X_{i,t+\delta}=y_{i}\;\middle\vert\;X_t =x}}{\delta}+o(1),
\]
as $\delta \searrow 0$. Because of Assumption \ref{as1}, we can apply Lemma \ref{Lemma_a1}, which yields
\[ 
\frac{\prod_{i=1}^L\prob{X_{i,t+\delta}=y_{i}\;\middle\vert\;X_t =x}}{\delta}=\frac{\prod_{i=1}^L\prob{X_{i,t+\delta}=y_{i}\;\middle\vert\;X_{i,t}=x_{i},\s{X_t} =\s{x}} }{\delta}.
\]
Since $x\neq y$, there exists at least one $j\in \{1,\dots,L\}$ such that $x_j\neq y_j$. 
\begin{itemize}
\item If exactly one such $j$ exists, we have
\begin{multline*}
\lim_{\delta\searrow 0}\frac{\prod_{i=1}^L\prob{X_{i,t+\delta}=y_{i}\;\middle\vert\;X_{i,t}=x_{i},\s{X_t} =\s{x}} }{\delta} \\
=\lim_{\delta\searrow 0}\frac{\prob{X_{j,t+\delta}=y_j\;\middle\vert\;X_{j,t}=x_j,\s{X_t} =\s{x}} }{\delta}.
\end{multline*}
\item
If there is more than one such $j$, then, based on the above calculation and Lemma~\ref{no_quant}\ref{i:equiv:ind}, 
\[
\lim_{\delta\searrow 0}\frac{\prod_{i=1}^L\prob{X_{i,t+\delta}=y_{i}\;\middle\vert\;X_{i,t}=x_{i},\s{X_t} =\s{x}} }{\delta} = \lim_{\delta\searrow 0} \frac{\prod_{i=1}^L\prob{X_{i,t+\delta}=y_{i}\;\middle\vert\;X_t =x}}{\delta} = 0.
\]
\end{itemize}
Further, define the rates of instantaneous transitions $\lambda_{s}$ and $\mu_{s}$ as in \eqref{param:12}. We \emph{claim} that the limits in \eqref{param:12} are independent of the choices of $t$ and $j$. Under this claim, we see directly that for $x\neq y$
\[
q_{x,y}=\begin{cases}
        \lambda_{\s{x}}\mathbb{I}(x_{i}=0)+\mu_{\s{x}}\mathbb{I}(x_{i}=1),& \norm{x-y}_0=1,y_{i}\neq x_{i}\\
        0,&\norm{x-y}_0>1,
    \end{cases}
 \]
and thus $X_t$ is an SDMC with parameters $\lambda_0,\dots,\lambda_{L-1},\mu_1,\dots,\mu_L$ in $[0,\infty)$. 

We now show the afore-mentioned claim. By Lemma~\ref{Lemma_a1}, it holds, for any $x\in \mathcal{A}_2^L$ with $\s{x}=s$ and $x_j=a$, 
\begin{align*}
    \prob{X_{j,t+\delta}=b\;\middle\vert\;X_{j,t}=a,\s{X_t} =s} &=\prob{X_{j,t+\delta}=b\;\middle\vert\;X_t=x}\\
    &=\sum_{y\in \mathcal{A}_2^L:y_j=b}\prob{X_{t+\delta}=y\;\middle\vert\;X_t=x}.
\end{align*}
Thus, the limits  in \eqref{param:12} are independent of $t$, because $(X_t)_{t\ge 0}$ is time homogeneous. 

Next consider arbitrarily $a,b\in \mathcal{A}_2 = \{0,1\}$ with $a\neq b$, which implies $ b  =1 -a$. Again by Lemma~\ref{Lemma_a1}, it holds,  for any $x\in \mathcal{A}_2^L$ with $x_j=a$ and $\s{x}=s$,
\begin{equation}\label{e:pre1}
\prob{X_{j,t+\delta}=b\;\middle\vert\;X_{j,t}=a,\s{X_t} =s}=\prob{X_{j,t+\delta}=b\;\middle\vert\;X_t=x}.
\end{equation}
Define $y\in \mathcal{A}_2^L$ by $y^{(k)}=x^{(k)}$ for all $k\neq j$ and $y_j=1-x_j=b$. That is, $\norm{x-y}_0=1$ and $y_j\neq x_j$. By Lemma~\ref{no_quant}\ref{i:imp:sparse}, we have 
\begin{equation}\label{e:pre2}
\lim_{\delta\searrow 0} \frac{\prob{X_{j,t+\delta}=b\;\middle\vert\;X_t=x}}{\delta}=\lim_{\delta\searrow 0} \frac{\prob{X_{t+\delta}=y\;\middle\vert\;X_t=x}}{\delta}.\end{equation}
Let ${\Perm}$ be a permutation matrix that permutes the $j$-th and $i$-th entry.  Note that $({\Perm}y)_i=b=1-a=1-({\Perm}x)_i$ and $\norm{{\Perm}y-{\Perm}x}_0=1$. By Assumption~\ref{as1}, it holds
\begin{align*}
    \lim_{\delta\searrow 0} \frac{\prob{X_{t+\delta}=y\;\middle\vert\;X_t=x}}{\delta}&=\lim_{\delta\searrow 0} \frac{\prob{X_{t+\delta}={\Perm}y\;\middle\vert\;X_t={\Perm}x}}{\delta}=\lim_{\delta\searrow 0} \frac{\prob{X_{i,t+\delta}=b\;\middle\vert\;X_t={\Perm}x}}{\delta}\\
    &=\lim_{\delta\searrow 0} \frac{\prob{X_{i,t+\delta}=b\;\middle\vert\;X_{i,t}=a,\s{X_t} =s}}{\delta},
\end{align*}
which in combination of \eqref{e:pre1} and \eqref{e:pre2} implies 
\[
\lim_{\delta\searrow 0}\frac{\prob{X_{j,t+\delta}=b\;\middle\vert\;X_{j,t}=a,\s{X_t} =s} }{\delta} =\lim_{\delta\searrow 0}\frac{\prob{X_{i,t+\delta}=b\;\middle\vert\;X_{i,t}=a,\s{X_t} =s} }{\delta}.  
\]
Thus, we have proven the claim that the limits in \eqref{param:12} are independent of $t$ and $j$.

\smallskip

\textbf{Part~\ref{th1:ii}}
Now assume that $X_t$ is a SDMC with parameters $\lambda_0,\dots,\lambda_{L-1}$ and $\mu_1,\dots,\mu_L$ in $[0,\infty)$. We will show that Assumptions \ref{as1}--\ref{as2} hold. By definition of an SDMC, it satisfies the sparse transition property and moreover, for any permutation matrix ${\Perm}$
\[
q_{x,y}=q_{{\Perm}x,{\Perm}y}.
\]
By Lemmas~\ref{no_quant} and~\ref{perm_inv_char}, we have Assumptions~\ref{as1} and~\ref{as2}. It then follows that the parameters $\lambda_{s}$, $\mu_{s}$ satisfy  \eqref{param:12} and the limits therein are are independent of $t\ge 0$ and $j\in\{1,\dots,L\}$. 
\end{proof}

\subsection{Proof of Lemma~\ref{l:ir:sd}}

\begin{proof}
By \citet[Theorem~3.2.1]{Norris1998}, the irreducibility of $(X_t)_{t\ge0}$ holds if and only if for any distinct $x,y\in \mathcal{X}$ there exist a sequence of states $x^{(0)},x^{(1)},\ldots, x^{(n)}$ with $x^{(0)} = x$ and $x^{(n)} = y$ such that $q_{x^{(0)},x^{(1)}}q_{x^{(1)},x^{(2)}}\cdots q_{x^{(n-1)},x^{(n)}}>0$. This is clearly the case (due to Definition~\ref{def_rate}) if and only if all parameters $\lambda_0,\ldots,\lambda_{L-1},\mu_1,\ldots,\mu_L$ are strictly positive. 
\end{proof}

\subsection{Proof of Proposition~\ref{reversible}}

\begin{proof}
By~\citet[Theorem~3.7.3]{Norris1998}, $(X_t)_{t\ge 0}$ is {reversible} if and only if its rate matrix $Q$ and some initial distribution $\pi$ satisfy {detailed balance} equations, i.e.
\begin{equation}\label{e:db}
\pi_x q_{x,y}=\pi_y q_{y,x}\qquad\text{for all }x,y\in\mathcal{A}_2^L.
\end{equation}
To find a possible $\pi$ of such, we only need to consider the case of $\norm{x-y}_0=1$, since $q_{x,y}=q_{y,x}=0$ for $\norm{x-y}_0>1$ by Lemma~\ref{no_quant}\ref{i:equiv:ind}. For $x\in\mathcal{A}_2^L$, we introduce
\begin{subequations}
\begin{align}
 A^+_{x} &:= \{y \in \mathcal{A}_2\, :\,  \s{y} = \s{x}+1 \text{ and }\norm{y-x}_0=1\}, \label{e:neighbor:1}\\
 \text{and}\quad
 A^-_{x} &:= \{y \in \mathcal{A}_2\, :\,  \s{y} = \s{x}-1 \text{ and }\norm{y-x}_0=1\}. \label{e:neighbor:2}
\end{align}
\end{subequations}
Note that $A^+_{x} \cup A^-_{x} = \{z \in \mathcal{A}_2\, :\, \norm{z-x}_0=1 \}$. The detailed balance equations in \eqref{e:db} are equivalent to
\[
\begin{cases}
\pi_x \lambda_{\s{x}} = \pi_y \mu_{\s{x}+1} &\text{ if } y \in A_x^+,\\
\pi_x \mu_{\s{x}} = \pi_y\lambda_{\s{x}-1} & \text{ if } y \in A_x^-.
\end{cases}
\]
It is easy to check $\pi = \pi^*$ is a solution to the above equations. Further, by the irreducibility of $(X_t)_{t\ge0}$ and \citet[Lemma~3.7.2]{Norris1998},  this $\pi^*$ is the unique solution to the detailed balance equations in \eqref{e:db}, and it is the unique invariant distribution of $(X_t)_{t\ge0}$.
\end{proof}

\subsection{Proof of Theorem \ref{Theorem2}}
\begin{proof}
\textbf{Part~\ref{i:th2:1}} By Theorem~\ref{Theorem1}\ref{th1:ii}, Assumption \ref{as1} is satisfied. Then, by Lemma~\ref{Lemma3.9} and Proposition~\ref{prop1}, $(S_t)_{t\ge0}$ is a continuous-time Markov chain.
 
\smallskip

\textbf{Part~\ref{i:th2:2}} Fix any $x\in \mathcal{A}_2^L$ and let $i = \s{x}$. By Proposition~\ref{prop2}, we have 
\[
r_{i,j}:=\lim_{\delta \searrow 0}\frac{\prob{S_{t+\delta}=j\;\middle\vert\;S_t=i}}{\delta}=\lim_{\delta \searrow 0}\frac{\prob{S_{t+\delta}=j\;\middle\vert\;X_t=x}}{\delta}=\sum_{y:\s{y}=j}q_{x,y}.
\]
By Lemma~\ref{no_quant}\ref{i:equiv:ind}, $(X_t)_{t\ge 0}$ satisfies the sparse transition property (Definition~\ref{d:stp}). Thus,
\[
\sum_{y:\s{y}=j}q_{x,y}=\sum_{y:\s{y}=j,\norm{x-y}_0=1}q_{x,y} \qquad\text{for}\quad j \neq i,
\]
which implies $r_{i,j}=0$ when $\abs{i-j}>1$ and 
\[
r_{i,i+1}=\sum_{y:\s{y}=i+1,\norm{x-y}_0=1}q_{x,y}\quad\text{and}\quad r_{i,i-1}=\sum_{y:\s{y}=i-1,\norm{x-y}_0=1}q_{x,y}.
\]
For $y \in A^+_{x}$ in \eqref{e:neighbor:1}, it holds $q_{x,y} = \lambda_i$. Note that the cardinality of $A^+_{x}$ is $L-i$. Thus, $r_{i,i+1} = (L-i)\lambda_i$.
Similarly, for $y \in A^-_{x}$ in \eqref{e:neighbor:2}, it holds $q_{x,y} = \mu_i$. As the cardinality of $A^-_{x}$ is $i$, we have $r_{i,i-1} = i\mu_i$.
The formula of $r_{i,i}$ follows from the fact that the row sums of $R$ are zero. 

\smallskip

\textbf{Part~\ref{i:th2:3}} It follows since by~\eqref{Theorem2_eq}, 
\[
\lambda_{i-1} = \frac{r_{i-1,i}}{L-i+1}\quad\text{and}\quad \mu_i = \frac{r_{i,i-1}}{i}, \qquad i \in\{1,\ldots, L\}. \qedhere
\]
\end{proof}

\subsection{Proof of Proposition \ref{bd_proc}}\label{ss:proof:bd_proc}

\begin{proof}
\textbf{Part~\ref{i:bd_proc:1}} We adopt a similar proof as Lemma~\ref{l:ir:sd}. By \citet[Theorem~3.2.1]{Norris1998}, the irreducibility of $(S_t)_{t\ge0}$ holds if and only if for any distinct $i,j\in \{0,1,\ldots,L\}$ there exist a sequence of states $s_0,s_1,\ldots, s_n$ with $s_0 = i$ and $s_n = j$ such that $r_{s_0,s_1}r_{s_1,s_2}\cdots r_{s_{n-1},s_n}>0$. By the structure of $R$ in~\eqref{Theorem2_eq} of Theorem~\ref{Theorem2}\ref{i:th2:2}, this is clearly the case if and only if all parameters $\lambda_0,\ldots,\lambda_{L-1},\mu_1,\ldots,\mu_L$ are strictly positive, which by Lemma~\ref{l:ir:sd} is equivalent to the reversibility of $(X_t)_{t\ge0}$.

\smallskip

\textbf{Part~\ref{i:bd_proc:2}} Similar to the proof of Proposition~\ref{reversible}, we employ the sparsity (i.e.\ $r_{i,j}=0$ for $\abs{i-j}>1$) of the rate matrix $R$ in \eqref{Theorem2_eq}, and reduce the detailed balance equations to 
\[
\pi_j r_{j,j-1}=\pi_{j-1}r_{j-1,j},\qquad j \in \{1,\ldots, L\}.
\]
Clearly, $\pi = \pi^{\star}$ is a solution to the above equations. By the irreducibility, and \citet[Lemma~3.7.2]{Norris1998}, $ \pi^{\star}$ is the unique solution to the detailed balance equations and is the unique invariant distribution of $(S_t)_{t\ge0}$.
\end{proof}

\subsection{Supporting lemmas}
The next result is a generalized version of Lemma~A.4 from \citet{VanEt24}.

\begin{lemma}\label{LemmaE4} 
Let $(\Omega,\mathcal{A},\mathbb{P})$ be a probability space and let $B_1,\dots,B_k\in \mathcal{A}$, $k \in \mathbb{N}$, be events such that  $B_1,\dots ,B_k$ are disjoint and $\prob{B_i}>0$ for all $i\in\{1,\dots,k\}$. Assume that $A\in \mathcal{A}$ and $\prob{A\;\middle\vert\;B_i}=\prob{A \;\middle\vert\;B_j}$ for all $i,j\in \{1,\dots,k\}$. Then,
    \[
    \prob{A\;\middle\vert\;\bigcup_{i=1}^k B_i}=\prob{A\;\middle\vert\;B_j}\quad\text{for every }j \in \{1,\ldots,k\}.
    \]
\end{lemma}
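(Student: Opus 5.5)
The proof is a direct computation with the law of total probability. Write $B := \bigcup_{i=1}^k B_i$. Since the $B_i$ are disjoint and each has positive probability, $\mathbb{P}(B) = \sum_{i=1}^k \mathbb{P}(B_i) > 0$, so $\mathbb{P}(A \mid B)$ is well defined.

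Fix $j \in \{1,\dots,k\}$ and let $c := \mathbb{P}(A \mid B_j)$. By hypothesis, $\mathbb{P}(A \mid B_i) = c$ for every $i$. Disjointness of the $B_i$ gives additivity of $\mathbb{P}$ over the sets $A \cap B_i$. Hence
\[
\mathbb{P}(A \cap B) = \sum_{i=1}^k \mathbb{P}(A \cap B_i) = \sum_{i=1}^k \mathbb{P}(A \mid B_i)\,\mathbb{P}(B_i) = c \sum_{i=1}^k \mathbb{P}(B_i) = c\,\mathbb{P}(B).
\]
Dividing by $\mathbb{P}(B) > 0$ yields $\mathbb{P}(A \mid B) = c = \mathbb{P}(A \mid B_j)$. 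Since $j$ was arbitrary, this is the claim.

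No step is a real obstacle. The only points to check are that $\mathbb{P}(B) > 0$, which holds because each $\mathbb{P}(B_i) > 0$, and that disjointness is used for countable (here finite) additivity. The lemma will later be applied with $B_i = \{X_t = x^{(i)}\}$ for the states $x^{(i)}$ having a common value of $x_j$ and $\s{x}$. Combined with permutation invariance, this yields the conditional-probability identities used in Lemma~\ref{Lemma_a1}.
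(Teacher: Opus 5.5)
Your proof is correct and takes essentially the same route as the paper: both use disjointness to write $\mathbb{P}\bigl(A\cap \bigcup_i B_i\bigr)=\sum_i \mathbb{P}(A\mid B_i)\,\mathbb{P}(B_i)$ and then factor out the common conditional probability. The only difference is that you treat general $k$ in a single computation, whereas the paper does the case $k=2$ explicitly and obtains the general case by induction.
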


\begin{proof}
    The general case follows from the case $k=2$ by induction. For $k=2$, it holds
    \begin{align*}
        \prob{A\;\middle\vert\;B_1\cup B_2}&=\frac{\prob{(A\cap B_1 )\cup (A\cap B_2 )}}{\prob{B_1\cup B_2}}=\frac{\prob{A\cap B_1 }+\prob{A\cap B_2 }}{\prob{B_1\cup B_2}}\\
        &=\frac{\prob{A\cap B_1 }\prob{B_1}}{\prob{B_1}\prob{B_1\cup B_2}}+\frac{\prob{A\cap B_2 }\prob{B_2}}{\prob{B_2}\prob{B_1\cup B_2}}\\
        &=\frac{\prob{A\;\middle\vert\;B_1 }\prob{B_1}}{\prob{B_1\cup B_2}}+\frac{\prob{A\;\middle\vert\;B_2 }\prob{B_2}}{\prob{B_1\cup B_2}}\\
        &=\prob{A\;\middle\vert\;B_1 }\frac{\prob{B_1}+\prob{B_2}}{\prob{B_1\cup B_2}}=\prob{A\;\middle\vert\;B_1 }=\prob{A\;\middle\vert\;B_2 }. \quad\qedhere
    \end{align*}
\end{proof}

We extend Lemma A.6 from \cite{VanEt24} to the continuous-time setting.

\begin{lemma}\label[lemma]{Lemma_a1}
   Let $(X_t)_{t\ge 0}$ be a continuous-time Markov chain that satisfies Assumption~\ref{as1}.
Then, for any $i\in\{1,\dots,L\}$, $x\in \mathcal{A}_2^L$, $a\in \mathcal{A}_2$, $t\ge0$ and $\delta>0$, it holds
   \[\prob{X_{i,t+\delta}=a\;\middle\vert\;X_{t}=x }=\prob{X_{i,t+\delta}=a\;\middle\vert\;X_{i,t}=x_{i},\s{X_t}=\s{x} }.\]
\end{lemma}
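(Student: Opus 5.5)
We will show that conditioning on the full state $X_t=x$ gives the same probability for every $x'$ with $x'_i=x_i$ and $\s{x'}=\s{x}$. Then Lemma~\ref{LemmaE4} lets us merge these disjoint conditioning events into $\{X_{i,t}=x_i,\ \s{X_t}=\s{x}\}$. We only need configurations $x'$ with $\prob{X_t=x'}>0$, and the statement is understood wherever the conditional probabilities are defined.

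\textbf{Step 1 (permutation of a conditioning state).} Fix $i$, $x$, and some $x'\in\mathcal{A}_2^L$ with $x'_i=x_i$ and $\s{x'}=\s{x}$. Both $x$ and $x'$ are binary vectors of equal Hamming weight that agree in coordinate $i$. Hence the remaining $L-1$ coordinates of each contain the same number of ones. So there is a permutation matrix ${\Perm}$ that fixes coordinate $i$ and satisfies ${\Perm}x=x'$. To get it, we permute only the indices $\{1,\dots,L\}\setminus\{i\}$, sending the ones of $x$ to the ones of $x'$ and the zeros to the zeros.

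\textbf{Step 2 (invariance of the marginal transition).} We write
\[
\prob{X_{i,t+\delta}=a\mid X_t=x}=\sum_{y:\,y_i=a}\prob{X_{t+\delta}=y\mid X_t=x}.
\]
By Assumption~\ref{as1}, in the transition-probability form of Lemma~\ref{perm_inv_char}, each summand equals $\prob{X_{t+\delta}={\Perm}y\mid X_t={\Perm}x}$. Since ${\Perm}$ fixes coordinate $i$, we have $({\Perm}y)_i=y_i$. Therefore $y\mapsto{\Perm}y$ is a bijection of $\{y:y_i=a\}$ onto itself, and the sum equals $\prob{X_{i,t+\delta}=a\mid X_t=x'}$.

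\textbf{Step 3 (aggregation).} The event $\{X_{i,t}=x_i,\ \s{X_t}=\s{x}\}$ is the disjoint union of the events $B_{x'}=\{X_t=x'\}$ over all such $x'$. By Step~2, $\prob{A\mid B_{x'}}$ is the same for all $x'$ with $\prob{B_{x'}}>0$, where $A=\{X_{i,t+\delta}=a\}$. Lemma~\ref{LemmaE4} then gives the claim.

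The only real subtlety is the positivity requirement in Lemma~\ref{LemmaE4}. Configurations $x'$ with $\prob{X_t=x'}=0$ contribute nothing to either side, so we drop them from the union. For this we need $\prob{X_t=x}>0$, which is implicit in the conditioning.
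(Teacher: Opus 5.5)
Your proof is correct and follows the paper's argument. Permutation invariance shows that $\prob{X_{i,t+\delta}=a \mid X_t = z}$ is the same for every $z$ with $\s{z}=\s{x}$ and $z_i=x_i$, and Lemma~\ref{LemmaE4} then merges these disjoint conditioning events. You are also slightly more careful than the paper in two places: you require the permutation to fix coordinate $i$, which is exactly what makes the reindexing of $\{y:y_i=a\}$ valid, and you discard zero-probability configurations before applying Lemma~\ref{LemmaE4}.
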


\begin{proof}
    Let 
    \(
    F(x,i):=\{z\in\mathcal{A}_2^L:\s{z}=\s{x},z_i=x_{i} \}.
    \) 
    For any $z\in F(x,i)$ there exists a permutation matrix ${\Perm}\in \mathcal{A}_2^{L\times L}$ such that $z={\Perm}x$ and $z_i=x_{i}$. By Assumption~\ref{as1} (permutation invariance) with the afore-introduced ${\Perm}$, we have
\begin{align*}
    \prob{A\;\middle\vert\;B_x}&=\prob{X_{i,t+\delta}=a\;\middle\vert\;X_{t}=x } = \sum_{y\in\mathcal{A}_2^L:y_{i}=a }\prob{X_{t+\delta}=y\;\middle\vert\;X_{t}=x }\\
    &=\sum_{y\in\mathcal{A}_2^L:y_{i}=a }\prob{X_{t+\delta}={\Perm}y\;\middle\vert\;X_{t}={\Perm}x }=\sum_{y\in\mathcal{A}_2^L:y_{i}=a }\prob{X_{t+\delta}=y\;\middle\vert\;X_{t}=z }\\
     &= \prob{X_{i,t+\delta}=a\;\middle\vert\;X_{t}=z }
    =\prob{A\;\middle\vert\;B_z}.
\end{align*}
As
\(
\bigcup_{z\in F(x,i)}B_z= \{\s{X_t}=\s{x},X_{i,t}=x_{i} \}, 
\)
the assertion follows from Lemma~\ref{LemmaE4} with $A=\{X_{i,t+\delta}=a\}$ and $B_z=\{X_t=z\}$ for $z \in F(x,i)$.
\end{proof}

\clearpage

\section{Proofs of results in Section~\ref{cooperativity}}

\subsection{Proof of Proposition~\ref{prop_rel_null_coop_ind}}

\begin{proof}
We will show that the null cooperativity (i.e.\ statement~\ref{i:null}) of $(X_t)_{t\ge0}$ is equivalent to the following property:
\begin{equation}\label{Prop4.1:eq1}
 \text{For all $a,b\in \mathcal{A}_2$,}\quad \sum_{y \in\mathcal{A}_2^{L-1}} q_{(a,x),(b,y)} \quad\text{is independent of } x \in \mathcal{A}_2^{L-1}. 
\end{equation}
Due to the permutation invariance of $(X_t)_{t\ge0}$ (Assumption~\ref{as1} and Theorem~\ref{Theorem1}\ref{th1:ii}), analogous properties to~\eqref{Prop4.1:eq1} hold for any coordinate, not only the first one, whenever~\eqref{Prop4.1:eq1} is satisfied. Consequently, by \citet[Theorem~3.2]{Ball1993} and the sparse transition property of $(X_t)_{t\ge0}$ (Lemma~\ref{no_quant} and Theorem~\ref{Theorem1}\ref{th1:ii}), property~\eqref{Prop4.1:eq1} is equivalent to statement~\ref{i:ind}, and by \citet[Theorem~3.1]{Ball1993}, it is also equivalent to statement~\ref{i:marg}.

We employ the structure of rate matrix $Q$ in Definition~\ref{def_rate}, and obtain, for $x\in \mathcal{A}_2^{L-1}$,
\begin{align}\label{Prop4.1:eq2}
     \sum_{y \in\mathcal{A}_2^{L-1}} q_{(a,x),(b,y)}=
     \begin{cases}
       \mu_{\s{(a,x)}},&\text{ if } a=1,b=0,\\
         \lambda_{\s{(a,x)}},&\text{ if } a=0,b=1,\\
        -\mu_{\s{(a,x)}},&\text{ if }  a=b=0, \\
       -\lambda_{\s{(a,x)}},&\text{ if }  a=b=1.
     \end{cases}
\end{align}

Thus, if $(X_t)_{t\ge0}$ is null cooperative, then the right hand side of~\eqref{Prop4.1:eq2} is equal to 
\[
\lambda_0\bigl[\mathbb{I}(a=0)-\mathbb{I}(a=1)\bigr]\mathbb{I}(b=1)+\mu_1\bigl[\mathbb{I}(a=1)-\mathbb{I}(a=0)\bigr]\mathbb{I}(b=0),
\]
and is thus independent of $x \in \mathcal{A}_2^{L-1}$, i.e., \eqref{Prop4.1:eq1} is satisfied. 

Conversely, if \eqref{Prop4.1:eq1} holds, then, by~\eqref{Prop4.1:eq2}, 
$\lambda_{\s(0,x)}, \lambda_{\s{(1,x)}}, \mu_{\s{(0,x)}}, \mu_{\s{(1,x)}}$ remain constant for $x \in \mathcal{A}_2^{L-1}$. Then, by the permutation invariance of $(X_t)_{t\ge0}$, we have
\[
\lambda_ 0 =\cdots = \lambda_{L-1}\quad\text{and}\quad\mu_1 = \cdots = \mu_{L},
\]
that is, $(X_t)_{t\ge0}$ is null cooperative. 
\end{proof}

\subsection{Proof of \Cref{le:mon:prob}}

\begin{proof}
By time-homogeneity, permutation invariance and \Cref{Lemma_a1}, for any $i\in\{1,\ldots, L\}$ and any $x\in\{0,1\}^L$
with $x_i=0$ and $\s x=s$, it holds
\[
p^{\mathrm{o}}_s(\delta)
 =
 \prob{X_{i,\delta}=1\mid X_{0} = x},
\]
which is independent of $i$ and $t$. Similarly, if $x_i=1$ and $\s x=s$, then
\[
p^{\mathrm{c}}_s(\delta)
 =
 \prob{X_{i,\delta}=0\mid X_0 = x},
\]
thus independent of $i$ and $t$. For vectors $x$ and $y$ of the same dimension, we write $x\le y$ to denote the coordinatewise partial order, namely, $x \le y$ if and only if  $x_j \le y_j$ for every coordinate $j$.

\begin{enumerate}
\item
We first consider the fully positively cooperative case
\[
\lambda_0<\lambda_1<\cdots<\lambda_{L-1},
\qquad
\mu_1>\mu_2>\cdots>\mu_L.
\]
We construct a  joint continuous-time Markov chain \((X_t, X_t')_{t\ge0}\) on the state space
\[
E:=\left\{(x,y):x\leq y\text{ and } x,y\in\{0,1\}^L\right\},
\]
by defining its rate matrix $\Gamma$ as follows. For each coordinate
$j\in\{1,\ldots, L\}$, write $e_j \in \mathbb{R}^L$ for the $j$-th unit vector. If $x_j=y_j=0$, set
\[
\Gamma((x,y),(x+e_j,y+e_j))=\lambda_{\s x},
\qquad
\Gamma((x,y),(x,y+e_j))=\lambda_{\s y}-\lambda_{\s x}.
\]
If $x_j=y_j=1$, set
\[
\Gamma((x,y),(x-e_j,y-e_j))=\mu_{\s y},
\qquad
\Gamma((x,y),(x-e_j,y))=\mu_{\s x}-\mu_{\s y}.
\]
If $x_j=0$ and $y_j=1$, set
\[
\Gamma((x,y),(x+e_j,y))=\lambda_{\s x},
\qquad
\Gamma((x,y),(x,y-e_j))=\mu_{\s y}.
\]
All other off-diagonal entries are zero, and the diagonal entries are chosen
so that the rows sum to zero. Clearly, $\Gamma$ is a rate matrix (\Cref{sec_ct}).  The rate matrix $\Gamma$ satisfies the lumping property required by \citet[Theorem~2.4]{Ball1993}, with respect to the sets 
\[
\mathcal{J}_{x}=\{(x,y)\in E:y\in \{0,1\}^L\}\quad\text{and}\quad \mathcal{G}_y=\{(x,y)\in E:x\in \{0,1\}^L\},
\] 
and hence both marginal processes $(X_t)_{t\ge 0}$ and $(X_t')_{t\ge0}$ are continuous Markov chains, and have the same rate matrix as the original SDMC \citep[Theorem 2.11]{Tian2006}, defined in~\eqref{def_rate_eq}. By construction, 
\begin{multline}\label{e:order}
\prob{X_t\le  X'_t\ \text{for all }t\ge0 \mid X_0 = x, X'_0 = y}\\
 = \prob{(X_t,X_t')\in E\ \text{for all }t\ge0 \mid (X_0, X_0') = (x, y)}=1,
\end{multline}
for all $(x,y) \in E$.

As $\Gamma$ satisfies the lumping property with respect to $\{\mathcal J_x:x\in \{0,1\}^L\}$ and $\{\mathcal G_y:y\in \{0,1\}^L\}$, Theorem~2.11 in \citet{Tian2006} implies
\begin{align*} 
\prob{X_{\delta} = z \mid X_0 = x} &=\sum_{w\in \{0,1\}^L} \prob{X_{\delta} = z,X_\delta'=w \mid X_0 = x, X'_0 = y}, \\
\text{and } \prob{X'_{\delta} = w \mid X'_0 = y}& =\sum_{z\in \{0,1\}^L} \prob{X_{\delta} = z,X_\delta'=w\mid X_0 = x, X'_0 = y},
\end{align*}
for $x,y$ with $(x,y)\in E$  and $z, w\in \{0,1\}^L$. Hence,
\begin{align*}
    \prob{X_{i,\delta} = 1 \mid X_0 = x} &= \prob{X_{i,\delta} = 1 \mid X_0 = x, X'_0 = y}, \\
    \text{and } \prob{X'_{i,\delta} = 1 \mid X'_0 = y}&=\prob{X'_{i,\delta} = 1 \mid X_0 = x, X'_0 = y}.
\end{align*}

Fix arbitrarily $i \in \{1,\ldots, L\}$ and $s\in \{0, \ldots, L-2\}$, and choose $x,y \in \mathcal A_2^L$ such that 
\[
x\le y,\qquad x_i=y_i=0,\qquad \s{x}=s,\qquad \s{y}=s+1.
\]
Then by \eqref{e:order}
\begin{multline*}
p^{\mathrm{o}}_s(\delta) = \prob{X_{i,\delta} = 1 \mid X_0 = x} = \prob{X_{i,\delta} = 1 \mid X_0 = x, X'_0 = y}\\
 \le \prob{X'_{i,\delta} = 1 \mid X_0 = x, X'_0 = y} = \prob{X'_{i,\delta} = 1 \mid X'_0 = y}  = p^{\mathrm{o}}_{s+1}(\delta).
\end{multline*}
The above inequality is in fact strict, because the corresponding transition rate 
\[
\Gamma((x,y),(x,y+e_i))
 =
 \lambda_{s+1}-\lambda_s
 >
0,
\]
see \citet[Theorem~3.2.1]{Norris1998}. Thus, 
\[
p^{\mathrm{o}}_0(\delta)<p^{\mathrm{o}}_1(\delta)<\cdots<p^{\mathrm{o}}_{L-1}(\delta).
\]

Similarly, for $i \in \{1,\ldots,L\}$ and $s \in \{1,\ldots, L-1\}$, choose $x, y \in  \mathcal A_2^L$  such that 
\[
x\le y,\qquad x_i=y_i=1,\qquad \s{x}=s,\qquad \s{y}=s+1.
\]
Then, because of \eqref{e:order} and $\Gamma((x,y),(x-e_i,y))
=
\mu_s-\mu_{s+1}
>0$, we have
\begin{multline*}
p^{\mathrm{c}}_s(\delta) = \prob{X_{i,\delta} = 0 \mid X_0 = x} = \prob{X_{i,\delta} = 0 \mid X_0 = x, X'_0 = y}\\
 > \prob{X'_{i,\delta} = 0 \mid X_0 = x, X'_0 = y} = \prob{X'_{i,\delta} = 0 \mid X'_0 = y}  = p^{\mathrm{c}}_{s+1}(\delta),
\end{multline*}
which concludes the proof for the fully positively cooperative case. 

\smallskip

\item
We next consider the fully negatively cooperative case
\[
\lambda_0>\lambda_1>\cdots>\lambda_{L-1},
\qquad
\mu_1<\mu_2<\cdots<\mu_L.
\]
Fix arbitrarily $i \in \{1,\ldots, L\}$, and define 
\[
Z_t^{(i)}
 :=
 \bigl(X_{i,t},\, L-1-S_t+X_{i,t}\bigr),\qquad t\ge 0,
\]
where the second component counts the number of zero coordinates among the
coordinates different from $i$. By lumpability \citep[Theorems~2.3 and~2.4]{Ball1993}, the process $\bigl(Z_t^{(i)}\bigr)_{t\ge0}$ is Markov, and its rate matrix $\widetilde Q$ is given by
\begin{align*}
\widetilde q((a,b),(1,b)) &
 =
 \mathbb{I}(a=0)\lambda_{L-1-b}, \\
\widetilde q((a,b),(0,b))
 & =
 \mathbb{I}(a=1)\mu_{L-b}, \\
 \widetilde q((a,b),(a,b+1)) &
 =
 (L-1-b)\mu_{L-1-b+a}, \\
 \widetilde q((a,b),(a,b-1)) &
 =
 b\lambda_{L-1-b+a},
\end{align*}
for $(a,b) \in \{0,1\}\times\{0,\ldots,L-1\}$, with all other off-diagonal entries equal to zero. 

We next construct a joint Markov chain $(Z_t, Z'_t)_{t\ge0}$ on the state space
\[ 
E=\bigl\{(z, z'):z\le  z'\text{ and } z,z'\in \{0,1\}\times \{0,\dots,L-1\}\bigr\}
\]
such that both marginal process $(Z_t)_{t\ge0}$ and $(Z'_t)_{t\ge0}$ are Markov and share the same rate matrix $\widetilde Q$, as defined above. More precisely, we define the rate matrix $\Gamma$ for the joint chain $(Z_t, Z'_t)_{t\ge0}$  on $E$ as follows
\begin{align*}
\Gamma\bigl((z,z'),(z+h,z' +h)\bigr) & = \min\{\widetilde{q}(z, z+h), \, \widetilde{q}(z', z'+h)\},\\
\Gamma\bigl((z,z'),(z+h,z')\bigr) & = \mathbb{I}(z+h \le z') \max\{\widetilde q(z,z+h)-\widetilde q(z', z'+h),\, 0\},\\
\Gamma\bigl((z,z'),(z,z'+h)\bigr) & = \mathbb{I}(z \le z'+h) \max\{\widetilde q(z',z'+h)-\widetilde q(z, z+h),\, 0\},
\end{align*}
for $(z,z')\in E$ and $h \in \{(1,0),\, (-1, 0),\, (0, 1),\, (0,-1)\}$. All other off-diagonal entries of $\Gamma$ are zero, and the diagonal entries are chosen
so that the rows sum to zero. By \citet[Theorems~3.1]{Ball1993}, we can verify that indeed both marginals are Markov and have the rate matrix $\widetilde Q$. By construction, for $(z, z') \in E$,
\begin{equation}\label{e:order:2}
\prob{Z_t \le Z_t' \text{ for all }t\ge0 \mid Z_0 = z, Z'_0 = z'} = 1.
\end{equation}

As $\Gamma$ satisfies the lumping property, it follows from \citet[Theorem 2.11]{Tian2006} that 
\begin{align*}
\prob{Z_{\delta} = w \mid Z_0 = x}=\sum_{z\in \{0,1\}\times \{0,\dots,L-1\}} \prob{Z_{\delta} = w,Z_\delta'=z \mid Z_0 = x, Z'_0 = y},\\
\text{and } \prob{Z'_{\delta} = z \mid Z'_0 = y}=\sum_{w\in \{0,1\}\times \{0,\dots,L-1\}} \prob{Z_{\delta} = w,Z_\delta'=z \mid Z_0 = x, Z'_0 = y}.
\end{align*}
for $x,y$ with $(x,y)\in E$ and $w,z\in \{0,1\}\times \{0,\dots,L-1\}$. Hence,
\begin{align*}
    \prob{Z_{1,\delta} = 1 \mid Z_0 = x} &= \prob{Z_{1,\delta} = 1 \mid Z_0 = x, Z'_0 = y}, \\ \text{and } \prob{Z'_{1,\delta} = 1 \mid Z'_0 = y}&=\prob{Z'_{1,\delta} = 1 \mid Z_0 = x, Z'_0 = y} .
\end{align*}

To compare the opening probabilities, we fix arbitrarily $s\in \{0, \ldots, L-2\}$ and set 
\[
z = (0, L-2-s), \qquad z' = (0, L-1-s).
\]
Then $z \le z'$ and by \Cref{Lemma_a1}
\begin{multline*}
p^{\mathrm{o}}_{s+1}(\delta) = \prob{Z_{1,\delta}=1 \mid Z_{0}  = z } = \prob{Z_{1,\delta}=1 \mid Z_{0}  = z, Z'_0 = z' }  \\
<  \prob{Z'_{1,\delta}=1 \mid Z_{0}  = z, Z'_0 = z' } = \prob{Z'_{1,\delta}=1 \mid Z'_0 = z' } = p^{\mathrm{o}}_{s}(\delta), 
\end{multline*}
where the inequality follows by \eqref{e:order:2}, and its strictness comes from  
$$
\Gamma\bigl((z, z'),(z, z'+ (1,0))\bigr) = \lambda_s - \lambda_{s+1} >0
$$
and \citet[Theorem~3.2.1]{Norris1998}. Thus, $p^{\mathrm{o}}_0(\delta)>p^{\mathrm{o}}_1(\delta)>\cdots>p^{\mathrm{o}}_{L-1}(\delta)$.

Analogously, for the closing probabilities, we consider $s\in \{1,\ldots, L-1\}$ and set 
\[
z=(1,\,L-1-s),
\qquad
z'=(1,\, L-s).
\]
Then $z \le z'$ and by \Cref{Lemma_a1}
\begin{multline*}
p^{\mathrm{c}}_{s+1}(\delta) = \prob{Z_{1,\delta}=0 \mid Z_{0}  = z } = \prob{Z_{1,\delta}=0 \mid Z_{0}  = z, Z'_0 = z' }  \\
> \prob{Z'_{1,\delta}=0 \mid Z_{0}  = z, Z'_0 = z' } = \prob{Z'_{1,\delta}=0 \mid Z'_0 = z' } = p^{\mathrm{c}}_{s}(\delta), 
\end{multline*}
where the inequality follows by \eqref{e:order:2}, and its strictness comes from  
$$
\Gamma\bigl((z, z'),(z - (1,0), z')\bigr) = \mu_{s+1} - \mu_{s} >0
$$
and \citet[Theorem~3.2.1]{Norris1998}. Thus, $p^{\mathrm{c}}_1(\delta)<p^{\mathrm{c}}_2(\delta)<\cdots<p^{\mathrm{c}}_{L}(\delta)$.

\smallskip

\item
It remains to consider the null cooperative case. By \Cref{prop_rel_null_coop_ind}, the coordinate processes $(X_{1,t})_{t\ge0},\dots,(X_{L,t})_{t\ge0}$ are independent continuous-time Markov chains. Hence, by \Cref{Lemma_a1},
\begin{align*}
p^{\mathrm{o}}_s(\delta) & 
 =
 \mathbb P\{X_{i,t+\delta}=1\mid X_{i,t}=0,\ \s X_t=s\}
 =
 \mathbb P\{X_{i,t+\delta}=1\mid X_{i,t}=0\},\\
 p^{\mathrm{c}}_s(\delta) &
 =
 \mathbb P\{X_{i,t+\delta}=0\mid X_{i,t}=1,\ \s X_t=s\}
 =
 \mathbb P\{X_{i,t+\delta}=0\mid X_{i,t}=1\}.
\end{align*}
As a consequence, both $p^{\mathrm{o}}_s(\delta)$ and $p^{\mathrm{c}}_s(\delta)$ remain constant for different $s$.\qedhere
\end{enumerate}
\end{proof}

\subsection{Proof of Proposition \ref{prop3}}
\begin{proof}
Note that 
\begin{multline*}
         \Lambda(\theta)=\frac{\sum_{s=1}^{L-1}\sum_{r=s+1}^L \mathbb{I}(\mu_r < \mu_s)+\sum_{s=0}^{L-2}\sum_{r=s+1}^{L-1}\mathbb{I}(\lambda_{r}>\lambda_{s}) }{L(L-1)}
        \\ -\frac{\sum_{s=1}^{L-1}\sum_{r=s+1}^L \mathbb{I}(\mu_r >\mu_s)+\sum_{s=0}^{L-2}\sum_{r=s+1}^{L-1}\mathbb{I}(\lambda_{r}<\lambda_{s})} {L(L-1)}.
\end{multline*}
Thus, for part~\ref{prop3.1}, we have $\Lambda(\theta)=1$ if and only if 
\[
\frac{\left(\sum_{s=1}^{L-1}\sum_{r=s+1}^L \mathbb{I}(\mu_r < \mu_s)+\sum_{s=0}^{L-2}\sum_{r=s+1}^{L-1}\mathbb{I}(\lambda_{r}>\lambda_{s})\right) }{L(L-1)}=1,
\]
which holds if and only if  $s\mapsto \lambda_s$ is strictly increasing and $s\mapsto \mu_s$ is strictly decreasing. 

Part~\ref{prop3.2} follows similarly, due to symmetry.  

For part~\ref{prop3.3}, if  $s\mapsto \lambda_s$ and $s\mapsto \mu_s$ are constant, it clearly holds $\Lambda(\theta)=0$.
\end{proof}

\subsection{Proof of Proposition~\ref{coop_index_convergence}}
\begin{proof}
\textbf{Part~\ref{prop4.3_case1}} For distinct $s,r\in\{1,\dots,L\}$, since $\mu_{s} \neq \mu_r$ and $\lambda_{s-1}\neq \lambda_{r-1}$, the map
\[
(a,b) \mapsto \sign(a - b) 
\]
is continuous at $(\mu_{s}, \mu_{r})$ and also at $(\lambda_{r-1}, \lambda_{s-1})$. Then, by the continuous mapping theorem, 
\begin{multline*}
\Lambda(\hat{\theta}_n)=\frac{1}{L(L-1)}\sum_{s=1}^{L-1}\sum_{r=s+1}^L
\bigl(\sign(\hat{\mu}_s-\hat{\mu}_r)+\sign(\hat{\lambda}_{r-1}-\hat{\lambda}_{s-1})\bigr)\\
\xrightarrow{\mathbb{P}}\; \frac{1}{L(L-1)}\sum_{s=1}^{L-1}\sum_{r=s+1}^L
\bigl(\sign(\mu_s-\mu_r)+\sign(\lambda_{r-1}-\lambda_{s-1})\bigr) = \Lambda(\theta_\circ).
\end{multline*}
As $\Lambda(\cdot)$ takes values in a finite set, the convergence in probability is equivalent to  \[
	\mathbb{P}\bigl(\Lambda(\hat{\theta}_n)=\Lambda(\theta_{\circ})\bigr)\to 1,
	\qquad \text{as } n\to\infty.
	\] 
 
\smallskip
 
\textbf{Part~\ref{prop4.3_case2}} Introduce the maps $W_1,W_2 : \Theta = [0,\infty)^{2L}\to \mathbb{R}$ as 
\begin{align*}
W_{1}(\theta)&=\frac{1}{L(L-1)}\left[\sum_{(s,r)\notin A,s<r} \sign({\theta}_{L+s} - {\theta}_{L+r}) +\sum_{(s,r)\notin B,s<r}\sign({\theta}_{r} -{\theta}_{s})\right]\\
\text{and}\quad
W_{2}(\theta)&=\frac{1}{L(L-1)}\left[\sum_{(s,r)\in A,s<r} \sign({\theta}_{L+s} - {\theta}_{L+r})  +\sum_{(s,r)\in B,s<r}\sign({\theta}_{r} -{\theta}_{s})\right]
\end{align*}
for $\theta = (\theta_1,\ldots,\theta_{2L}) \in \Theta$. Then \(\Lambda(\hat{\theta}_n)=W_{1}(\hat\theta_n)+W_{2}(\hat\theta_n)\), and similar to part~\ref{prop4.3_case1}, by the continuous mapping theorem, 
\[ 
W_{2}(\hat{\theta}_n)\xrightarrow{\mathbb{P}}W_2(\theta_{\circ}).
\]
Note that $W_{1}(\hat{\theta}_n)=W_1\bigl(a_n(\hat{\theta}_n-\theta_{\circ})\bigr)$ and by assumption, the probability that $Z$ lies in the set of continuity points of $W_1(\cdot)$ is one. The assumption that $a_n(\hat{\theta}_n-\theta_{\circ}) \xrightarrow{D}Z$, together with  the continuous mapping theorem, implies
\[
W_{1}(\hat{\theta}_n)\xrightarrow{D}W_1(Z).
\]    
Then, Slutsky's lemma yields the assertion.
\end{proof}

\subsection{Proof of Theorem \ref{consistency_ind}}

We begin with an auxiliary result establishing the convergence of the soft-thresholding operator to the sign function.

\begin{lemma}\label{l:soft}
Let $\hat{\theta}_n\in\mathbb{R}$ be a consistent estimator of $\theta\in\mathbb{R}$ such that
\[
\hat{\theta}_n-\theta=o_{\mathbb{P}}(a_n)
\]
for some sequence $a_n\searrow 0$. Then, as $n\to\infty$,
\[
\soft_{a_n}(\hat{\theta}_n)\xrightarrow{\mathbb{P}}\sign(\theta).
\]
\end{lemma}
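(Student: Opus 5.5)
Since $\soft_{a_n}(\hat\theta_n)$ and $\sign(\theta)$ both take values in $\{-1,0,1\}$, convergence in probability is the same as $\mathbb{P}\{\soft_{a_n}(\hat\theta_n)=\sign(\theta)\}\to 1$. We show this by splitting into the cases $\theta=0$ and $\theta\neq 0$.

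\emph{Case $\theta=0$.} Here $\sign(\theta)=0$, and $\soft_{a_n}(\hat\theta_n)\neq 0$ exactly when $|\hat\theta_n|>a_n$. Because $\theta=0$, this is the event $|\hat\theta_n-\theta|/a_n>1$. The hypothesis $\hat\theta_n-\theta=o_{\mathbb{P}}(a_n)$ says this ratio tends to zero in probability, so the event has probability tending to zero.

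\emph{Case $\theta\neq 0$.} Suppose $\theta>0$; the case $\theta<0$ follows by symmetry, replacing $\hat\theta_n$ with $-\hat\theta_n$. Since $a_n\searrow 0$, we have $a_n<\theta/2$ for all large $n$. On the event $\{|\hat\theta_n-\theta|<\theta/2\}$ we get $\hat\theta_n>\theta/2>a_n$, and hence $\soft_{a_n}(\hat\theta_n)=1=\sign(\theta)$. It remains to show this event has probability tending to one. Consistency of $\hat\theta_n$ gives this directly. Alternatively, it follows from $o_{\mathbb{P}}(a_n)$ together with $a_n\to 0$, since then $\hat\theta_n-\theta=o_{\mathbb{P}}(1)$.

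The only point needing care is the $\theta=0$ case. There, consistency alone is not enough, and the rate condition relative to the threshold is exactly what is needed. In the $\theta\neq0$ case, it matters only that the threshold vanishes.

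For Theorem~\ref{consistency_ind}, apply the lemma to each scalar difference: $\hat\mu_s-\hat\mu_r$ estimates $\mu_s-\mu_r$ with error $o_{\mathbb{P}}(a_n)$, and similarly $\hat\lambda_{r-1}-\hat\lambda_{s-1}$ estimates $\lambda_{r-1}-\lambda_{s-1}$. Each soft-thresholded term then converges in probability to the corresponding sign. The sum has finitely many terms, so $\Lambda_n(\hat\theta_n)\xrightarrow{\mathbb{P}}\Lambda(\theta)$.
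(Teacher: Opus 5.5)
Your proof is correct and follows the paper's argument: you split into the cases $\theta=0$ and $\theta\neq0$. When $\theta=0$, the rate condition $\hat\theta_n-\theta=o_{\mathbb{P}}(a_n)$ gives $|\hat\theta_n|\le a_n$ with probability tending to one; when $\theta\neq0$, consistency together with $a_n\to0$ forces the correct sign above the threshold. Your handling of the $\theta\neq0$ case is cleaner than the paper's: you show $\soft_{a_n}(\hat\theta_n)=\sign(\theta)$ exactly on $\{|\hat\theta_n-\theta|<|\theta|/2\}$ once $a_n<|\theta|/2$. The paper's displayed bound $\bigl||\hat\theta_n|-|\theta|-a_n\bigr|$ instead reads as if it were written for soft rather than hard thresholding.
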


\begin{proof}
We distinguish two cases.

\begin{enumerate}
\item \emph{Case $\theta=0$.}
In this case, $\sign(\theta)=0$. Since $\hat{\theta}_n-\theta=o_{\mathbb{P}}(a_n)$, we have
\[
\mathbb{P}\bigl(|\hat{\theta}_n|\le a_n\bigr)\to 1,
\]
which implies
\(
\mathbb{P}\bigl(\soft_{a_n}(\hat{\theta}_n)=0\bigr)\to 1.
\)
Hence, $\soft_{a_n}(\hat{\theta}_n)\xrightarrow{\mathbb{P}}0=\sign(\theta)$.

\item \emph{Case $\theta\neq 0$.}
Define the events
\[
A_n:=\Bigl\{|\hat{\theta}_n|\ge a_n,\ \sign(\hat{\theta}_n)=\sign(\theta),\ \text{and }|\hat{\theta}_n-\theta|\le a_n\Bigr\}.
\]
On $A_n$, we obtain
\[
\bigl|\soft_{a_n}(\hat{\theta}_n)-\sign(\theta)\bigr|
= \bigl||\hat{\theta}_n|-|\theta|-a_n\bigr|
\le a_n \to 0.
\]
Moreover, $\mathbb{P}(A_n)\to 1$ as a consequence of $\hat{\theta}_n-\theta=o_{\mathbb{P}}(a_n)$. Therefore,
\[
\soft_{a_n}(\hat{\theta}_n)\xrightarrow{\mathbb{P}}\sign(\theta).\qedhere
\]
\end{enumerate}
\end{proof}

\begin{proof}[Proof of Theorem~\ref{consistency_ind}] 
By Lemma~\ref{l:soft}, we have, for $s,r\in\{1,\ldots,L\}$, 
\[
\soft_{a_n}(\hat\mu_s-\hat\mu_r) \xrightarrow{\mathbb{P}}\sign(\mu_s-\mu_r)\quad\text{and}\quad
\soft_{a_n}(\hat\lambda_{r-1}-\hat\lambda_{s-1}) \xrightarrow{\mathbb{P}}\sign(\lambda_{r-1}-\lambda_{s-1}).
\]
Thus, as $n\to\infty$,
\begin{multline*}
\Lambda_n(\hat\theta_n) = \frac{1}{L(L-1)}\sum_{s=1}^{L-1}\sum_{r=s+1}^L
\Bigl(
\soft_{a_n}(\hat\mu_s-\hat\mu_r)+\soft_{a_n}(\hat\lambda_{r-1}-\hat\lambda_{s-1})
\Bigr) \\
\xrightarrow{\mathbb{P}}\;  \frac{1}{L(L-1)}\sum_{s=1}^{L-1}\sum_{r=s+1}^L
\Bigl(
\sign(\mu_s-\mu_r)+\sign(\lambda_{r-1}-\lambda_{s-1})
\Bigr)
 = \Lambda(\theta).\qedhere
\end{multline*}
\end{proof}

\clearpage

\section{Proofs of results in Section \ref{s:estimation} }

\subsection{Proof of Lemma \ref{identifiability}}

\begin{proof}
Let $P_\delta(\theta)=e^{\delta R(\theta)}$. By Proposition~\ref{bd_proc}, the sum process $S_t$ is reversible for all $\theta \in \Theta$. Then $P_\delta(\theta)$ uniquely identifies $\delta R(\theta)$ for $\theta\in\Theta$, by \citet[Theorem 1]{Jia2016}. Further, as the map $\theta \mapsto \delta R(\theta)$ is injective (Theorem~\ref{Theorem2}\ref{i:th2:3}), we have $\theta \mapsto e^{\delta R(\theta)}$ is injective.
\end{proof}

\subsection{Proof of Theorem~\ref{HMM_consistency}}

\begin{proof}
The targeted consistency result can be proven using the classical approach of \citet{Wald49}. Here, we apply the general consistency result for maximum likelihood estimators in hidden Markov models established by \citet{LEROUX1992}. To this end, it suffices to verify Conditions~1--6 therein.

Condition~4 is immediate from the parametrization of the latent process $(S_t)_{t \ge 0}$, see Theorem~\ref{Theorem2}. Under the SD-HMM, the latent sum process $(S_t)_{t \ge 0}$ is irreducible by Lemma~\ref{l:ir:sd} and Proposition~\ref{bd_proc}\ref{i:bd_proc:1}, which implies that Condition~1 is satisfied. Conditions~2, 3, 5 and~6 follow directly from Assumption~\ref{a:em}. 

Thus, by \citet[Theorem~3]{LEROUX1992} and Proposition~\ref{HMM_ident_prop}, the assertion holds.
\end{proof}

\subsection{Proof of Theorem \ref{HMM_asymptotic_normal}}

\begin{proof}
We invoke the general result of \citet{Bickel98} and verify conditions (A1)--(A6) therein.

The assumed stationarity of the latent sum process $(S_t)_{t\ge0}$, together with its irreducibility (see Lemma~\ref{l:ir:sd} and Proposition~\ref{bd_proc}\ref{i:bd_proc:1}), implies that the sampled process $(S_{t_k})_{k=1}^n$ is ergodic \citep[Lemma~1]{LEROUX1992}. Hence, condition (A1) holds.

Condition (A2) follows from Assumption~\ref{a:reg:em2}\ref{HMM_consist_6} and the differentiability with respect to $\theta$ of the transition matrix $\exp\!\bigl(\delta R(\theta)\bigr)$, where $R(\theta)$ is defined in Theorem~\ref{Theorem2}\ref{i:th2:2}.

Conditions (A3) and (A4) are ensured by Assumption~\ref{a:reg:em2}\ref{HMM_consist_7}--\ref{HMM_consist_8}, respectively. Finally, conditions (A5) and (A6) are imposed as assumptions. The claim therefore follows from \citet[Theorem~1]{Bickel98}.
\end{proof}

\subsection{Auxiliary result}

The following result demonstrates that Assumption~\ref{a:em}\ref{HMM_consist_1}, which is slightly stronger than Condition~2 in \citet{LEROUX1992}, strengthens identifiability from identifiability in the quotient space to exact identifiability.

\begin{prop}\label{HMM_ident_prop}
Under the SD-HMM (Model~\ref{model:SDHMM}), suppose that Assumption~\ref{a:em}\ref{HMM_consist_1} holds, and let $\eta=(\theta,\phi)\in\Theta\times\Phi$ and $\eta'=(\theta',\phi')\in\Theta\times\Phi$. Then
\[
\eta \sim \eta' \quad \text{if and only if} \quad \eta=\eta'
\]
where $\sim$ denotes the equivalence relation defined in \citet{LEROUX1992}.
\end{prop}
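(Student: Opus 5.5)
The "if" direction is immediate, because $\sim$ is reflexive. For the "only if" direction, recall that in \citet{LEROUX1992} two parameters are equivalent, $\eta\sim\eta'$, when they induce the same law of the stationary observation process $(Y_k)_k$. In Leroux's setting this identifies parameters only up to a relabelling of the hidden states. The plan is to show that, in the SD-HMM, this relabelling freedom is removed in two steps: first by Assumption~\ref{a:em}\ref{HMM_consist_1}, which pins down the emission parameter and the state labels, and then by Lemma~\ref{identifiability}, which pins down the rate parameters.

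\textbf{Step 1: one-dimensional marginals.} Suppose $\eta\sim\eta'$. Then $Y_1$ has the same law under both parameters. Under $\eta$ its density is $\sum_{i=0}^L \pi_\theta(i)\, g_\phi(y\mid i)$, and similarly under $\eta'$. These are mixtures with probability weights, so Assumption~\ref{a:em}\ref{HMM_consist_1} applies directly and gives $\phi=\phi'$ and $\pi_\theta=\pi_{\theta'}$. Note that the weights are matched label by label: the assumption rules out permutations of states altogether, not just equalities up to relabelling.

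\textbf{Step 2: two-dimensional marginals.} Next use the joint law of $(Y_1,Y_2)$. Its density is
\[
\sum_{i,j}\pi_\theta(i)\,p_{i,j}(\theta)\,g_\phi(y_1\mid i)\,g_\phi(y_2\mid j).
\]
Fix a set $B$ with $\int_B g_\phi(y_1\mid i)\,\nu(dy_1)$ not all zero, and integrate out $y_1$ over $B$. This leaves a nonnegative combination of the $g_\phi(\cdot\mid j)$; after normalising, Assumption~\ref{a:em}\ref{HMM_consist_1} applies again. Doing this for enough sets $B$ should yield $\pi_\theta(i)p_{i,j}(\theta)=\pi_{\theta'}(i)p_{i,j}(\theta')$ for all $i,j$.

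\textbf{Main obstacle.} The difficulty is in Step 2: the assumption only gives uniqueness of the mixing weights, not linear independence of the individual $g_\phi(\cdot\mid i)$. My first attempt would be to deduce linear independence from the assumption itself. Suppose $\sum_i c_i g_\phi(\cdot\mid i)=0$ with $c\neq 0$. Because each $g_\phi(\cdot\mid i)$ is a density, integrating gives $\sum_i c_i=0$. Split $c$ into its positive and negative parts; these have equal total mass, so after normalising they give two distinct probability weight vectors producing the same mixture. That contradicts the assumption, so the family $\{g_\phi(\cdot\mid i)\}_{i=0}^L$ is linearly independent. With linear independence, the two-dimensional density identifies each coefficient $\pi_\theta(i)p_{i,j}(\theta)$ directly.

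\textbf{Conclusion.} The initial distribution has full support, so $\pi_\theta(i)>0$ and we can divide to get $e^{\delta R(\theta)}=e^{\delta R(\theta')}$. Lemma~\ref{identifiability} then gives $\theta=\theta'$, and together with $\phi=\phi'$ from Step 1 this yields $\eta=\eta'$.
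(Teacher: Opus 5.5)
Your argument is correct and follows the paper's proof. The one-dimensional marginal fixes $\phi$ and the mixing weights via Assumption~\ref{a:em}\ref{HMM_consist_1}, the two-dimensional marginal fixes $\pi^\star_j(\theta)\,p_{j,k}(\theta)$, and Lemma~\ref{identifiability} then gives $\theta=\theta'$. Your positive/negative-part derivation of the linear independence of $\{g_\phi(\cdot\mid i)\}_{i=0}^L$ spells out a step the paper uses without comment, namely when it cancels the signed coefficients $\alpha_{j,k}$ ``by Assumption~\ref{a:em}\ref{HMM_consist_1}''.

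One small correction: Leroux's $\sim$ compares stationary laws, so the weights in Steps~1--2 should be the stationary distribution $\pi^\star(\theta)$ of Proposition~\ref{bd_proc}, which is positive by irreducibility, rather than the initial distribution $\pi_\theta$. The argument is otherwise unchanged.
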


\begin{proof}
The implication $\eta=\eta' \Rightarrow \eta\sim\eta'$ is immediate. 

Conversely, suppose that $\eta\sim\eta'$. Note that Assumption~\ref{a:em}\ref{HMM_consist_1} implies Condition~2 of \citet{LEROUX1992}. By Lemma~2 of \citet{LEROUX1992}, the joint stationary distribution of $(Y_{n-1},Y_n)$ coincides under $\eta=(\theta,\phi)$ and $\eta'=(\theta',\phi')$, namely,
\begin{equation}\label{e:two:stat}
\sum_{j,k=0}^L \pi^\star_j(\theta)\, p_{j,k}(\theta)\,
g_\phi(y_1\mid j)\, g_\phi(y_2\mid k)
=
\sum_{j,k=0}^L \pi^\star_j(\theta')\, p_{j,k}(\theta')\,
g_{\phi'}(y_1\mid j)\, g_{\phi'}(y_2\mid k)
\end{equation}
for $(\nu\times \nu)$-almost every $(y_1,y_2)$. Here $p_{j,k}(\theta)$ denotes the $(j,k)$-th entry of $e^{\delta R(\theta)}$, and $\pi^\star_j(\theta)$ the $j$-th component of the invariant distribution from Proposition~\ref{bd_proc}.

Integrating both sides of~\eqref{e:two:stat} with respect to $y_2$ and using that $g_\phi(\cdot\mid k)$ and $g_{\phi'}(\cdot\mid k)$ integrate to one for all $k$, we obtain
\[
\sum_{j=0}^L \pi^\star_j(\theta)\, g_\phi(y_1\mid j)
=
\sum_{j=0}^L \pi^\star_j(\theta')\, g_{\phi'}(y_1\mid j)
\]
for $\nu$-almost every $y_1$. By Assumption~\ref{a:em}\ref{HMM_consist_1}, this implies
\[
\pi^\star_j(\theta)=\pi^\star_j(\theta') \quad \text{for all } j=0,\dots,L,
\qquad \text{and} \qquad \phi=\phi'.
\]

Substituting these equalities back into \eqref{e:two:stat} yields
\[
\sum_{j=0}^L\sum_{k = 0}^L \pi^\star_j(\theta)\, p_{j,k}(\theta)\,
g_\phi(y_1\mid j)\, g_\phi(y_2\mid k)
=
\sum_{j=0}^L\sum_{k = 0}^L \pi^\star_j(\theta)\, p_{j,k}(\theta')\,
g_\phi(y_1\mid j)\, g_\phi(y_2\mid k)
\]
which is equivalent to 
\[
\sum_{j=0}^L g_\phi(y_1\mid j)\left( \sum_{k = 0}^L\alpha_{j,k}\,
g_\phi(y_2\mid k)\right)=0
\quad \text{with } \alpha_{j,k}:=\pi^\star_j(\theta)\bigl(p_{j,k}(\theta)-p_{j,k}(\theta')\bigr).
\]
By Assumption~\ref{a:em}\ref{HMM_consist_1}, we have, for every $j\in\{0,\ldots, L\}$,
\[
\sum_{k = 0}^L\alpha_{j,k}\,
g_\phi(y_2\mid k) = 0\quad\text{for $\nu$-almost every } y_2.
\]
Again by Assumption~\ref{a:em}\ref{HMM_consist_1}, it holds that $\alpha_{j,k} = 0$ for all $j,k\in\{0,\dots,L\}$. 
Hence $e^{\delta R(\theta)}=e^{\delta R(\theta')}$, which implies $\theta=\theta'$ by Lemma~\ref{identifiability}. Together with $\phi=\phi'$, this yields $\eta=\eta'$.
\end{proof}

\subsection{Verifying the conditions from Theorems~\ref{HMM_consistency} and~\ref{HMM_asymptotic_normal} for Example \ref{HMM_example2}} \label[appendix]{verify} 
In Example~\ref{HMM_example2}, we have $\Theta=(0,\infty)^{2L}$, the set of emission parameters $\Phi= \mathbb{R}\times (0,\infty)\times (0,\infty)^{L+1}$, and the emission density
\begin{equation}\label{e:emd}
g_\phi(y\mid i)=\frac{1}{\sqrt{2\pi}\sigma_i}\exp\left(-\frac{(y-b-\nu i)^2}{2\sigma_i^2}\right),
\end{equation} 
where $\phi=(b,\nu,\sigma_0,\dots,\sigma_L) \in \Phi$.  
We assume the stationarity of the hidden chain and examine the conditions from  \Cref{HMM_consistency,HMM_asymptotic_normal} as follows:
\begin{description}
    \item[\bf Assumption~\ref{a:em}\ref{HMM_consist_1}] 
    Let  $a_0,\dots,a_{L}, a_0',\dots,a_L'$ be nonnegative weights such that 
    $$
    \sum_{i=0}^L a_i= \sum_{i=0}^L a_i' = 1.
    $$
By the identifiability of finite normal mixtures \citep{Teicher1963} and \eqref{e:emd}, if
 \[
\sum_{i = 0}^{L}a_i g_\phi (y\;\vert\;i)=\sum_{i=0}^{L}a_i' g_{\phi'} (y\;\vert\;i )
 \]
 for almost every $y\in \mathbb{R}$, then $\phi'=\phi$ and $a_i'=a_{i}$ for all $i \in \{0, 1, \ldots, L\}$.
 
 \smallskip
 
 \item[\bf Assumption~\ref{a:em}\ref{HMM_consist_2}] 
 By \eqref{e:emd}, for each $y \in \mathbb{R}$ and $i \in \{0,\dots,L\}$,  the map $\phi \mapsto g_\phi(y\;\vert\; i )$ is continuous and  $\lim_{\norm{\phi}\to \infty}g_\phi(y\;\vert\; i )=0$.

\item[\bf Assumption~\ref{a:em}\ref{HMM_consist_3}] 
For any $i \in \{0,\dots,L\}$ and $\phi\in\Phi$, there exists a constant $C>0$ such that
\begin{align*}
    \bigl\lvert{\log(g_{\phi}(Y_1\mid i))}\bigr\rvert & = \left\lvert-\frac{1}{2}\log(2\pi)-\log(\sigma_i)-\frac{(Y_1-b-\nu i)^2}{2\sigma_i^2}\right\rvert\\
    &\le C+\frac{(Y_1-b-\nu i)^2}{2\sigma_i^2} \le C+\frac{2Y_1^2}{2\sigma_i^2}+\frac{2(b+\nu i)^2}{2\sigma_i^2}.
\end{align*}
Thus, the condition holds, since for any $\eta \in \Theta \times \Phi$, 
\begin{align*}
    \mathbb{E}_{\eta}[Y_1^2]&= \mathbb{E}_{\eta}[ \mathbb{E}_{\eta}[Y_1^2\mid S_{t_1}]]\\
    &=\sum_{i=0}^L\prob{S_{t_1}=i} \bigl(\sigma_i^2+(b+\nu i)^2\bigr) <\infty.
\end{align*}

 \smallskip

\item[\bf Assumption~\ref{a:em}\ref{HMM_consist_4}] 
The condition holds, since for any $\eta \in \Theta \times \Phi$,
\begin{align*}
    \mathbb{E}_{\eta}\left[\sup_{\substack{\phi'\in \Phi: \norm{\phi-\phi'}<\delta \\ i \in \{0,\dots,L\}} }\max\{\log g_{\phi'}(Y_1\mid j), 0\}\right]
\le      \sup_{\substack{\sigma_0',\dots,\sigma_L'>0: \sum_{j=0}^L\abs{\sigma_j'-\sigma_j}<\delta \\ i\in\{0,\dots,L\}} }\abs{\log(\sigma'_i)}
    <\infty,
\end{align*}
whenever $\delta \in (0, \min_{i} \sigma_i)$.

 \smallskip

\item[\bf Assumption~\ref{a:reg:em2}\ref{HMM_consist_6}] 
By \eqref{e:emd}, for each $i \in \{0,1,\ldots, L\}$ and $y\in \mathbb{R}$, the map $\phi \mapsto g_\phi(y\mid i)$ is twice continuously differentiable on $\Phi$.

 \smallskip

\item[\bf Assumption~\ref{a:reg:em2}\ref{HMM_consist_7}]
There exists $\varepsilon>0$ such that for all $\|\phi-\phi_\circ\|<\varepsilon$,
\begin{equation}\label{e:lobnd}
\min_{0\le s\le L} \sigma_s \ge c > 0.
\end{equation}
The first- and second-order partial derivatives of $\log g_\phi(y\mid s)$ with respect to $\phi$ are polynomials in $(y-b-\nu s)$ of degree at most $2$, multiplied by powers of $\sigma_s^{-1}$. Hence, uniformly over $\|\phi-\phi_\circ\|<\varepsilon$,
	\[
	\left|
	\frac{\partial}{\partial \phi_k}\log g_\phi(y\mid s)
	\right|
	+
	\left|
	\frac{\partial^2}{\partial \phi_k\partial \phi_l}
	\log g_\phi(y\mid s)
	\right|
	\;\le\;
	C\,(1+|y|^2),
	\]
	for some finite constant $C>0$. Squaring the bound for the first derivative yields 
	$$
	\left| \frac{\partial}{\partial \phi_k}\log g_\phi(y\mid s) \right|^2 \le C^2(1+|y|^2)^2 \le 2C^2(1+|y|^4).
	$$
	Under $\eta_\circ$, the random variable $Y_1$ has a normal mixture
	distribution and therefore possesses finite moments of all orders.
	It follows that, for all $s\in\{0,\dots,L\}$ and $k,l\in\{1,\dots,d\}$,
	\begin{align*}
	&\mathbb{E}_{\eta_\circ}
	\left[
	\sup_{\|\phi-\phi_\circ\|<\varepsilon}
	\left|
	\frac{\partial}{\partial \phi_k}
	\log g_\phi(Y_1\mid s)
	\right|^2
	\right]
	<\infty,\\
	\text{and}\quad &
	\mathbb{E}_{\eta_\circ}
	\left[
	\sup_{\|\phi-\phi_\circ\|<\varepsilon}
	\left|
	\frac{\partial^2}{\partial \phi_k\partial \phi_l}
	\log g_\phi(Y_1\mid s)
	\right|
	\right]
	<\infty.
	\end{align*}
	Similarly, using \eqref{e:lobnd} and the boundedness of $\{\phi:\|\phi-\phi_\circ\|<\varepsilon\} $, we can obtain, for $l\in\{1,2\}$,  $1\le k_1,\dots,k_l\le d$ and $y\in \mathbb{R}$, 
\[
\sup_{\phi:  \|\phi-\phi_\circ\|<\varepsilon}\left| \frac{\partial^l}{\partial \phi_{k_1}\dots \partial \phi_{k_l}}g_\phi(y\mid s) \right| \le \bar{C}(1+|y|^4) \exp\left(-\frac{y^2}{4\bar{\sigma}^2}\right) =: \bar{g}(y),
\]
with some finite constants $\bar{\sigma},\bar{C} >0$. Consequently, 
\[
\int_{\mathbb{R}} \sup_{\phi:\norm{\phi-\phi_{\circ}}<\varepsilon}\abs{\frac{\partial^l}{\partial \phi_{k_1}\dots \partial \phi_{k_l}}g_{\phi}(y\mid s)}\, \mathrm{d} y \le \int_{\mathbb{R}} \bar{g}(y) \, \mathrm{d}y <\infty. 
\]			

\smallskip

\item[\bf Assumption~\ref{a:reg:em2}\ref{HMM_consist_8}] 
For $i,j\in\{0,1,\ldots, L\}$, it holds 
\[
\frac{g_\phi(y\mid i)}{g_\phi(y\mid j)}
=
\frac{\sigma_j}{\sigma_i}
\exp\!\left(
-\frac{(y-b-\nu i)^2}{2\sigma_i^2}
+
\frac{(y-b-\nu j)^2}{2\sigma_j^2}
\right).
\]
Choose $\varepsilon>0$ such that $\sigma_s \ge c>0$ for all $s$ whenever $\|\phi-\phi_\circ\|<\varepsilon$. Then the above ratio is finite for all $y\in\mathbb{R}$ and all such $\phi$. In particular,
\[
\mathbb{P}_{\eta_\circ}\!\left(
\sup_{\|\phi-\phi_\circ\|<\varepsilon}
\max_{0\le i,j\le L}
\frac{g_\phi(Y_1\mid i)}{g_\phi(Y_1\mid j)}
= \infty
\,\Big|\, S_{t_1}=s
\right) = 0,
\quad s\in \{0,\dots,L\}.
\]
\end{description} 

\smallskip

\noindent 
Therefore, the conditions of Theorems~\ref{HMM_consistency} and~\ref{HMM_asymptotic_normal} are all satisfied.

\subsection{Proofs for \Cref{sec:testing}}\label{ss:pr:fwer}

We begin with some notation and technical preparations. For a covariance matrix
\(R\in\mathbb R^{d\times d}\) with \(d\in\mathbb N\), define
\begin{equation}\label{e:calpha}
    c_{\alpha}(R)
    :=
    \inf\left\{
    t\ge 0:
    \mathbb P\left(\max_{1\le r\le d}|Z_r|\le t\right)\ge 1-\alpha
    \right\},
    \qquad Z\sim\mathcal N(0,R).
\end{equation}

\begin{lemma}\label[lemma]{lem:gaussian_max_continuous}
For \(d\in\mathbb N\), let \(R\in\mathbb R^{d\times d}\) be a positively semidefinite matrix with $\text{rank}(R) \ge 1$. Let also $c_\alpha(R)$ and $Z = (Z_1, \ldots,Z_d)$ be given in \eqref{e:calpha}. Then, the distribution function of \(M := \max_{1\le j\le d}|Z_j|\) is continuous and strictly increasing on \([0,\infty)\). In particular, for every \(\alpha\in(0,1)\),
\[
\prob{M\le c_{\alpha}(R)}=1-\alpha,
\quad\text{ and equivalently }\quad
\prob{M> c_{\alpha}(R)}=\alpha.
\]
\end{lemma}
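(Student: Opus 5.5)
The plan is to prove continuity and strict monotonicity of $F(t):=\mathbb P(M\le t)$ on $[0,\infty)$ separately, and then read off the quantile identity.

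\emph{Continuity.} It suffices to show $\mathbb P(M=t)=0$ for every $t\ge 0$. For $t>0$, the event $\{M=t\}$ is contained in $\bigcup_j\{|Z_j|=t\}$. If $R_{jj}>0$, then $Z_j$ is a nondegenerate normal variable, so $\mathbb P(|Z_j|=t)=0$. If $R_{jj}=0$, then $Z_j=0$ almost surely and again $\mathbb P(|Z_j|=t)=0$ because $t>0$. For $t=0$, note that $\{M=0\}\subseteq\{Z_j=0\}$ for every $j$. Since $\operatorname{rank}(R)\ge1$, some $j$ has $R_{jj}>0$, and hence $\mathbb P(M=0)=0$. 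Continuity at $0$ from the right follows from right-continuity of distribution functions together with $F(0)=0$. On $(-\infty,0)$ we have $F=0$, so $F$ is continuous everywhere.

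\emph{Strict monotonicity.} Fix $0\le s<t$. I would show $\mathbb P(s<M\le t)>0$. Write $Z=AW$ with $W\sim\mathcal N(0,I_k)$, where $k=\operatorname{rank}(R)\ge1$ and $A\in\mathbb R^{d\times k}$ has rank $k$. The map $w\mapsto \max_j|(Aw)_j|$ is a continuous seminorm on $\mathbb R^k$. It is in fact a norm, because $Aw=0$ forces $w=0$ when $A$ has full column rank. Hence there is a $w_0$ with $\max_j|(Aw_0)_j|=(s+t)/2$; obtain it by scaling any nonzero vector. By continuity, an open neighbourhood $U$ of $w_0$ satisfies $s<\max_j|(Aw)_j|<t$ for all $w\in U$. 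Since $W$ has an everywhere positive density on $\mathbb R^k$, $\mathbb P(W\in U)>0$, which gives $F(t)-F(s)>0$.

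\emph{Conclusion.} $F$ is continuous and strictly increasing on $[0,\infty)$, with $F(0)=0$ and $F(t)\to1$ as $t\to\infty$. By the intermediate value theorem, for $\alpha\in(0,1)$ there is a unique $t^*>0$ with $F(t^*)=1-\alpha$. By monotonicity, $\{t\ge0:F(t)\ge1-\alpha\}=[t^*,\infty)$, so $c_\alpha(R)=t^*$. Therefore $\mathbb P(M\le c_\alpha(R))=1-\alpha$ and, equivalently, $\mathbb P(M>c_\alpha(R))=\alpha$.

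The only delicate step is strict monotonicity for a singular $R$. The factorization $Z=AW$ with $A$ of full column rank reduces it to a nondegenerate Gaussian, which has full support on $\mathbb R^k$, combined with the fact that $\max_j|(Aw)_j|$ is a norm on $\mathbb R^k$.
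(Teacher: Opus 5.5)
Your proof is correct and follows essentially the same route as the paper: continuity via $\mathbb{P}(M=t)\le\sum_j\mathbb{P}(|Z_j|=t)$, and strict monotonicity via a full-column-rank factorization $Z\stackrel{D}{=}AW$ together with the everywhere-positive density of $W\sim\mathcal{N}(0,I_k)$. Your version is slightly more careful in two places. You treat $t=0$ separately, where the paper's union bound fails if some $R_{jj}=0$, and this secures $F(0)=0$, which the quantile identity needs. You also use the norm $w\mapsto\max_j|(Aw)_j|$ and an open neighbourhood to show explicitly why $\{w: s<\max_j|(Aw)_j|\le t\}$ has positive probability.
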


\begin{proof}
   Let $F$ be the distribution function of \(M\), which is continuous on $[0,\infty)$ since
\[
F(t) - F(t-) = \prob{M=t}
\le \sum_{j=1}^d \prob{\abs{Z_j}=t}
=0.
\]
The continuity of $F$ implies that $F(c_\alpha(R))=1-\alpha$. 

We next show that $F$ is strictly increasing on $[0,\infty)$. There exists a matrix $B\in\mathbb R^{d\times r}$ with full column rank $r = \text{rank}(R)$ such that 
\(
R=BB^\top.
\)
Then $Z\stackrel{D}=BX$ with $X$ a standard $r$-dimensional normal random vector, and $M\stackrel{D}=N(X)$ for $N(x)=\max_{1\le j\le d}\abs{(Bx)_j}$. Hence,
\[F(t)-F(a)=\prob{ X\in\{x:N(x)\le t\}\setminus \{x:N(x)\le a\} }>0,\]as $X$ has a strictly positive Lebesgue density. Namely, $F$ is strictly increasing.
\end{proof}
For \((i,j)\in\mathcal H\), let
\(a_{(i,j)}:=e_i-e_j\in\mathbb R^{2L}\), with \(e_i\) the
\(i\)-th canonical basis vector. Then
\[
\hat s_{n,(i,j)}^2
=
a_{(i,j)}\hat\Sigma_n a_{(i,j)}^\top
=
(\hat\Sigma_n)_{i,i}
-2(\hat\Sigma_n)_{i,j}
+(\hat\Sigma_n)_{j,j}.
\]
Similarly, introduce
\[
s_{\circ,(i,j)}^2
:=
a_{(i,j)}\Sigma_\circ a_{(i,j)}^\top
=
(\Sigma_\circ)_{i,i}
-2(\Sigma_\circ)_{i,j}
+(\Sigma_\circ)_{j,j},
\]
where
\(\Sigma_\circ
:=
(\mathcal I_{\eta_\circ}^{-1})_{\{1,\ldots,2L\},\{1,\ldots,2L\}}.\) We further introduce 
\begin{align*}
    \Gamma_\circ &
:=
\bigl(a_{(i,j)} \Sigma_\circ a_{(i',j')}^\top\bigr)_{(i,j),(i',j')\in\mathcal H},&
\qquad
\hat\Gamma_n&
:=
\bigl(a_{(i,j)} \hat\Sigma_n a_{(i',j')}^\top\bigr)_{(i,j),(i',j')\in\mathcal H},\\
S_\circ &
:=
\operatorname{diag}\bigl(s_{\circ,(i,j)}^2\bigr)_{(i,j)\in\mathcal H},&
\qquad
\hat S_n&
:=
\operatorname{diag}\bigl(\hat s_{n,(i,j)}^2\bigr)_{(i,j)\in\mathcal H},
\end{align*}
as well as the normalized matrices
\begin{equation}\label{e:RnRo}
 R_\circ
:=
S_\circ^{-1/2}\Gamma_\circ S_\circ^{-1/2},
\qquad
\hat R_n
:=
(\hat S_n^{\dagger})^{1/2}\hat\Gamma_n (\hat S_n^{\dagger})^{1/2}.
\end{equation}
Here $\hat S_n^{\dagger}$ denotes the Moore--Penrose pseudoinverse of $\hat S_n$.

\begin{lemma}\label[lemma]{lem:joint_limit_T}
Assume the conditions of Theorem~\ref{HMM_asymptotic_normal} hold. 
Then
\[
\hat\Gamma_n\xrightarrow{\mathbb{P}}\Gamma_\circ,
\qquad
\hat S_n\xrightarrow{\mathbb{P}}S_\circ,
\qquad
\hat R_n\xrightarrow{\mathbb{P}}R_\circ.
\]
Let $\mathcal H_0$ and $\mathcal H_1$ be defined in \Cref{hypothesis_test_main_thm}. 
For \((i,j)\in\mathcal H\), define
\[
\hat U_{n,(i,j)}
:=
\begin{cases}
{\sqrt n(\hat D_{n,(i,j)}-D_{\circ,(i,j)})}/{\hat s_{n,(i,j)}}, & \text{ if } \hat s_{n,(i,j)}>0,\\
0, & \text{ if } \hat s_{n,(i,j)}=0,
\end{cases}
\]
where $D_{\circ,(i,j)}:=\theta_{\circ,i}-\theta_{\circ,j}
$ and $\hat D_{n,(i,j)}:=\hat\theta_{n,i}-\hat\theta_{n,j}$. 
Then
\[
(\hat U_{n,(i,j)})_{(i,j)\in\mathcal H}
\xrightarrow{D}
\mathcal N(0,R_\circ)\quad\text{and particularly}\quad (\hat U_{n,(i,j)})_{(i,j)\in\mathcal H_0}
\xrightarrow{D}
\mathcal N_{|\mathcal H_0|}(0,R_{\circ,\mathcal H_0}),
\]
where \(R_{\circ,\mathcal H_0}\) is the principal submatrix of \(R_\circ\) indexed by \(\mathcal H_0\). Further, it holds for \((i,j)\in\mathcal H_0\), 
\[
\hat T_{n,(i,j)} \xrightarrow{D}Z,
\qquad Z\sim\mathcal N(0,1),
\]
and for \((i,j)\in\mathcal H_1\),
\[
\frac{|\hat T_{n,(i,j)}|}{\sqrt n}
\xrightarrow{\mathbb P}
\frac{|D_{\circ,(i,j)}|}{s_{\circ,(i,j)}}\quad \text{and hence}\quad
|\hat T_{n,(i,j)}|\xrightarrow{\mathbb P}\infty.
\]
\end{lemma}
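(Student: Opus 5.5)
The plan is to combine the consistency of $\hat\Sigma_n$ with the asymptotic normality of Theorem~\ref{HMM_asymptotic_normal}, and then apply the continuous mapping theorem and Slutsky's lemma. One preliminary fact is needed: $s_{\circ,(i,j)}>0$ for every $(i,j)\in\mathcal H$. This holds because $\mathcal I_{\eta_\circ}$ is nonsingular, so $\mathcal I_{\eta_\circ}^{-1}$ is positive definite. Its principal submatrix $\Sigma_\circ$ is then positive definite as well, which gives $s_{\circ,(i,j)}^2=a_{(i,j)}\Sigma_\circ a_{(i,j)}^\top>0$ since $a_{(i,j)}\neq0$. In particular $S_\circ$ is invertible and $R_\circ$ is well defined, with unit diagonal.

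\textbf{Step 1: consistency of the matrices.} As stated before Algorithm~\ref{alg:stepdown}, $\hat{\mathcal I}_n\xrightarrow{\mathbb P}\mathcal I_{\eta_\circ}$ under the assumptions of Theorem~\ref{HMM_asymptotic_normal}, and the limit is nonsingular. Hence $\mathbb P(\hat{\mathcal I}_n^\dagger=\hat{\mathcal I}_n^{-1})\to1$. Matrix inversion is continuous at nonsingular matrices, so the continuous mapping theorem yields $\hat{\mathcal I}_n^\dagger\xrightarrow{\mathbb P}\mathcal I_{\eta_\circ}^{-1}$, and therefore $\hat\Sigma_n\xrightarrow{\mathbb P}\Sigma_\circ$. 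Now $\hat\Gamma_n$ and $\hat S_n$ are fixed linear (hence continuous) functions of $\hat\Sigma_n$, so $\hat\Gamma_n\xrightarrow{\mathbb P}\Gamma_\circ$ and $\hat S_n\xrightarrow{\mathbb P}S_\circ$. Because $S_\circ$ has strictly positive diagonal entries, with probability tending to one every $\hat s_{n,(i,j)}>0$. On that event $(\hat S_n^\dagger)^{1/2}=\hat S_n^{-1/2}$, and $S\mapsto S^{-1/2}$ is continuous on positive diagonal matrices. Another application of the continuous mapping theorem gives $\hat R_n\xrightarrow{\mathbb P}R_\circ$.

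\textbf{Step 2: joint limit of $\hat U_n$.} Let $A$ be the $|\mathcal H|\times 2L$ matrix whose rows are the vectors $a_{(i,j)}$. Then $\sqrt n(\hat D_n-D_\circ)=A\sqrt n(\hat\theta_n-\theta_\circ)$. Theorem~\ref{HMM_asymptotic_normal} gives $\sqrt n(\hat\theta_n-\theta_\circ)\xrightarrow{D}\mathcal N(0,\Sigma_\circ)$, so $\sqrt n(\hat D_n-D_\circ)\xrightarrow{D}\mathcal N(0,\Gamma_\circ)$. On the event that all $\hat s_{n,(i,j)}>0$, whose probability tends to one, we have $\hat U_n=\hat S_n^{-1/2}\sqrt n(\hat D_n-D_\circ)$. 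Since $\hat S_n^{-1/2}\xrightarrow{\mathbb P}S_\circ^{-1/2}$, Slutsky's lemma gives $\hat U_n\xrightarrow{D}\mathcal N(0,S_\circ^{-1/2}\Gamma_\circ S_\circ^{-1/2})=\mathcal N(0,R_\circ)$. Restricting to the coordinates indexed by $\mathcal H_0$, which is again a continuous projection, yields $\mathcal N(0,R_{\circ,\mathcal H_0})$.

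\textbf{Step 3: behaviour of the test statistics.} For $(i,j)\in\mathcal H_0$ we have $D_{\circ,(i,j)}=0$, so $\hat T_{n,(i,j)}=\hat U_{n,(i,j)}$. Its limit is normal with variance $(R_\circ)_{(i,j),(i,j)}=1$, i.e.\ $\mathcal N(0,1)$. For $(i,j)\in\mathcal H_1$, with probability tending to one,
\[
\frac{\hat T_{n,(i,j)}}{\sqrt n}=\frac{\hat D_{n,(i,j)}}{\hat s_{n,(i,j)}}.
\]
Here $\hat D_{n,(i,j)}\xrightarrow{\mathbb P}D_{\circ,(i,j)}$ by consistency, which follows from Step 2, and $\hat s_{n,(i,j)}\xrightarrow{\mathbb P}s_{\circ,(i,j)}>0$. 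Hence $|\hat T_{n,(i,j)}|/\sqrt n\xrightarrow{\mathbb P}|D_{\circ,(i,j)}|/s_{\circ,(i,j)}>0$, and multiplying by $\sqrt n\to\infty$ gives $|\hat T_{n,(i,j)}|\xrightarrow{\mathbb P}\infty$.

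\textbf{Main obstacle.} The only delicate point is the pseudoinverse and the zero-variance convention. Everything rests on showing, via nonsingularity of $\mathcal I_{\eta_\circ}$, that the regular event (invertible $\hat{\mathcal I}_n$ and all $\hat s_{n,(i,j)}>0$) has probability tending to one. Once that is established, the remaining steps are routine applications of Slutsky's lemma and the continuous mapping theorem.
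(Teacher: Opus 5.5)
Your proposal is correct and follows the paper's proof essentially step for step: consistency of $\hat{\mathcal I}_n$ (hence of $\hat\Sigma_n$), then the fixed linear maps giving $\hat\Gamma_n$ and $\hat S_n$, then normalization using $s_{\circ,(i,j)}>0$, Slutsky's lemma for $\hat U_n$, and projection onto $\mathcal H_0$. Two small differences are in your favour. In the $\mathcal H_1$ step you correctly use $\hat T_{n,(i,j)}/\sqrt n=\hat D_{n,(i,j)}/\hat s_{n,(i,j)}$, whereas the paper's displayed identity $|\hat T_{n,(i,j)}|/\sqrt n=|\hat U_{n,(i,j)}|/\sqrt n$ is a slip, since the right-hand side tends to $0$ when $D_{\circ,(i,j)}\neq 0$. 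You also supply the positive-definiteness argument for $s_{\circ,(i,j)}>0$, which the paper only asserts.
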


\begin{proof}
 Conditions (A1)--(A4) from \citet{Bickel98} are  satisfied under the assumptions of Theorem~\ref{HMM_asymptotic_normal}, see the proof of Theorem~\ref{HMM_asymptotic_normal}.
 Thus, Lemma 2 in \cite{Bickel98} yields
\[
-\frac{1}{n}\nabla_\eta^2 \log p_\eta(Y_1,\dots,Y_n)\big|_{\eta=\hat\eta_n}
\xrightarrow{\mathbb P}
\mathcal I_{\eta_\circ}.
\]
Since the Moore--Penrose pseudoinverse is continuous at the nonsingular matrix $\mathcal{I}_{\eta_\circ}$, the continuous mapping theorem yields
\(\hat\Sigma_n
\xrightarrow{\mathbb P}
\Sigma_\circ.\)
Let
\[
A:=\bigl(a_{(i,j)}\bigr)_{(i,j)\in\mathcal H}\in\mathbb R^{|\mathcal H|\times 2L}.
\]
Then, $\Gamma_\circ=A\Sigma_\circ A^\top$ and $
\hat\Gamma_n=A\hat\Sigma_nA^\top$. By the continuous mapping theorem,
\[
\hat\Gamma_n\xrightarrow{\mathbb P}\Gamma_\circ\quad\text{and}\quad
\hat S_n=\operatorname{diag}(\hat\Gamma_n)\xrightarrow{\mathbb P}\operatorname{diag}(\Gamma_\circ)=S_\circ.
\]
Note that \(s_{\circ,(i,j)}^2=a_{(i,j)}\Sigma_\circ a_{(i,j)}^\top>0\) for all \((i,j)\in\mathcal H\). Hence, the map 
\[
M\mapsto (\operatorname{diag}(M)^{\dagger})^{1/2}M\,(\operatorname{diag}(M)^{\dagger})^{1/2}
\]
is continuous at \(\Gamma_\circ\). As \((\operatorname{diag}(\Gamma_\circ)^{\dagger})^{1/2}
=
\operatorname{diag}(\Gamma_\circ)^{-1/2}\), we obtain
\[
\hat R_n=(\hat S_n^{\dagger})^{1/2}\hat\Gamma_n(\hat S_n^{\dagger})^{1/2}\xrightarrow{\mathbb P}R_\circ,
\]
by the continuous mapping theorem. 

Let  
\(\hat D_n:=(\hat D_{n,(i,j)})_{(i,j)\in\mathcal H}=A\hat\theta_n\) and \(D_\circ:=(D_{\circ,(i,j)})_{(i,j)\in\mathcal H}=A\theta_\circ.\) By Theorem~\ref{HMM_asymptotic_normal},
\[
\sqrt n\,(\hat D_n-D_\circ)
=
\sqrt n\,A(\hat\theta_n-\theta_\circ)
\xrightarrow{D}
\mathcal N(0,\Gamma_\circ).
\]
By Slutsky's lemma,
\[
(\hat U_{n,(i,j)})_{(i,j)\in\mathcal H} = \sqrt n\,(\hat S_n^{\dagger})^{1/2}(\hat D_n-D_\circ)
\xrightarrow{D}
\mathcal N(0,R_\circ).
\]
Selecting only the coordinates indexed by \(\mathcal H_0\), where \(D_{\circ,(i,j)}=0\), yields
\[
(\hat U_{n,(i,j)})_{(i,j)\in\mathcal H_0}
\xrightarrow{D}
\mathcal N_{|\mathcal H_0|}(0,R_{\circ,\mathcal H_0}).
\]
This further implies, for 
\((i,j)\in\mathcal H_0\),
\(
\hat U_{n,(i,j)}\xrightarrow{D}\mathcal N(0,1)
\) and 
\[
\hat T_{n,(i,j)}
=
\hat U_{n,(i,j)}
\xrightarrow{D}Z,
\qquad Z\sim\mathcal N(0,1).
\]

Consider now \((i,j)\in\mathcal H_1\). The consistency of \(\hat\theta_n\) and \(\hat\Sigma_n\) yields
\[
\hat D_{n,(i,j)}\xrightarrow{\mathbb P}D_{\circ,(i,j)},
\qquad
\hat s_{n,(i,j)}\xrightarrow{\mathbb P}s_{\circ,(i,j)}.
\]
Since \(s_{\circ,(i,j)}>0\), it follows that
\(\prob{\hat s_{n,(i,j)}>0}\to 1.\)
Therefore,
\[
\frac{|\hat T_{n,(i,j)}|}{\sqrt n}
=
\frac{|\hat U_{n,(i,j)}|}{\sqrt n}
\xrightarrow{\mathbb P}
\frac{|D_{\circ,(i,j)}|}{s_{\circ,(i,j)}}.
\]
Since \(D_{\circ,(i,j)}\neq 0\), we have
\(|\hat T_{n,(i,j)}|\xrightarrow{\mathbb P}\infty.\)
\end{proof}

For any nonempty subset $\mathcal K \subseteq \mathcal H$, let $R_{\circ,\mathcal K}$ and $\hat R_{n,\mathcal K}$ denote the principal submatrices of $R_{\circ}$ and $\hat R_n$ in \eqref{e:RnRo}, respectively, indexed by $\mathcal K$. Then we have
\begin{equation}\label{e:chca}
 \hat c_{n,\alpha}(\mathcal K) = c_{\alpha}\bigl(\hat R_{n,\mathcal K}\bigr),   
\end{equation}
where $\hat c_{n,\alpha}(\cdot)$ and $c_{\alpha}(\cdot)$ are as defined in \eqref{eq:thd:joint} and \eqref{e:calpha}, respectively.

\begin{lemma}\label[lemma]{lem:plugin_quantile}
Under the conditions of Theorem~\ref{HMM_asymptotic_normal}, it holds that
\[
    \max_{\emptyset\neq \mathcal K\subseteq\mathcal H}
    \left|
    c_{\alpha}\bigl(\hat R_{n,\mathcal K}\bigr)
    -
    c_{\alpha}\bigl(R_{\circ,\mathcal K}\bigr)
    \right|
    \xrightarrow{\mathbb P}0.
    \]
\end{lemma}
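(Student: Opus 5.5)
Since $\mathcal H$ is finite, there are finitely many nonempty subsets $\mathcal K$, so it suffices to show $c_\alpha(\hat R_{n,\mathcal K})\xrightarrow{\mathbb P}c_\alpha(R_{\circ,\mathcal K})$ for each fixed $\mathcal K$; the maximum of finitely many terms tending to zero in probability also tends to zero in probability. By Lemma~\ref{lem:joint_limit_T}, $\hat R_n\xrightarrow{\mathbb P}R_\circ$, and extracting principal submatrices is continuous, so $\hat R_{n,\mathcal K}\xrightarrow{\mathbb P}R_{\circ,\mathcal K}$. By the continuous mapping theorem, the claim then reduces to a deterministic statement: the map $R\mapsto c_\alpha(R)$ is continuous at $R_{\circ,\mathcal K}$ within the set of positive semidefinite matrices.

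Note that $R_{\circ,\mathcal K}$ is a correlation matrix. Its diagonal entries are equal to one because $s_{\circ,(i,j)}>0$, so its rank is at least one. The matrices $\hat R_{n,\mathcal K}$ are positive semidefinite, being congruence transforms of $\hat\Gamma_n$.

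To prove continuity, take a deterministic sequence $R_m\to R:=R_{\circ,\mathcal K}$ of positive semidefinite matrices. Write $F_m$ and $F$ for the distribution functions of $M_m=\max_r|Z^{(m)}_r|$ with $Z^{(m)}\sim\mathcal N(0,R_m)$, and of $M=\max_r|Z_r|$ with $Z\sim\mathcal N(0,R)$. Gaussian laws converge weakly when their covariances converge (for example via characteristic functions), so $Z^{(m)}\xrightarrow{D}Z$. Since $x\mapsto\max_r|x_r|$ is continuous, $M_m\xrightarrow{D}M$. Lemma~\ref{lem:gaussian_max_continuous} shows that $F$ is continuous, so $F_m\to F$ pointwise on $[0,\infty)$. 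The standard quantile-convergence argument then applies. Fix $\varepsilon>0$ and set $c=c_\alpha(R)$. Because $F$ is strictly increasing (Lemma~\ref{lem:gaussian_max_continuous}), $F(c-\varepsilon)<1-\alpha<F(c+\varepsilon)$; if $c<\varepsilon$, replace $c-\varepsilon$ by $0$ and note $F(0)=0<1-\alpha$. For large $m$, pointwise convergence gives $F_m(c-\varepsilon)<1-\alpha\le F_m(c+\varepsilon)$, and hence $c_\alpha(R_m)\in[c-\varepsilon,c+\varepsilon]$ directly from the definition \eqref{e:calpha} of $c_\alpha$ as an infimum. This proves continuity at $R$.

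The main obstacle is the endpoint behaviour, which requires strict monotonicity and continuity of $F$ exactly at $c$. These come from Lemma~\ref{lem:gaussian_max_continuous}, which needs $\operatorname{rank}(R)\ge1$; this holds because the diagonal of $R$ is one. Nothing is required of the limit law of $M_m$ beyond weak convergence, so degenerate (rank-deficient) $\hat R_{n,\mathcal K}$, arising for instance when some $\hat s_{n,(i,j)}=0$, cause no difficulty. Finally, the continuous mapping theorem for convergence in probability (or, equivalently, a subsequence argument with almost sure convergence) transfers the deterministic continuity to $c_\alpha(\hat R_{n,\mathcal K})\xrightarrow{\mathbb P}c_\alpha(R_{\circ,\mathcal K})$, and taking the maximum over $\mathcal K$ gives the lemma.
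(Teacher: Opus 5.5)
Your proof is correct and follows the paper's route: weak convergence of the Gaussian laws via characteristic functions, continuity of the quantile map $R\mapsto c_\alpha(R)$, then the continuous mapping theorem and finiteness of the collection of subsets $\mathcal K$. The only difference is that you prove the quantile convergence by hand and only at the limit $R_{\circ,\mathcal K}$, which needs strict monotonicity of $F$ alone. The paper instead cites a quantile-convergence lemma (Lemma~21.2 of van der Vaart) and asserts these properties for every $F_m$ as well, so your version is slightly tighter because it does not require $\operatorname{rank}(\hat R_{n,\mathcal K})\ge 1$.
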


\begin{proof}
Let $R_m\to R$, where each $R_m \in \R^{d\times d}$ is a positively semidefinite matrix, and let
\[
    Z_m\sim \mathcal N(0,R_m),
    \qquad
    Z\sim \mathcal N(0,R).
\]    
Then, for every $u\in\mathbb R^d$,
\[
    \mathbb E\exp(iu^\top Z_m)
    =
    \exp\left(-\frac{u^\top R_m u}{2}\right)
    \to
    \exp\left(-\frac{u^\top R u}{2}\right)
    =
    \mathbb E\exp(iu^\top Z),
\]
and hence $Z_m \xrightarrow{D} Z$. By the continuous mapping theorem,
\[
   M_m :=  \max_{1\le j\le d}|Z_{m,j}|
    \;\xrightarrow{D}\;
    M:=\max_{1\le j\le d}|Z_j|.
\]
As the distribution functions of $M_m$ and $M$ are continuous and strictly increasing by \Cref{lem:gaussian_max_continuous}, their quantile functions are continuous on $(0,1)$ by \citet[Lemma~21.2]{v}. Thus,
\[
c_{\alpha}(R_m)
    \to
    c_{\alpha}(R),
\]
namely, the map $R\mapsto c_{\alpha}(R)$ is continuous. 

By Lemma~\ref{lem:joint_limit_T},
\(\hat R_n\xrightarrow{\mathbb P} R_\circ \) and thus 
\( \hat R_{n,\mathcal K}
    \xrightarrow{\mathbb P}
    R_{\circ,\mathcal K}\), for every nonempty
$\mathcal K\subseteq\mathcal H$. Then the continuity of $c_{\alpha}(\cdot)$ established above yields
\[
    c_{\alpha}\bigl(\hat R_{n,\mathcal K}\bigr)
    -
    c_{\alpha}\bigl(R_{\circ,\mathcal K}\bigr)
    \xrightarrow{\mathbb P}0.
\]

Finally, since $\mathcal H$ is finite, the number of nonempty subsets
$\mathcal K\subseteq\mathcal H$ is finite. Hence,
\[
    \max_{\emptyset\neq \mathcal K\subseteq\mathcal H}
    \left|
    c_{\alpha}\bigl(\hat R_{n,\mathcal K}\bigr)
    -
    c_{\alpha}\bigl(R_{\circ,\mathcal K}\bigr)
    \right|
    \xrightarrow{\mathbb P}0.\qedhere
\]
\end{proof}

Now we are ready to prove the theoretical results in \Cref{sec:testing}.

\begin{proof}[Proof of \Cref{hypothesis_test_main_thm}]
\textbf{Part~\ref{test_proof1}}
If \(\mathcal H_1=\emptyset\), the assertion is immediate. Hence assume that
\(\mathcal H_1\neq\emptyset\). By \Cref{lem:joint_limit_T} and the finiteness of $\mathcal H_1$,
\begin{equation}\label{e:testinH1}
    B_n := \min_{(i,j)\in\mathcal H_1}|\hat T_{n,(i,j)}|
\xrightarrow{\mathbb P}\infty.
\end{equation}
Next, by \eqref{e:chca} and \Cref{lem:plugin_quantile},  
\[
\max_{\emptyset\neq \mathcal K\subseteq\mathcal H}
\left|
\hat c_{n,\alpha}(\mathcal K)
-
c_{\alpha}(R_{\circ,\mathcal K})
\right|
\xrightarrow{\mathbb P}0 .
\]
Consequently, due to the finiteness of $\mathcal{H}$,
\begin{align*}
E_n := \max_{\emptyset\neq \mathcal K\subseteq\mathcal H}\hat c_{n,\alpha}(\mathcal K)
&\le
\max_{\emptyset\neq \mathcal K\subseteq\mathcal H}
\left|
\hat c_{n,\alpha}(\mathcal K)-c_{\alpha}\bigl(R_{\circ,\mathcal K}\bigr)
\right|
+
\max_{\emptyset\neq \mathcal K\subseteq\mathcal H}
c_{\alpha}\bigl(R_{\circ,\mathcal K}\bigr) \\
&=o_{\mathbb P}(1)+O(1)=O_{\mathbb P}(1).
\end{align*}
This together with \eqref{e:testinH1} implies 
\[
\prob{B_n>E_n}\to 1 .
\]

On the event \(\{B_n>E_n\}\), every false null hypothesis is rejected in the
first step of \Cref{alg:stepdown}. Indeed, at the first step the active set is
\(\mathcal K=\mathcal H\), and for every \((i,j)\in\mathcal H_1\),
\[
|\hat T_{n,(i,j)}|
\ge B_n
>
E_n
\ge
\hat c_{n,\alpha}(\mathcal H).
\]
Therefore, each \((i,j)\in\mathcal H_1\) belongs to the first-step rejection set.
Since the final rejection set \(\widehat{\mathcal H}_{1,n}\) contains all
hypotheses rejected in the first step, it follows that
\[
\prob{\mathcal H_1\subseteq \widehat{\mathcal H}_{1,n}}
\ge
\prob{B_n>E_n}
\to 1 .
\]

\smallskip

\textbf{Part~\ref{test_proof2}}
We apply the general result in Theorem~4(b) of \citet{RoWo05} on the stepdown procedure as considered in \Cref{alg:stepdown}.

First, we verify their Assumption~A1. By Lemma~\ref{lem:joint_limit_T} and the continuous mapping theorem, 
\[
    \max_{(i,j)\in\mathcal H_0}|\hat U_{n,(i,j)}|
    \xrightarrow{D}
    \max_{(i,j)\in\mathcal H_0}|Z_{(i,j)}|,
    \qquad
    Z\sim
    \mathcal N_{|\mathcal H_0|}\bigl(0,R_{\circ,\mathcal H_0}\bigr).
\]
By Lemma~\ref{lem:gaussian_max_continuous}, the limiting distribution function
is continuous and strictly increasing at
$c_{\alpha}\bigl(R_{\circ,\mathcal H_0}\bigr)$. Thus, Assumption~A1 of
\citet{RoWo05} holds.

Next, their required monotonicity condition follows directly from the definition of the
critical values. If
$\emptyset\neq \mathcal K_1\subseteq\mathcal K_2\subseteq\mathcal H$, then,
for the corresponding Gaussian vector,
\[
    \max_{(i,j)\in\mathcal K_1}|Z_{(i,j)}|
    \le
    \max_{(i,j)\in\mathcal K_2}|Z_{(i,j)}|
    \qquad \text{a.s.}
\]
and thus
\(\hat c_{n,\alpha}(\mathcal K_1)
    \le
    \hat c_{n,\alpha}(\mathcal K_2),\)
so condition~(30) of \citet{RoWo05} is satisfied.

It remains to verify their condition~(31), which follows,  by \eqref{e:chca} and \Cref{lem:plugin_quantile}, from
\[
    \prob{
    \hat c_{n,\alpha}(\mathcal H_0)
    \ge
    c_{\alpha}\bigl(R_{\circ,\mathcal H_0}\bigr)-\varepsilon
    }
    \to 1,\qquad \varepsilon > 0.
\]

Now all assumptions of Theorem~4(b) in \citet{RoWo05} are  satisfied, and
hence
\[
    \limsup_{n\to\infty}
    \prob{
    \mathcal H_0\cap\widehat{\mathcal H}_{1,n}\neq\emptyset
    }
    \le \alpha. \qedhere
\]
\end{proof}

\begin{proof}[Proof of \Cref{test_proof3}]
Let
\[
    C_n
    :=
    \bigcap_{(i,j)\in\mathcal H_1}
    \left\{
    \sign(\hat\theta_{n,i}-\hat\theta_{n,j})
    =
    \sign(\theta_{\circ,i}-\theta_{\circ,j})
    \right\}.
\]
If $\mathcal H_1=\emptyset$, we interpret $C_n$ as the whole sample space such that $\prob{C_n} = 1$.

Since $\hat\theta_n\xrightarrow{\mathbb P}\theta_\circ$ by \Cref{HMM_asymptotic_normal}, for every
$(i,j)\in\mathcal H$,
\[
    \hat\theta_{n,i}-\hat\theta_{n,j}
    \xrightarrow{\mathbb P}
    \theta_{\circ,i}-\theta_{\circ,j}.
\]
For $(i,j)\in\mathcal H_1$, the limit satisfies
$\theta_{\circ,i}-\theta_{\circ,j}\neq 0$. Hence,
\[
    \prob{
    \sign(\hat\theta_{n,i}-\hat\theta_{n,j})
    =
    \sign(\theta_{\circ,i}-\theta_{\circ,j})
    }
    \to 1 .
\]
Because $\mathcal H_1$ is finite, the union bound gives
\begin{equation}\label{e:Cnc}
  \prob{C_n^c}
    \le
    \sum_{(i,j)\in\mathcal H_1}
    \prob{
    \sign(\hat\theta_{n,i}-\hat\theta_{n,j})
    \neq
    \sign(\theta_{\circ,i}-\theta_{\circ,j})
    }
    \to 0.  
\end{equation}

Now define
\[
    W_n:=\{\mathcal H_1\subseteq\widehat{\mathcal H}_{1,n}\},
    \qquad
    G_n:=\{\mathcal H_0\cap\widehat{\mathcal H}_{1,n}=\emptyset\}.
\]
We claim that, on $W_n\cap G_n\cap C_n$, for every $(i,j)\in\mathcal H$,
\begin{equation}\label{hypothesis2_final}
    \mathbb I\bigl((i,j)\in\widehat{\mathcal H}_{1,n}\bigr)
    \sign(\hat\theta_{n,i}-\hat\theta_{n,j})
    =
    \mathbb I(\theta_{\circ,i}\neq\theta_{\circ,j})
    \sign(\theta_{\circ,i}-\theta_{\circ,j}).
\end{equation}
Indeed, if $(i,j)\in\mathcal H_0$, then the right-hand side of
\eqref{hypothesis2_final} is zero. On $G_n$, no true null hypothesis is
rejected, so $(i,j)\notin\widehat{\mathcal H}_{1,n}$ and the left-hand side is
also zero. If $(i,j)\in\mathcal H_1$, then on $W_n$ we have
$(i,j)\in\widehat{\mathcal H}_{1,n}$, while on $C_n$ the estimated sign equals
the true sign. Thus \eqref{hypothesis2_final} holds in both cases.
Consequently, 
\(
    \hat\Lambda_{n,\alpha}
    =
    \Lambda(\theta_\circ),
\) on $W_n\cap G_n\cap C_n$.
Therefore,
\[
    \prob{\hat\Lambda_{n,\alpha}=\Lambda(\theta_\circ)}
    \ge
    \prob{W_n\cap G_n\cap C_n}
    \ge
    1-\prob{W_n^c}-\prob{G_n^c}-\prob{C_n^c}.
\]
By \Cref{hypothesis_test_main_thm},
\(\prob{W_n^c}\to 0\) and \(\limsup_{n\to\infty}\prob{G_n^c}\le \alpha\). Hence, by \eqref{e:Cnc},
\[
    \liminf_{n\to\infty}
    \prob{\hat\Lambda_{n,\alpha}=\Lambda(\theta_\circ)}
    \ge 1-\alpha.
\]If now $\mathcal{H}_0=\emptyset$, it follows that $G_n^c=\emptyset$ and $\prob{G_n}=1$. As a direct consequence,
 \[
 1\ge \prob{\hat\Lambda_{n,\alpha}=\Lambda(\theta_\circ)}
    \ge
    \prob{W_n\cap C_n}
    \ge
    1-\prob{W_n^c}-\prob{C_n^c}\to 1.\qedhere
    \]
\end{proof}

\clearpage

%
%
%
%
%
%
%
%

\section{Implementation details and additional results}
\label{s:imp}

\subsection{Maximum likelihood estimation by a modified Baum--Welch algorithm}
\label{s:imp:baumwelch}

Let $Y_1,\dots,Y_n$ denote observations generated by an SD-HMM (Model~\ref{model:SDHMM}), and let $(S_t)_{t\ge 0}$ be the hidden sum process of an SDMC with transition parameter
\[
    \theta=(\lambda_0,\ldots,\lambda_{L-1},\mu_1,\ldots,\mu_L)    \in (0,\infty)^{2L}.
\]
We write $\eta=(\theta,\phi)\in\Theta\times\Phi$, where $\phi$ denotes the emission parameter.  The maximum likelihood estimator $\hat\eta_n$ in \eqref{e:mle} is computed using a modified version of the Baum--Welch algorithm by \citet{Baum1970}. Given a current parameter value  $\eta^{(r)}=(\theta^{(r)},\phi^{(r)})$, the algorithm alternates between an \emph{expectation} step, in which the posterior state probabilities are computed by the forward--backward algorithm, and a \emph{maximization} step, in which the expected complete-data log-likelihood is maximized with respect to the transition and emission parameters. See Algorithm~\ref{alg:modified_baum_welch}.

For $\eta = (\theta, \phi)$, we have
\begin{align*}
\mathcal{F}(\eta\mid \eta^{(r)})&:=\mathbb{E}_{\eta^{(r)}}[\log p_\eta (Y_1,\dots,Y_n,S_{t_1},\dots,S_{t_n})\mid Y_1,\dots,Y_n ]
        \\
        &=\sum_{k=2}^{n}\sum_{s=0}^L\sum_{s' = 0}^L\log (e^{\delta R(\theta)})_{s,s'}\mathbb{P}_{\eta^{(r)}}(S_{t_k}=s',S_{t_{k-1}}=s\mid Y_1,\dots,Y_n)\\
        &\quad+\sum_{k=1}^{n}\sum_{s=0}^L \log g_{\phi}(Y_k\mid s)\mathbb{P}_{\eta^{(r)}}(S_{t_k}=s\mid Y_1,\dots,Y_n)\\
        &\quad+\sum_{s=0}^L \log \pi_{\theta}(s) \mathbb{P}_{\eta^{(r)}}(S_{t_1}=s\mid Y_1,\dots,Y_n).
\end{align*}
Thus, $\max_{\eta}  \mathcal{F}(\eta\mid \eta^{(r)})$ is equivalent to solving \eqref{e:opt:theta} and \eqref{e:opt:phi}. 

\begin{algorithm}[t]
\caption{Modified Baum--Welch algorithm for SD-HMMs}
\label{alg:modified_baum_welch}
\begin{algorithmic}[1]
\Require Observations $Y_1,\ldots,Y_n$; number of channels $L$; sampling
interval $\delta$
\State Set initial values $\theta^{(1)}\in(0,\infty)^{2L}$ and $\phi^{(1)}\in\Phi$ as in Appendix~\ref{s:imp:initialization}
\For{$r = 1,\ldots, r_{\max} (=100)$}
    \State {\bf Expectation:} Compute the posterior state probabilities via the forward--backward algorithm
    \begin{align*}
     \gamma_k^{(r)}(s) & \gets \mathbb{P}_{\eta^{(r)}}(S_{t_k}=s\mid Y_1,\dots,Y_n), && s\in\{0,1,\ldots,L\}\\
     \xi_k^{(r)}(s,s') & \gets \mathbb{P}_{\eta^{(r)}}(S_{t_k}=s',S_{t_{k-1}}=s\mid Y_1,\dots,Y_n), &&s,s'\in\{0,1,\ldots,L\}
    \end{align*}
    \State {\bf Maximization:} Update the parameter $(\theta^{(r)},\phi^{(r)})$ by solving 
    \begin{align}
      \theta^{(r+1)}\gets & 
        \mathop{\arg\max}_{\theta\in(0,\infty)^{2L}}
        \sum_{k=2}^{n}\sum_{s=0}^{L}\biggl[\gamma_1^{(r)}(s)\log \pi_\theta(s) + \sum_{s'=0}^L
        \xi_k^{(r)}(s,s')
        \log \bigl(\exp\{\delta R(\theta)\}\bigr)_{s,s'}
        \biggr] \label{e:opt:theta} \\
    \phi^{(r+1)}\gets & \mathop{\arg\max}_{\phi \in \Phi}   \sum_{k=1}^{n}\sum_{s=0}^{L}
        \gamma_k^{(r)}(s)\log g_\phi(Y_k\mid s)\label{e:opt:phi}
    \end{align}
    \If{$\max\{\|\theta^{(r+1)}-\theta^{(r)}\|_\infty,\|\phi^{(r+1)}-\phi^{(r)}\|_\infty\}<\varepsilon (= 10^{-6})$}
        \State \textbf{break}
    \EndIf
\EndFor
\State \Return (approximate) maximum likelihood estimate $\hat\eta_n=(\theta^{(r+1)},\phi^{(r+1)})$
\end{algorithmic}
\end{algorithm}

\subsection{Initialization}
\label{s:imp:initialization}

Since the log-likelihood
\(
    \eta\mapsto \log p_\eta(Y_1,\dots,Y_n)
\)
is generally non-concave, the modified Baum--Welch algorithm (\Cref{alg:modified_baum_welch}) is not guaranteed to converge to a global maximum. It may instead converge to a local maximum. Consequently, the choice of the initial value $\eta^{(1)}=(\theta^{(1)},\phi^{(1)})$ is important.

In our numerical experiments and real-data applications, for sufficiently small sampling intervals $\delta>0$, the initialization of the SDMC rate parameter $\theta^{(0)}$ has negligible influence on the resulting estimate. In the implementation, a simple default choice is
\[
    \theta^{(1)}=(10,\ldots,10)\in(0,\infty)^{2L}.
\]
By contrast, the initial value of the emission parameter has a more pronounced effect on the procedure.

For the Gaussian emission model (\Cref{HMM_example2}), a natural initialization is based on a histogram of the observations. Let $Y_{\min}$ and $Y_{\max}$ denote suitable lower and upper empirical signal levels. Then the initial emission parameters are chosen as
\[
    b^{(1)}=Y_{\min},
    \qquad
    a^{(0)}=\frac{Y_{\max}-Y_{\min}}{L},
\]
so that the initial means are placed linearly across the $L+1$ levels,
\[
    b^{(1)}+a^{(1)}s,
    \qquad s\in \{0,\ldots,L\}.
\]
If the observations are too noisy for the histogram to reveal the levels reliably, one may first idealize the data using an idealization algorithm (e.g.\ \citealp{liu2024multiscale}), and then apply the histogram-based initialization to the idealized observations.

The initial standard deviation $\sigma_s^{(1)}$ is chosen as the sample standard deviation of 
$$
\{Y_1,\ldots, Y_n\} \cap  \bigl[b^{(1)}+a^{(1)}(s-1/2), \,  b^{(1)}+a^{(1)}(s+1/2)\bigr)
$$
for $s \in \{0,\ldots, L\}$. 

\subsection{Numerical implementation}
\label{s:imp:numerical_implementation}

The numerical implementation used for the simulations and data applications is
provided in the accompanying {R} package \texttt{SDMC} (available at \url{https://gitlab.gwdg.de/requadt/sdmc}). The package contains routines for constructing SDMC transition matrices, simulating the sum process, fitting Gaussian-emission SD-HMMs, evaluating the observed HMM likelihood, computing Viterbi paths, and performing the SCoT for testing cooperativity.

The maximization step of \Cref{alg:modified_baum_welch} is computed using R package \texttt{stats}. The optimization over parameter $\theta$ in \eqref{e:opt:theta} is computed by the Nelder--Mead algorithm \citep{NeMe65} via function \texttt{optim}, and the optimization over parameter $\phi$ in \eqref{e:opt:phi}, involving a constraint that levels are equally spaced, is solved by function \texttt{constrOptim}. 

In \Cref{alg:modified_baum_welch}, the initial distribution is estimated freely and updated by the usual Baum--Welch update. As an alternative, we also allow the user to impose stationary initial distribution. More precisely, in the $r$-th iteration, the   initial distribution is set as the stationary distribution with respect to $\theta^{(r)}$, according to \Cref{bd_proc}, i.e.
\[
    \pi_\theta^{(r)}(j)
    =
    \frac{
        \binom{L}{j}
        \prod_{\ell=0}^{j-1}
        \lambda_\ell/^{(r)}\mu_{\ell+1}^{(r)}
    }{
        1+
        \sum_{k=1}^{L}
        \binom{L}{k}
        \prod_{\ell=0}^{k-1}
        \lambda_\ell^{(r)}/\mu_{\ell+1}^{(r)}
    },\qquad j\in \{0,1,\ldots, L\},
\]
with the empty product interpreted as one.

The critical values in \eqref{eq:thd:joint} for SCoT are computed using function \texttt{qmvnorm} from R package \texttt{mvtnorm}. This function is based on a stochastic root-finding algorithm using local linear regression \citep{Bornkamp2018}. As an alternative, we also tested a bisection search based on the distribution function of the maximum of a multivariate Gaussian vector. In our simulations, however, this alternative approach showed performance comparable to that of \texttt{qmvnorm}.

\subsection{Additional numerical results}
\label{s:imp:additional_results}

Additional data analysis results for \Cref{section8.4} are reported in
\Cref{fig:bic_both,fig:BIC_gram,ryr_dataset2}.

\begin{figure}[ht]
    \centering
    \begin{subfigure}[b]{0.49\linewidth}
        \centering
\includegraphics[width=\linewidth, trim={1cm 0.5cm 1cm 0.5cm}, clip]{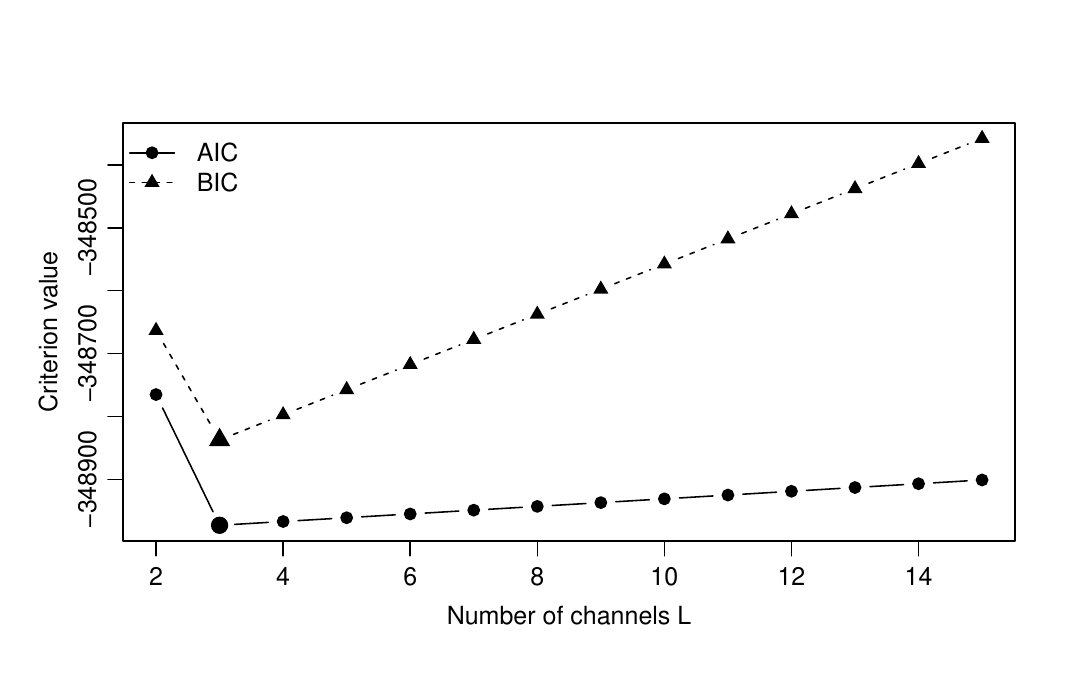}       
\caption{RyR2 Dataset 1}
        \label{fig:bic_y1}
    \end{subfigure}
    \hfill
    \begin{subfigure}[b]{0.49\linewidth}
        \centering
\includegraphics[width=\linewidth, trim={1cm 0.5cm 1cm 0.5cm}, clip]{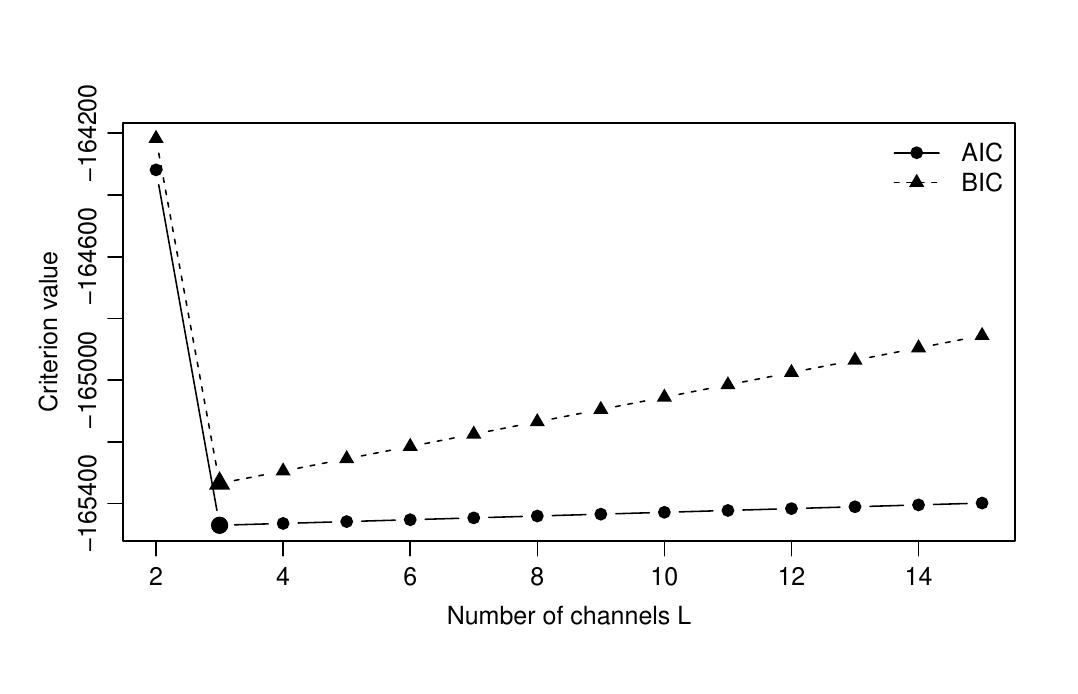}
\caption{RyR2 Dataset 2}
        \label{fig:bic_y2}
    \end{subfigure}
    \caption{BIC and AIC criteria for the two datasets of RyR2 channels. Both criteria are minimized at $L=3$ for the two datasets.}
    \label{fig:bic_both}
\end{figure}

\begin{figure}[hb]
    \centering
    \includegraphics[width=0.7\linewidth]{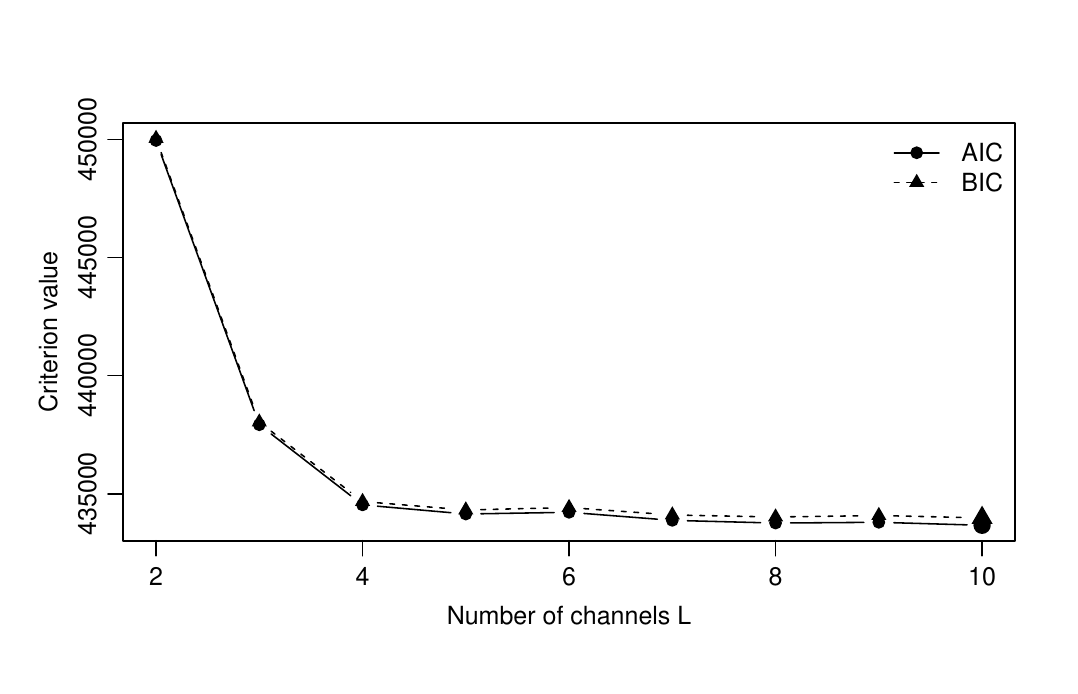}
    \caption{BIC and AIC criteria for the gramicidin D channels. Both criterions are minimal at $L=10$.}
    \label{fig:BIC_gram}
\end{figure}

\begin{figure}[ht]
    \centering
    \includegraphics[width=0.9\linewidth]{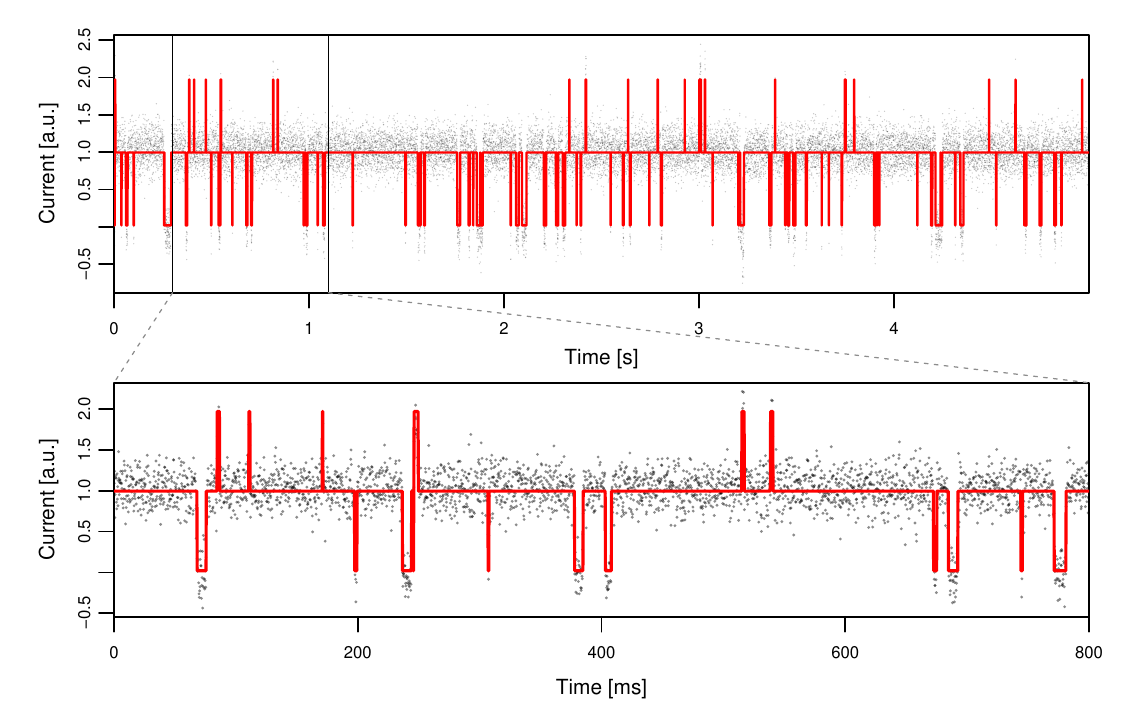}
    \caption{Dataset~2 of RyR2 channels:  current trace, consisting of $20000$ observations, is shown as black dots, and its Viterbi path is plot as a red line.}
    \label{ryr_dataset2}
\end{figure}

\end{document}